\definecolor{light-gray}{gray}{0.6}
\tikzstyle{propagator}=[decorate,decoration={snake,amplitude=0.8mm}]
\tikzstyle{smallpropagator}=[decorate,decoration={snake,segment length=3mm,amplitude=0.5mm}]
\tikzstyle{linehighlight}=[blue,line width = 3pt,line cap = round, draw opacity = 0.5]
\tikzstyle{firstdash}=[dashed,line cap=round, dash pattern=on 2pt off 1pt]
\tikzstyle{seconddash}=[dashed,line cap=round, dash pattern=on 0.5pt off 1pt]
\tikzstyle{smalldash}=[dashed,line cap=round, dash pattern=on 1.5pt off 2pt]
\pgfmathsetmacro{\arrowangle}{90}
\tikzstyle{propassignment} = [->,shorten >=2pt,thick]
\newcommand{\drawWLD}[2]{

\pgfmathsetmacro{\n}{#1}
\pgfmathsetmacro{\radius}{#2}
\pgfmathsetmacro{\angle}{360/\n}
\draw (0,0) circle (\radius);
    \foreach \i in {1,2,...,\n} {
      \draw (\angle*\i:\radius) node {$\bullet$};
    }

}
\newcommand{\drawprop}[4]{
\pgfmathsetmacro{\r}{#1}
\pgfmathsetmacro{\bumpr}{#2}
\pgfmathsetmacro{\s}{#3}
\pgfmathsetmacro{\bumps}{#4}
\pgfmathsetmacro{\perturbe}{\angle/\n}
\begin{scope}
\draw[smallpropagator] (\angle*\r + \angle/2 + \bumpr*\perturbe:\radius) -- (\angle*\s + \angle/2 + \bumps*\perturbe:\radius);
\end{scope}
}
\newcommand{\drawpropbend}[5]{
\pgfmathsetmacro{\r}{#1}
\pgfmathsetmacro{\bumpr}{#2}
\pgfmathsetmacro{\s}{#3}
\pgfmathsetmacro{\bumps}{#4}
\pgfmathsetmacro{\perturbe}{\angle/\n}
\begin{scope}
\draw[smallpropagator] (\angle*\r + \angle/2 + \bumpr*\perturbe:\radius) to[bend left = #5](\angle*\s + \angle/2 + \bumps*\perturbe:\radius);
\end{scope}
}
\newcommand{\drawlabeledprop}[5]{
\pgfmathsetmacro{\r}{#1}
\pgfmathsetmacro{\bumpr}{#2}
\pgfmathsetmacro{\s}{#3}
\pgfmathsetmacro{\bumps}{#4}
\pgfmathsetmacro{\perturbe}{\angle/\n}

\begin{scope}
\draw[smallpropagator] (\angle*\r + \angle/2 + \bumpr*\perturbe:\radius) -- (\angle*\s + \angle/2 + \bumps*\perturbe:\radius) node[midway, below] {#5};
\end{scope}
}
\newcommand{\modifiedprop}[5]{
\pgfmathsetmacro{\r}{#1}
\pgfmathsetmacro{\bumpr}{#2}
\pgfmathsetmacro{\s}{#3}
\pgfmathsetmacro{\bumps}{#4}
\pgfmathsetmacro{\perturbe}{\angle/\n}

\begin{scope}
\clip (\angle*\r:\radius) -- (\angle + \angle*\r:\radius) -- (\angle*\s:\radius) -- (\angle + \angle*\s:\radius) -- (\angle*\r:\radius);
\draw[#5] (\angle*\r + \angle/2 + \bumpr*\perturbe:\radius) -- (\angle*\s + \angle/2 + \bumps*\perturbe:\radius);
\end{scope}
}
\newcommand{\boundaryprop}[4]{
\pgfmathsetmacro{\r}{#1}
\pgfmathsetmacro{\bumpr}{#2}
\pgfmathsetmacro{\s}{#3}
\pgfmathsetmacro{\perturbe}{\angle/\n}

\begin{scope}
\clip (\angle*\r:\radius) -- (\angle + \angle*\r:\radius) -- (\angle*\s - \angle:\radius) -- (\angle*\s:\radius) -- (\angle + \angle*\s:\radius) -- (\angle*\r:\radius);
\draw[#4] (\angle*\r + \angle/2 + \bumpr*\perturbe:\radius) -- (\angle*\s:\radius);
\end{scope}
	
}
\newcommand{\drawnumbers}{
  \foreach \i in {1,2,...,\n} {
  \pgfmathsetmacro{\x}{\angle*\i}
  \draw (\x:\radius*1.25) node {\footnotesize \i};
}
}
\def\centerarc[#1](#2)(#3:#4:#5)
\def\clipcenterarc(#1)(#2:#3:#4)
\newcommand{\newbetternode}[3][{label distance=-1mm]left}]{
  \node[label={#1:{\scriptsize $#3$}}] at (\zero + #2*\step:\radius) {\scriptsize $\bullet$};
}
\newcommand{\R}{\mathbb{R}}
\newcommand{\RP}{\mathbb{R}\mathbb{P}}
\newcommand{\Gr}{\mathbb{G}_{\R, \geq 0}}
\newcommand{\Grall}{\mathbb{G}_{\R}}
\newcommand{\D}{\partial}
\newcommand{\rk}{\textrm{rk }}
\def\ba #1\ea{\begin{align} #1 \end{align}}
\def\bas #1\eas{\begin{align*} #1 \end{align*}}
\def\bml #1\eml{\begin{multline} #1 \end{multline}}
\def\bmls #1\emls{\begin{multline*} #1 \end{multline*}}
\newcommand{\cP}{\mathcal{P}}
\newcommand{\cV}{\mathcal{V}}
\newcommand{\cY}{\mathcal{Y}}
\newcommand{\VP}{\cV(\cP)}
\newcommand{\YP}{\cY(\cP)}
\newcommand{\Sigmapos}{\Sigma_{\geq 0}}
\newcommand{\Lpos}{L_{\geq 0}}
\newcommand{\cA}{\mathcal{A}}
\newcommand{\cI}{\mathcal{I}}
\newcommand{\cB}{\mathcal{B}}
\newcommand{\Prop}{\textrm{Prop}}
\newcommand{\Rows}{\textrm{Row}}
\newcommand{\cW}{\mathcal{W}}
\newcommand{\cZ}{\mathcal{Z}}
\newtheorem{thm}{Theorem}[section]
\newtheorem{lem}[thm]{Lemma}
\newtheorem{cor}[thm]{Corollary}
\newtheorem{prop}[thm]{Proposition}
\theoremstyle{remark}
\newtheorem{eg}[thm]{Example}
\theoremstyle{definition}
\newtheorem{dfn}[thm]{Definition}
\newtheorem{rmk}[thm]{Remark}
\title{Cancellation of spurious poles in N=4 SYM: physical and geometric}
\author{Susama Agarwala, Cameron Marcott}
\begin{document}
\maketitle
\begin{abstract}
This paper shows that not only do the codimension one spurious poles of $N^k MHV$ tree level diagrams in N=4 SYM theory cancel in the tree level amplitude as expected, but their vanishing loci have a geometric interpretation that is tightly connected to their representation in the positive Grassmannians. In general, given a positroid variety, $\Sigma$, and a minimal matrix representation of it in terms of independent variable valued matrices, $M_\cV$, one can define a polynomial, $R(\cV)$ that is uniquely defined by the Grassmann necklace, $\cI$, of the positroid cell. The vanishing locus of $R(\cV)$ lies on the boundary of the positive variety $\overline{\Sigma} \setminus \Sigma$, but not all boundaries intersect the vanishing loci of a factor of $R(\cV)$. We use this to show that the codimension one spurious poles of  N=4 SYM, represented in twistor space, cancel in the tree level amplitude.
\end{abstract}
\section{Introduction}

The holomorphic Wilson loop representation of $N=4$ SYM theory is calculated on families of Feynman diagrams called maximally helicity violating (MHV) diagrams, next to maximal helicity violating (NMHV) diagrams, and so forth ($N^k$MHV diagrams). When represented in twistor space \cite{Adamo:2011pr,Boels:2007qn, Bullimore:2010pj}, the calculations of the associated integrals simplify dramatically. Furthermore, in a dual representation of $N^k$MHV diagrams, called Wilson loop diagrams in this paper, the diagrams correspond to subspaces of the positive Grassmanian, called positroid cells. The connection between the tree level physical interactions and the geometry of the positive Grassmannians is well studied, both in the holomorphic Wilson loop context and in the context of BCFW diagrams (or plabic graphs) where the associated geometric object is called the Amplituhedron \cite[Chapter 2]{GrassmannAmplitudebook}. 

In \cite{hodges:2013eliminating}, the author suggested that the momentum twistor representation of the integrals adopted in this paper should lead to an algebraic proof of the cancellation of spurious poles in  $N=4$ SYM theory. In this paper, not only do we show that the spurious poles cancel in the tree level amplitude,  but also that these poles are tightly connected to the positive geometry of the theory. To be specific, the spurious poles of the theory manifest as the factors of a polynomial in the denominator of the integrand associated to a Wilson loop diagram, $(\cP, [n])$ \cite{Adamo:2012xe, HeslopStewart, hodges:2013eliminating}. We indicate these polynomials as $R(\VP)$. Each Wilson loop diagram represents an $N^kMHV$ diagram, and also corresponds to a convex subspace of $\Gr(k,n)$ called a positroid cell, denoted $\Sigma(\VP)$ \cite{Wilsonloop, generalcombinatoricsI}. Any positroid cell, $\Sigma$, can also be defined by a set of Pl\"{u}cker coordinates called the Grassmann necklace, $\cI$\cite[Section 16]{Postnikov}. Recent work has shown that the irreducible factors of the product of elements of the Grassmann necklace are the frozen variables of the cluster algebra defined by $\Sigma$ \cite{galashinlam19, SS-BW}. Given a representation of $\Sigma$ in terms of variable valued matrices, $M_\cV$ one may identify a polynomial $R(\cV)$ as the radical of the polynomial formed by taking the product of the minors in the Grassmann necklace on the matrix $M_\cV$. In other words, the polynomial $R(\cV)$ corresponds to the product of the frozen variables of the cluster algebra associated to $\Sigma$ in the choice of coordinates given by $M_\cV$. Recent work has also shown that the polynomial $R(\VP)$ is a special case of this phenomenon \cite{generalcombinatoricsI}. For $\cI(\VP)$, the Grassmann necklace of $\Sigma(\VP)$, we have $R(\VP) = \textrm{rad}(\prod_{I_i \in \cI(\VP)} \Delta_{I_i}(\bf{x}))$. In other words, the spurious poles of N=4 SYM theory, as expressed in the twistor notation of Wilson loop diagrams, is the product of the frozen variable of the cluster algebra of the associated positroid cell in the set of coordinates defined by the diagram.

It is also worth noting that this cluster algebra arises from the plabic graph representation of the positoid cell. The BCFW diagrams that define the Amplituhedron have a natural representation in terms of plabic graphs \cite[Section 5]{AmplituhedronDecomposition}. While there is not yet a clear translation from the geometry of the Wilson loop diagrams to the geometry of the Amplitudehedron, this gives another tantalizing clue pointing at the similarities between the Amplituhedron and the geometry of the holomorphic Wilson loop representation. 

In this paper, we study the geometry of the spurious poles of N=4 SYM theory in terms of positroid varieties. In Theorem \ref{res:vanishonbdny}, we show that the spurious poles lie on the boundaries of the associated positroid cells. Moreover, Theorem \ref{res:polesonboundaries} shows that the poles with codimension one vanishing loci are dense in the codimension one boundary varieties. Furthermore, we show that while the vanishing loci of the square free factors of the product of the Grassmann necklace minors lies on the boundary of the corresponding positroid variety, not every codimension one boundary intersects the vanishing locus of such a factor. We give a partial characterization of which codimension one boundaries do not intersect the codimension one vanishing loci of these factors. We use these facts to show that the codimension one spurious poles of tree level Wilson loop diagrams cancel exactly in the tree level amplitude. 

Section \ref{sec:background} gives the necessary background on Wilson loop diagrams (\ref{sec:diagramdefs}) and variable valued matrices along with their associated matroids (\ref{sec:matrices and matroids}). Section \ref{sec:WLDgeom} describes the geometry of the spurious poles, with Section \ref{sec:WLDmatroid} relating the diagrams to their associated positroid varieties. Section \ref{sec:integrals} both introduces the integrals defined by the Wilson loop diagrams and the polynomial $R(\VP)$ that determines the spurious poles of N=4 SYM theory and is determined by the Grassmann necklace associated to the Wilson loop diagram. Section \ref{sec:clusteralgebras} relates the spurious poles of N=4 SYM theory to the frozen variables of the cluster algebra defined by the same positroid variety. Finally, Section \ref{sec:poles} discusses the spurious poles of the three level interactions. We show that while the variety defined by these polynomials lives on the boundary of the corresponding positroid cells, not all codimension one boundaries intersect these varieties (\ref{sec:boundarysanspoles}). Finally, in section \ref{sec:cancelation}, we also explicitly show that the codimension one spurious poles cancel exactly (Theorem \ref{res:deg1polescancel}). There are several technical lemmas that are needed in order to perform the necessary calculations to demonstrate the cancellations in  Theorem \ref{res:deg1polescancel}. The details of these calculations can be found in Appendix \ref{sec:appendix}.

\section{Diagramatical and matroidal background \label{sec:background}}

In this section, we present the necessary diagramatic background to understand Wilson loop diagrams (Section \ref{sec:diagramdefs}). Section \ref{sec:matrices and matroids} provides standard background on matroids and variable valued matrices.

\subsection{Diagramatics \label{sec:diagramdefs}}
A Wilson loop diagram $W = (\cP, [n])$ consists of a cyclically ordered set $[n] = \{1, \cdots, n\}$ and a set of propagators $\cP = \{(i,j) | i, j \in [n]\}$ as an unordered pair of integers. We depict these diagrams by drawing the set $[n]$ as vertices on a circle. The $i^{th}$ edge of the marked circle is the edge between the vertices $i$ and $i+1$. Then the propagator $p =(i,j)$ is depicted as a wavy line on the interior of the circle connecting the $i^{th}$ edge to the $j^{th}$ edge. In this manner, we say that the propagator $p$ is supported by the vertices $V_p = \{i, i+1, j, j+1\}$, as these are the vertices bounding the edges that $p$ ends on. 

It is useful to develop some nomenclature for the positioning of propagators on a given edge of a Wilson loop diagram.
  
\begin{dfn}\label{dfn:adjacentprops} Let $e$ be an edge of a Wilson Loop diagram $W = (\cP, [n])$, and  let $\{q_1, \ldots, q_s \}$ be the propagators incident on the edge $e$, ordered according to their proximity to vertex $e$. We say that $q_i$ and $q_{i+1}$ are adjacent on the edge $e$. \end{dfn}

More generally, for a subset of propagators $P \subset \cP$, we write $V_P = \cup_{p \in P} V_p$ to indicate the set of vertices supporting the propagator set $P$. For any $V \subset [n]$, the set of propagators supported by $V$ is written $\Prop(V) = \{ p \in \cP | V_p \cap V \neq \emptyset\}$.  We also give a name to the set of vertices \emph{not} supporting a set of propagators: for $P \subset \cP$, define $F(P) = V_{P^c}^c$. Next we present two examples of diagrams that we refer to throughout the paper.

\begin{eg} \label{eg:admissible}
Draw $W = (\{(3,5), (1,7)\}, [8])$ as \bas W\ =\ \begin{tikzpicture}[rotate=67.5,baseline=(current bounding box.east)]
	\begin{scope}
	\drawWLD{8}{1.5}
	\drawnumbers
	\drawprop{1}{0}{7}{0}
	\drawprop{3}{0}{5}{-1}
        \drawprop{2}{0}{5}{1}
		\end{scope}
	\end{tikzpicture}\eas and $W' = (\{(1,4), (3,5), (6,7), (8,1), (8,1)\}, [8])$ as \bas W'\ =\ \begin{tikzpicture}[rotate=67.5,baseline=(current bounding box.east)]
	\begin{scope}
	\drawWLD{10}{1.5}
	\drawnumbers
	\drawprop{1}{0}{4}{0}
	\drawprop{3}{0}{5}{0}
        \drawpropbend{6}{0}{7}{0}{35}
	\drawprop{8}{1}{10}{-1}
 	\drawprop{8}{-2}{10}{2}
		\end{scope}
	\end{tikzpicture}\;.\eas 
Note that the pairs indicating the propagators are unordered. That is if the propagator $p$ ends on the third and fifth edges, we may write $p = (3,5)$ or $p = (5,3)$. In this paper, we use the convention that one may write $p = (i, j) = (j,i)$. That is, the two indices do not correspond to an ordered pair, or impose an orientation on $p$. Rather, they are simply the edges on which the propagators end. Furthermore, we use the convention that if two propagators have endpoints on the same edge, they are drawn so that they do not cross. 

In the diagram $W$, the propagators $p = (3,5)$ and $r = (2,5)$ both end on the the $5^{th}$  edge of the diagram. By Definition \ref{dfn:adjacentprops}, these two edges are adjacent. When we need to refer to their positioning on the $5^{th}$ edge, refer to them as $q_1 = p$ and $q_2 = r$. Consider the set of two propagators: $Q = \{p, r\}$. Then $V_Q = \{ 2, 3, 4, 5, 6\}$. Let $s = (1,7)$ be the final propagator in $W$. Then $F(s) = \{1, 7, 8\} = V_Q^c$ is the set of vertices that do not support $Q$. In particular, the vertex $2$, which is in $V_w$ and $V_r$ isn't an element of $F(s)$.
\end{eg}

The physics literature, is only interested in a certain subclass of these graphs, called admissible Wilson loop diagrams. An \emph{admissible} Wilson loop diagram, $W = (\cP, [n])$, satisfies the following conditions:
\begin{enumerate}
\item \textbf{Non-crossing} No pair of propagators $p, q \in \cP$ cross in the interior of $W$. That is, for two propagators $p = (i,j), \; q = (k, l) \in \cP$ written such that $i <j$ and $k <l$ in natural linear order on $[n]$, if $i < k$ then $l <j$. 
\item \textbf{Local Density} Any subset of propagators $ P \subset \cP$ is supported by at least 3 more vertices than the number of propagators in $P$: $|V_P| \geq |P| + 3$. 
\item \textbf{Global Density} There are at least 4 more vertices in the diagram than there are propagators. That is $n \geq |\cP| + 4$.
\end{enumerate} 

Note that the diagram $W$ above is admissible, while $W'$ is not. Furthermore, note that local density implies that one cannot have a propagators $p = (i, i+1)$ or pairs of propagators with the same endpoints: $p = q = (i, j)$.

For the remainder of this paper, we restrict our attention only to admissible Wilson loop diagrams. We denote by $\cW_{k,n}$ to be the set of all Wilson loop diagrams with $k$ propagators and $n$ vertices: $\cW_{k,n} = \{ (\cP,[n])| \textrm{ admissible }; \; |\cP| = k\}$.

\subsection{Variable valued matrices and matroids \label{sec:matrices and matroids}}

Wilson loop diagrams have a natural matrix representation with real independent variable entries. These matrices are a bridge between the combinatorics of the diagrams to matroids and to geometric subspaces of the positive Grassmannians. Before introducing this representation, we  define some notation around matrices with algebraically independent invertible variables, and recall some facts about matroids

\begin{dfn} \label{dfn:variablevaluedmatrix}
Let $\cV = \{V_1, V_2, \dots, V_k\}$ be a collection of subsets of $\{1,2,\dots,n\}$. Let $\mathbf{x}=\{x_{i,j}\}$ be a set of algebraically independent invertible variables. Define $M_{\mathcal{V}}\in M_{k,n}$ to be the $k \times n$ matrix having $x_{i,j}$ as its $i,j$ entry if $j \in V_i$ and $0$ otherwise.
\end{dfn}

One can realize a matrix $M_\cV$ at a point $\mathbf{x}=\{x_{i,j}\} \in \R^{|\mathbf{x}|}$ to get a real valued matrix. We can vary the points in $\R^{|\mathbf{x}|}$ to parameterize a family of real valued $k \times n$ matrices. Furthermore, by ignoring any such matrices of less than full rank, we may parameterize a subset of $\Grall(k,n)$.

Let $M^{\rk k}_{k,n}$ be the set of $k \times n$ matrices of full rank. Then the standard quotient map takes $M^{\rk k}_{k,n}$ to $\Grall(k,n)$ \ba \phi: M^{\rk k}_{k,n} \rightarrow \Grall(k,n)  \label{eq:maps}\ea

\begin{dfn} \label{dfn:loci}
Let $L(\cV)$ be the set of point of $\Grall(k,n)$ that can be realized by setting the entries of $M_\cV$ to real values. In other words, $L(\cV) = \phi (M_{\cV} \cap M^{\rk k}_{k,n})$.  
\end{dfn}

Finally, sometimes we need to refer to the subsets in $\cV$ that intersect a subset  $S \in \{1,2,\dots,n\}$. These sets correspond to the rows of $M_{\cV}$ have non-zero entries in the columns indicated by $S$. 

\begin{dfn}\label{dfn:columns}
For $S \in \{1,2,\dots,n\}$, write $\cV_S =\{V_i \in \cV |  V_i \cap S \neq \emptyset\}$. 
\end{dfn}

Matrices of the form $M_\cV$ define matroids\footnote{Specifically, the class of matroids defined in this way are transversal matroids \cite{Transversal}, as discussed in \cite{basisshapeloci}, but the details are beyond the scope of this discussion.}. The remainder of this section gives a brief overview of matroids, which the expert reader may skip. 

A matroid can be defined as a set, and a set of independence conditions on said set. For instance, one can define $M = (E, \mathcal{B})$, where $\mathcal{B}$, called a basis set, is a non-empty set of subsets of $E$, each of the same size, satisfying the basis exchange condition: \bas \textrm{for  all } A \textrm{ and } B \in \cB, \textrm{ if }  a \in A\setminus B \textrm{ then } \exists b \in B \setminus A \textrm{ and } (A \setminus a) \cup b \in  \cB \;.\eas Each element of $\mathcal{B}$ is a basis of $M$ and denotes a maximal independent set. The rank of the matroid, denoted $\rk(M)$ is the unique size of all the basis sets. We may also refer to the rank of a subset of $E$, $S \subset E$. We write the rank of $S$ as $\rk(S) = \max \{|B \cap S| : B \in \mathcal{B} \}$.

Equivalently, one can define $M = (E, \mathcal{F})$, where $\mathcal{F} = \{ F \subset E| \forall x \in E \setminus F, \rk(F \cup x) > \rk(F)\}$ is the set of flats of $M$. For any subset $S \subset E$, we can define the closure of the set $\textrm{cl}(S)  = \{x \in E | \rk(S) = \rk(S \cup x)\}$ as the smallest flat containing $S$. Note that if $S$ and $T$ are two flats, then $S \cap T$ is also a flat. 

Also equivalently, we may write $M = (E, \mathcal{C})$ where $\mathcal{C} = \{C \subset E | \forall S \subsetneq C, \; S \subset B \textrm{ for some } B \in \mathcal{B}\}$ is the set of circuits of $M$. The circuits are the minimal dependent sets, i.e. each proper subset of $C$ is independent. If $C$ and $D$ are both circuits, then $C \cup D$ is a cycle. 

Given a matroid $M = (E, \mathcal{B})$, and a subset $S \subset E$, the restriction $M|S = (S, \mathcal{B|S}) = \{B\cap S| B \subset \mathcal{B} \textrm{ such that } |B \cap S| = \rk(S) \}$. The contraction is defined $M/S = (E \setminus S, \mathcal{B/S}) = \{B\setminus S| B \subset \mathcal{B} \textrm{ such that } |B \cap S| \textrm{ maximal}\}$. A matroid is disconnected if it can be written as the direct sum of two matroids: $M = (E_1, \mathcal{B}_1) \oplus (E_2, \mathcal{B}_2)  = (E_1 \cup E_2 , \mathcal{B}_1 \times \mathcal{B}_2)$. Otherwise, it is connected.

A matroid is representable if it can be written as a matrix with the same independence data. In particular, since the matrices $M_\cV$ have algebraically independent non-zero entries, it is straight forward to read off the matriod associated to the collection of sets $\cV$.

\begin{eg}\label{eg:variablevaluesasmatroids}
Given a collection of subsets $\cV$ as in Definition \ref{dfn:variablevaluedmatrix}, we may define a matroid with a ground set consisting of the columns of $M_\cV$. That is, $E = \{1, \ldots , n\}$. Then, the set $S \subset E$ is a circuit in the matroid defined by $M_\cV$ if the set of rows with non-zero entries in $S$ (i.e. $\cV_S$ as defined in Definition \ref{dfn:columns}) contains exactly one fewer element than $S$: i.e. $|\cV_S| +1 < |S|$. For instance, any zero column of $M_\cV$ is a circuit. 
\end{eg}

\begin{dfn}\label{dfn:matroid}
We write $M(\cV)$ to denote the matroid defined by the variable valued matrix $M_\cV$. 
\end{dfn}

A positroid is a matroid, endowed with a cyclic ordering on the ground set, that can be realized as a matrix with all positive minors. Note that if $M$ is a positroid, as a matroid it is invariant under any permutation of the ground set. However, as a positroid, in order to  preserve the non-negativity of the minors, it is only invariant under cyclic permutations of the ground set. 

Let $<_a$ denote the $a$-th cyclic shift of the standard order on $\{1, \dots, n\}$. So, $a <_a (a+1) <_a \cdots <_a (a-1)$. These cyclic orderings on $E$ define Gale orderings on the subsets, $\{s_1 <_a s_2 <_a \cdots <_a s_k\} \leq_a \{t_1 <_a t_2 <_a \cdots <_a t_k\}$ if and only if $s_i \leq_a t_i$ for all $1 \leq i \leq k$. The collection of minimal basis sets for each cyclic shift of the Gale ordering gives the Grassmann necklace associated to a positroid. The Grassmann necklaces also define a stratification of the Grassmannians called positroid varieties. See \cite[Section 16]{Postnikov} for the original construction or \cite{positroids13} for a good exposition. Positroid varieties define subsets of $\Grall(k,n)$ that give a CW-complex when restricted to $\Gr(k,n)$ \cite[Section 3]{Postnikov}. We refer to the positroid varieties as $\Sigma$ and the positroid cells by  $\Sigmapos$.

One way to determine if a matroid is a positroid is to look at its flacets. A flacet, $F$, of a (connected) matroid $M$ is any subset of $M$ such that $M|F$ and $M/F$ are both connected. If $M$ is a positroid, then every flacet is a cyclic interval of the ground set. Furthermore, for any matroid, any flacet is a cyclic flat.

\begin{dfn}If the matroid $M(\cV)$ is a positroid, let $\Sigma(\cV)$ and $\overline{\Sigma(\cV)}$ be associated open and closed positroid varieties respectively. As such, it defines a subspace of $\Grall(k,n)$. We write $\Sigmapos(\cV) = \Sigma(\cV) \cap \Gr(k,n)$ and $\overline{\Sigmapos(\cV)} = \Sigma(\cV) \cap \Gr(k,n)$ to be the open and closed positroid cells defined by restricting to $\Gr(k,n)$. Similarly, we write $\Lpos(\cV)$ to be the restriction of $L(\cV)$ to $\Gr(k,n)$. \end{dfn} 

In this way, the word positroid has both a geometric meaning (as a subset of a Grassmannians) and a matroidal meaning (as a matroid where every flacet is a cyclic flat). These two meanings are tightly related in that every geometric positroid can be represented matroidaly as a positroid and vice versa. It should be clear from context whether the word positroid is referring to a combinatorial object or a subset of the Grassmannians. Note that while the geometrically motivated subspaces of $\Grall(k,n)$, $\Sigma(\cV)$, are in one to one correpondence with the subspaces defined by matrices, $L(\cV)$, these two subspaces are not the same.

We conclude this section with an example of how $L(\cV)$, $\Sigma(\cV)$, and $\overline{\Sigma(\cV)}$ differ.

\begin{eg} \label{eg:closuresmatch}
This example illustrates the differences between the sets $L(\cV)$, $\Sigma(\cV)$, and $\overline{\Sigma(\cV)}$. Let
\bas M_{\cV} =
\begin{bmatrix}
x_{p,1} & x_{p,2} & 0 &x_{p,4} & x_{p,5} & 0 \\
x_{q,1} & x_{q,2} & 0 & 0 & x_{q,5} & x_{q,6}
\end{bmatrix} \;. \eas  Note that this matrix corresponds to a Wilson loop diagram, but the result demonstrated in this example is much more general. Let $\Sigma(\cV)$ and $\overline{\Sigma(\cV)}$ be the associated open and closed positroid cells respectively.

The point represented by
\begin{displaymath}
\begin{bmatrix}
1 & 1 & 0 & 1 & 1 & 0 \\
1 & 1 & 0 & 0 & 1 & 1
\end{bmatrix}
\end{displaymath}
\noindent
is in $L(\cV) \setminus \Sigma(\cV)$, since the minor $ \Delta_{12}$ vanishes. 

The point represented by
\begin{displaymath}
\begin{bmatrix}
1 & 0 & 0 & 1 & 0 & 1 \\
0 & 1 & 0 & 1 & 1 & 1
\end{bmatrix}
\end{displaymath}
\noindent
is in $\Sigma(\cV) \setminus L(\cV)$ since there is no point in $L(\cV)$ where $\Delta_{45},\Delta_{56} \neq 0$ and $\Delta_{46} = 0$.

The point represented by
\begin{displaymath}
\begin{bmatrix}
0 & 0 & 0 & 0 & 1 & 0 \\
0 & 0 & 0 & 1 & 0 & 1
\end{bmatrix}
\end{displaymath}
\noindent
is in $\overline{\Sigma(\cV)} \setminus (\Sigma(\cV) \cup L(\cV))$.
\end{eg}

\section{Geometry of Wilson loop diagrams \label{sec:WLDgeom}}

We are now ready to combine the notation above to define the matrices associated to Wilson loop diagrams, and thus define the integrals associated to the diagrams. In this section, we show that the polynomials appearing in the denominator of the integrands thus defined, $R(\VP)$, are both physically and geometrically interesting. Physically, these denominators correspond to the spurious poles of $N=4$ SYM \cite{hodges:2013eliminating}. Geometrically, these polynomials are defined by the Grassmann necklace associated to the diagram \cite{generalcombinatoricsII}. Furthermore, the vanishing locus of $R(\VP)$ is contained in the boundary of the corresponding positroid variety (Theorem \ref{res:vanishonbdny}) and if the vanishing locus has codimension one in the closed positroid variety, then it is dense in the closure of a codimension one boundary Theorem \ref{res:polesonboundaries}. Finally, we show that these polynomials are closely related to the frozen variables of the cluster algebras defined by the positroid variety, $\Sigma(\VP)$. 

For physical reasons, one is only interested in the intersection of the positroid varieties with $\Gr(k,n)$. Therefore, while we prove results in this section for $\Sigma(\VP)$, for physical interpretations of the spurious poles one is only interested in the restriction to $\Sigmapos(\VP)$.

\subsection{Wilson loop diagrams and positroid varieties \label{sec:WLDmatroid}}

In order to associate a matrix to a Wilson loop diagram, $W = (\cP, [n])$, we associate a $4$ dimensional subspace of $\RP^{n+1}$ to each propagator $p \subset \cP$: \bas Y_p = \begin{cases}  x_{p,i} &  i = 0 \textrm{ or } i \in V_p \\ 0 &  \textrm{else,}\end{cases} \eas where $x_{p,i}$ are algebraically independent real valued variables \cite{Amplituhedronsquared}. Then one may associate to $W$ a subspace of $\Grall(k, n+1)$ given by the linear span of the $Y_p$. 

In other words, for each Wilson loop diagram $(\cP, [n])$, we have two sets of subsets \bas \YP = \{ V_p \cup 0 | p \in \cP\} \quad \textrm{ and } \quad \VP = \{ V_p | p \in \cP\} \eas  that define two matrices $M_{\YP}$ and $M_{\VP}$. These define subspaces of $\Grall(k, n+1)$ and $\Grall(k, n)$ respectively, that we call $L(\YP)$ and $L(\VP)$ as in Definition \ref{dfn:loci}.

\begin{eg} \label{eg:matrices}Given the Wilson loop diagram $W$ in Example \ref{eg:admissible}, with $p = (3,5)$, $r = (2,5)$, and $s = (1,7)$ one may write \bas M_{\YP} = \begin{bmatrix}  x_{p,0} & 0 & 0 &x_{p,3} & x_{p,4} &  x_{p,5} & x_{q,6} & 0 & 0 \\ x_{r,0} & 0 & x_{r,2} & x_{r,3} & 0  &x_{r,5} & x_{r,6} & 0 &0 \\ x_{s,0} & x_{s,1} & x_{s,2} & 0 &0 &0 &0 &x_{s,7} & x_{s,8} \end{bmatrix} \;.\eas

We define the matrix $M_{\VP}$ to be the one defined by ignoring the first column of $M_{\YP}$.  In the running example, \bas M_{\VP} = \begin{bmatrix}  0 & 0 &x_{p,3} & x_{p,4} &  x_{p,5} & x_{q,6} & 0 & 0 \\ 0 & x_{r,2} & x_{r,3} & 0  &x_{r,5} & x_{r,6} & 0 &0 \\  x_{s,1} & x_{s,2} & 0 &0 &0 &0 &x_{s,7} & x_{s,8} \end{bmatrix} \;.\eas
\end{eg}

\begin{rmk}
Note that, as a subspace of $\Grall(k, n)$ (resp. $\Grall(k, n+1)$), the ordering of the rows in $M_{\VP}$, resp. $M_{\YP}$ do not matter.
\end{rmk}

\begin{rmk}
It is also worth noting that in previous literature, the matrices $M_{\YP}$ and $M_{\VP}$ were denoted $C_*(W)$ and $C(W)$ respectively. However, in this paper, we wish to exploit the results of \cite{basisshapeloci} to understand the properties of the parameterized spaces and therefore change the naming conventions. 
\end{rmk}

Note that in this setup, for $W = (\cP, [n])$ and $S \subset [n]$, the set $\Prop(S)$ indicates the rows of $M_{\VP}$ that have non-zero entries in the collumns indicated by $V$. That is, in the notation of Definition \ref{dfn:columns}, $\Prop(S) = \VP_S$. 

The matroidal properties of Wilson loop diagrams derived in \cite{Wilsonloop}  can be verified by considering the independent columns of $M_{\VP}$.
\begin{enumerate} 
\item Theorem [3.9] of \cite{Wilsonloop} states that a set of vertices of a Wilson loop diagram, $(\cP, [n])$  is independent if and only if no subset supports fewer propagators than vertices in the subset. I.e. $V \subset [n]$ is independent if, for all $U \subseteq V$, $\Prop(U) \geq |U|$. In terms of rows and columns of $M_{\VP}$, this simply states that any set of columns of $M_{\VP}$ contains a dependent subset if and only if said subset is non-zero on fewer rows than columns in the subset, which is consistent with the circuit condition laid out in Example \ref{eg:variablevaluesasmatroids}
\item Corollary 3.39 of \cite{Wilsonloop} shows that all admissible Wilson loop diagrams, in particular those satisfying the non-crossing and local density conditions, correspond to positroids. This means that, as matroids, the Wilson loop diagrams can be represented by matrices with all non-negative maximal minors. That is, the subspace of $\Grall(k,n)$ parameterized by $M_{\VP}$ intersects $\Gr(k,n)$.
\end{enumerate}

\begin{eg}\label{eg:wldmatroid} For instance, in $W$  from Example \ref{eg:admissible}, the vertices $\{7,8\}$ are a circuit since they only support one propagator between them. On the other hand, the vertices $\{3,4\}$ support two propagators, and thus are independent. Consider the set of propagators $P = \{ (2, 5), (3, 5)\}$. Then $F(P) = \{ 1, 7, 8\}$. This is a flat of rank $2$. In \cite{Wilsonloop}, set of the form $F(P)$ are called propagator flats, and the author shows that the cyclic flats of the matroid associated to $W$ are propagator flats.
\end{eg}

Furthermore, \cite{basisshapeloci} gives insight into the geometry of Wilson loop diagrams in $\Gr(k,n)$. In \cite[Theorem 8.4]{basisshapeloci} the author shows that for a Wilson Loop diagram $W = (\cP, [n])$, the closure of the locus $L(\VP)$ is exactly the closure of the positroid: $\overline{L(\VP)} = \overline{\Sigma(\VP)}$. Thus, in the positive Grassmannians as well, the subspace parameterized by the matrices $M_{\VP}$ agrees with a positroid cell, up to a set of measure 0, $\overline{\Lpos(\VP)} = \overline{\Sigmapos(\VP)}$. This fact is illustrated in Example \ref{eg:closuresmatch}.

The author of \cite{basisshapeloci} also shows each $\Sigma(\VP)$ is a $3k$ dimensional space. More generally, the author shows, using slightly different notation:

\begin{thm}[Theorem 3.2, \cite{basisshapeloci}]\label{res:minimalrep}
Given a variable valued $k \times n$ matrix $M_\cV$ with $m$ non-zero entries representing a positroid  $\Sigma(\cV)$, the following are equivalent:
\begin{enumerate}
\item $\dim(\Sigma(\cV)) = m -k$ 
\item $M_\cV$ has the smallest number of non-zero variable entries of any variable valued matrices representing $\Sigma(\cV)$
\item For all $\mathcal{T} \subseteq \cV$, \bas |\bigcup_{T \in \mathcal{T}}T| \geq \max_{T \in  \mathcal{T}} (|T|) + |\mathcal{T}| -1 \;. \eas
\end{enumerate}
\end{thm}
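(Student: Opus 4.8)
The plan is to prove the cycle $(3)\Rightarrow(1)\Rightarrow(2)\Rightarrow(3)$, writing $m=\sum_i|V_i|$ for the number of free variables that coordinatize the affine space $M_\cV$, and using the quotient map $\phi$ of \eqref{eq:maps} to identify $L(\cV)$ with the image of $M_\cV^{\circ}:=M_\cV\cap M^{\rk k}_{k,n}$. The first ingredient is a universal lower bound valid for \emph{any} variable valued representation: the diagonal torus $(\R^{\ast})^k$ rescales the rows without changing the row span, so it lies inside every fiber of $\phi|_{M_\cV}$, whence $\dim\Sigma(\cV)=\dim L(\cV)\le m-k$ and every representation has at least $\dim\Sigma(\cV)+k$ nonzero entries. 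This already yields $(1)\Rightarrow(2)$: if $\dim\Sigma(\cV)=m-k$ then $m=\dim\Sigma(\cV)+k$ meets the bound, so $M_\cV$ carries the fewest possible entries.

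The core is $(3)\Leftrightarrow(1)$, which I would obtain by computing the generic fiber dimension $f$ of $\phi|_{M_\cV^{\circ}}$, so that $\dim\Sigma(\cV)=m-f$. Over a generic full-rank $M\in M_\cV$ the fiber is the group $\{g\in GL_k:gM\in M_\cV\}$, whose tangent space at the identity is $\{A\in\mathfrak{gl}_k:(AM)_{ij}=0\text{ for all }j\notin V_i\}$. These equations decouple by row: each diagonal entry $a_{ii}$ is free, while the off-diagonal entries $(a_{il})_{l\ne i}$ must satisfy $\sum_{l:\,j\in V_l}a_{il}M_{lj}=0$ for every $j\notin V_i$. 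Since the $x_{i,j}$ are algebraically independent, the rank of this system equals the maximum matching $\nu(G_i)$ of the bipartite graph $G_i$ with parts $\cV\setminus\{V_i\}$ and $[n]\setminus V_i$ and an edge $V_l\!-\!j$ whenever $j\in V_l$. Thus $f=k+\sum_i\big((k-1)-\nu(G_i)\big)$, so $f=k$ precisely when each $G_i$ has a matching saturating $\cV\setminus\{V_i\}$; by Hall's theorem this is the family of inequalities $|(\bigcup_{l\in L}V_l)\setminus V_i|\ge|L|$ for all $i$ and all $L\subseteq\cV\setminus\{V_i\}$. A short calculation shows this family is equivalent to condition (3): taking $\mathcal{T}=L\cup\{V_i\}$ and splitting $|\bigcup\mathcal{T}|=|V_i|+|(\bigcup_{l\in L}V_l)\setminus V_i|$ turns each inequality into the other, the nontrivial direction using $\max_{T\in\mathcal{T}}|T|\ge|V_i|$. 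Hence $f=k$ iff (3) holds iff $\dim\Sigma(\cV)=m-k$, giving $(3)\Leftrightarrow(1)$.

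It remains to prove $(2)\Rightarrow(3)$, and this is where I expect the real difficulty. I would argue the contrapositive: if (3) fails, produce a representation of the \emph{same} positroid with strictly fewer nonzero entries, contradicting minimality. When the failure comes from a containment $V_l\subseteq V_i$, the reduction is a single elementary row operation: replacing row $i$ by row $i$ minus a suitable multiple of row $l$ annihilates one entry of row $i$ while creating none (as $V_l\subseteq V_i$), producing $\cV'$ with $V_i'=V_i\setminus\{j\}$; then $M_{\cV'}\subseteq M_\cV$ gives $L(\cV')\subseteq L(\cV)$, while the operation gives $L(\cV)\subseteq L(\cV')$, so both represent the same positroid with $m'=m-1$. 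The hard case is a matching deficiency among pairwise \emph{incomparable} sets, where a minimal deficient $L$ yields a relation $\sum_{l\in L}c_lM_l$ supported inside the largest member of $\mathcal{T}$; replacing a row by this combination alters the support pattern globally and need not obviously lower the entry count. The main obstacle is to organize these reductions---choosing a minimal deficient set, inducting on $m$, and tracking the stabilizer $\{g:gM_\cV\subseteq M_\cV\}$, which strictly contains the diagonal torus exactly when (3) fails---so that each step strictly decreases the number of nonzero entries while preserving $\Sigma(\cV)$. Once such a terminating reduction is in hand, a presentation satisfying (3) exists, the universal bound is attained, $m_{\min}=\dim\Sigma(\cV)+k$, and minimality forces (1).
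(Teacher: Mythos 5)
First, a point of order: the paper you were given contains \emph{no} proof of this statement --- it is imported verbatim from \cite{basisshapeloci} (Theorem 3.2 there) --- so your proposal can only be judged on its own merits, not against an internal argument.

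The decisive gap is that your cycle never closes. The fiber-dimension computation behind $(3)\Leftrightarrow(1)$ is sound in outline: the conditions $(AM)_{ij}=0$ do decouple by row, generic rank does equal maximum matching, Hall's theorem converts saturation of $\cV\setminus\{V_i\}$ into the inequalities $|(\bigcup_{l\in L}V_l)\setminus V_i|\ge |L|$, and your translation of those inequalities into condition (3) is correct; likewise the torus bound gives $(1)\Rightarrow(2)$. But $(2)\Rightarrow(3)$ --- the only implication that actually uses minimality --- is left as an acknowledged obstacle rather than proved. You complete it only when (3) fails via a containment $V_l\subseteq V_i$, yet (3) can fail with pairwise incomparable sets: for $\cV=\{\{1,2\},\{2,3\},\{1,3\}\}$ the choice $\mathcal{T}=\cV$ gives $3<2+3-1$ and there is no containment, so the case you handled does not cover the contrapositive. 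For the incomparable case you say the replacement ``need not obviously lower the entry count'' and that the argument works ``once such a terminating reduction is in hand''; that terminating reduction (equivalently, the existence, for every positroid representable by a variable valued matrix, of a representation attaining $\dim\Sigma(\cV)+k$ entries) is precisely the content that has to be proved. Without it, nothing returns from (2) to (1) or (3), and what you have established is only $(3)\Leftrightarrow(1)\Rightarrow(2)$.

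A second, quieter gap: every step of your argument silently identifies $\dim\Sigma(\cV)$ with $\dim L(\cV)$, and $(1)\Rightarrow(2)$ needs this identity for every competing representation $\cV'$ as well. The easy direction is $\dim L(\cV')\le\dim\Sigma(\cV)$, since a generic point of $L(\cV')$ realizes the matroid $M(\cV')=M(\cV)$ and hence lies in the open positroid variety. What your bounds actually require is the reverse inequality --- that $L(\cV')$ is full dimensional in $\Sigma(\cV)$ --- which is the density statement $\overline{L(\cV)}=\overline{\Sigma(\cV)}$. That density is itself a theorem of \cite{basisshapeloci}, proved there \emph{downstream} of the very statement you are proving, so invoking it invites circularity; at minimum it must be isolated as an explicit input and justified independently, or your conditions (1) and (3) must be restated in terms of $\dim L(\cV)$ rather than $\dim\Sigma(\cV)$.
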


\begin{dfn}
We say that $M_{\cV}$ is a minimal representation of $\Sigma(\cV)$ if the matroid $M(\cV)$ is a positroid and if $M_{\cV}$ satisfies condition 2 of Theorem \ref{res:minimalrep}.
\end{dfn}

Note that each matrix of the from $M_{\VP}$ is a minimal representation of $\Sigma(\VP)$. 

In \cite[section 2.3]{non-orientable}, the authors show that  the subspace of $\Grall(k,n+1)$ parameterized by $M_{\YP}$ can be viewed as a real $k$ vector bundle over the space parameterized by $M_{\VP}$, i.e.  $\cup_{W \in \cW_{k,n}}L(\YP) \rightarrow \cup_{W \in \cW_{k,n}}L(\VP)$. Restricting to any given Wilson loop diagram gives a trivial vector bundle $L(\YP) \rightarrow L(\VP)$. As shown in Theorem 2.29 of loc. cit., the space $\cup_{(\cP, [n]) \in \cW_{k,n}}L(\YP)$ need not be orientable.

\subsection{Integrals and poles \label{sec:integrals}}

The goal of this paper is to show that the spurious poles of the integrals associated to Wilson loop diagrams cancel in the calculation of the full amplitude. In this section we define the integrals and show that the poles lie on the boundary of the associated positroid cell ($\Sigmapos(\VP)$).

Recall that there are two types of singularities in this theory: the physical poles and the spurious poles. The physical poles arise when the vectors representing the particles do not span $\R^4$, as one expects in a physical theory with infrared singularities. The spurious poles are those that arise in each summand of the integrals involved in calculating the three level scattering amplitude. These should cancel in the sum. In this section, we discuss the algebraic and geometric properties  of these poles. 

\subsubsection{Integrals associated to Wilson loop diagrams}
The Wilson loop diagrams represent the tree level contributions to the scattering amplitudes in the physical theory N=4 SYM. The holomorphic Wilson loop for $n$ particles and $k$ propagators gives the contribution to the $n$ particle scattering amplitude of N=4 SYM by $k$ propagators \cite{Adamo:2011pr, Boels:2007qn, Bullimore:2010pj, hodges:2013eliminating}. The tree level contribution to this amplitude is given by a sum of integrals associated to admissible Wilson loop diagrams: \ba \cA_{k,n}^{tree} = \sum_{(\cP, [n]) \in \cW_{k,n}} I(\VP) \;.\label{eq:treelevelamplitude}\ea The scattering amplitude is a functional on the particles of the theory, represented in twistor space. In this case, the external data are represented as $n$ sections of a $k$ dimensional real vector bundle over a real twistor space. The set of external data,  $\{Z_1, \ldots, Z_n\}$, is constructed such that the $n \times (k+4)$ matrix defined with the vector $Z_i$ in the $i^{th}$ row has positive maximal ordered minors. Furthemore, we fix a gauge section, $Z_0$, which can be taken, without loss of generality, to be a $0$ section. We define the matrices \bas \cZ = \begin{bmatrix} - & Z_1& - \\ & \vdots &  \\ - & Z_n& -\end{bmatrix} \; ; \; \cZ_* = \begin{bmatrix}- & Z_0& - \\  - & Z_1& - \\ & \vdots & \\  - & Z_n& -\end{bmatrix} \; .\eas Note from above, the matrix $\cZ$ has positive maximal minors, while the matrix $\cZ_*$ may not.

Before we define the actual integral, $\cI(\VP)$, recall that if multiple propagators end on the $e^{th}$ edge, we order them according to their proximity to the vertex $e$ (Definition \ref{dfn:adjacentprops}). We are now ready to define the integrals associated to each Wilson loop diagram $W = (\cP, [n])$, with $k = |\cP|$:

\begin{dfn} \label{dfn:I(W)} \bas \cI(\VP) (\cZ_*)  = \int_{(\RP^4)^k} \frac{\prod_{p \in \cP} \prod_{v \in V_p} dx_{p, v}}{R(\VP)} \delta^{4k|4k}(M_{\YP} \cdot \cZ_*) \eas where, for $X$ a $k \times k+4$ matrix, \bas \delta^{4k|4k}(X) = \prod_{b =1}^k (X_{b, 4+b})^4\delta^4((X_{b,1},X_{b,2},X_{b,3},X_{b,4}))  \eas and $R(\VP)$ is a polynomial determined from the $W$ as follows: 
\begin{enumerate}
\item Define $R_e = x_{q_1, e+1} \big(\prod_{r = 1}^{s-1} (x_{q_r, e}x_{q_{r+1}, e+1} - x_{q_r, e+1}x_{q_{r+1}, e})\big) x_{q_s, e}$
\item $R(\VP) = \prod_{e \in [n]} R_e$
\end{enumerate} \end{dfn}

For more detail on the derivation of these integrals, see \cite{HeslopStewart, Amplituhedronsquared}. Appendix \ref{sec:appendix} gives examples and context for calculations done with these integrals. 

There are a few features to note in this definition. First, notice that the spurious poles of the theory correspond to setting the factors of $R(\VP)$ to zero. Specifically, by equations \eqref{eq:matrixvalues1} in Appendix \ref{sec:appendix}, the variables $x_{p,0}$ localize to $0$ when the vectors representing the particles do not span $\R^4$. This corresponds to the physical poles of the theory. The spurious poles occur when  $x_{p,0} \neq 0$, but the polynomial $R(\VP)$ goes to $0$. See \cite{casestudy, HeslopStewart, Amplituhedronsquared} for more calculations of this form. 

Secondly, note that in this definition we take the integral over $(\RP^4)^k$. In particular, the polynomial $R(\VP)$ is defined over $(\RP^4)^k$, and not over $L(\VP)$. Note that by a toric action, we may consider $M_{\YP}$ as a subspace of $(\RP^n)^k$, and since each row has exactly five non-zero entries, we may view each Wilson loop diagram as an isomorphism from $(\RP^4)^k$ to $M_{\YP}$. Finally, by the map in equation \eqref{eq:maps}, we may study the vanishing locus of $R(\VP)$ as a subspace of $L(\YP)$.

Thirdly, in \cite{Wilsonloop}, the authors show that the matroid $M(\VP)$ is a positroid, i.e. the parameterized space $L(\VP)$ intersects the non-negative Grassmannian. Due to the parallels with the Amplituhedron, we are interested in the non-negative geometry of the spaces defined by the Wilson loop diagrams. In \cite{basisshapeloci}, the author shows that \ba \overline{\Lpos(\VP)} = \overline{\Sigmapos(\VP)} \label{eq:density}\;.\ea  In other words, up to a space of measure $0$, the non-negative space parameterized by the Wilson loop diagrams, $\{V_p\} \subset (\RP^{n})^k$ and $M_{\VP} \subset (\R^{n \times k})$ both parameterize a particular positroid cell in the appropriate positive Grassmannian. 

Finally, returning to the main purpose of this paper, we wish to show that  the poles of these integrals (the spurious poles of the theory) cancel in the tree level amplitude. Equation \eqref{eq:treelevelamplitude} shows that $N^kMHV$ (tree level) amplitude on $n$ points is given by the sum of all the $I(\VP)$, for $(\cP, [n]) \in \cW_{k,n}$. Theorem \ref{res:polesonboundaries} shows that the spurious poles all fall on the boundaries of a geometric space defined by these diagrams. Then in Theorem \ref{res:deg1polescancel} we show that the poles that fall on codimension one boundaries cancel.

However, unlike in the Amplituhedron story, one cannot associate a geometric meaning to the sum of these integrals. In \cite{non-orientable}, the authors show that while each subspace $\Lpos(\YP)$ is orientable, the union of said spaces are not. Therefore, one cannot interpret the sum of the associated integrals as the volume of a geometric space. This result has also been shown explicitly and separately in \cite{HeslopStewart} which explicitly calculates a contradiction that occurs if one attempts this interpretation.

Here, we propose a different geometric interpretation of the sum of these integrals. The space $\Lpos(\YP)$ forms a $k$ vector bundle over $\Lpos(\VP)$ \cite{non-orientable}: $\Lpos(\YP) \rightarrow \Lpos(\VP)$. One can consider $\overline{\Lpos(\YP)}$ as the closure of a $k$-dimensional line bundle over $\overline{\Lpos(\VP)}$ \cite{non-orientable}. Then, we may consider the cancellation of spurious poles on any section of the bundle, i.e. after fixing values to the variables $x_{p, 0}$. In fact, for the spurious poles, one typically sets the values of $x_{p,0}$ to be $\pm 1$, and factors of $R(\VP)$ evaluates to $0$ \cite{casestudy, HeslopStewart, Amplituhedronsquared}. The physical poles of the tree level amplitude $A_{k,n}$ arise when the variables $x_{p, 0} = 0$.

\subsubsection{Geometry of spurious poles}

In this section, we investigate the geometry of the spurious poles of the Wilson loop diagrams, or the factors of the polynomial $R(\VP)$. We give three results showing that these factors are closely related to the geometry underlying the positroid $\Sigma(\VP)$. Namely, the polynomial is exactly the product of the square free factors of the minors identified in the Grassmann necklace of $\Sigma(\VP)$. The zero loci of the factors of $R(\VP)$ lie on the the boundary of $\Sigma(\VP)$ and the codimension one zero loci are dense in the codimension one boundaries of $\Sigma(\VP)$. 

In \cite{generalcombinatoricsII}, the authors show that the prime factors of $R(\VP)$ to are also the prime factors of the product of the minors defined by the Grassmann necklace of $\Sigma(\VP)$. We restate it here using the notation of this paper. First we introduce some notation.

\begin{dfn}
Given a variable valued matrix $M_{\VP}$, with Grassmann necklace $\cI(\VP)$, let $\Delta_{I_i}(\bf{x})$ be the polynomial in $\bf{x}$ defined by the minor of $M_{\VP}$ indicated by the columns indexed by $I_i$.
\end{dfn}

This is a polynomial in the variables making up the matrix, and is a different object than the Pl{\"u}cker coordinate indicated by $I_i$. Recall that the matrix $M_{\cV}$ is parameterized by the independent variables $\bf{x}$, with one algebraically independent variable in each non-zero entry.

\begin{thm} \cite[Proposition 5.3]{generalcombinatoricsII} \label{res:prop alg gives rad}
Write $\cI = \{I_1, \ldots I_n\}$ as the Grassmann necklace associated to the positroid cell $\Sigma(\VP)$. Then 
\begin{enumerate}
    \item Each $\Delta_{I_i}(\bf{x})$ splits into linear and quadratic factors.  All linear factors of  $\Delta_{I_i}(\bf{x})$ are single variables and all irreducible quadratic factors are $2\times 2$ determinants of single variables.
    \item Quadratic factors in $\Delta_{I_i}(\bf{x})$ arise precisely when propagators $p$ and $q$ are supported on a common edge $a$.
    \item The factor $r_e$ divides $\Delta_{I_e}(\bf{x})$.
    \item The ideal generated by $R(\VP)$ is the radical of the ideal generated by $\prod_{i=1}^{n}\Delta_{I_i}(\bf{x})$.
  \end{enumerate}
\end{thm}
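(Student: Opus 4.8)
The plan is to describe the Grassmann necklace $\cI(\VP)$ explicitly from the diagram, read off the factorization of each necklace minor from a block decomposition of the corresponding submatrix of $M_{\VP}$, and then deduce the radical identity in (4) by comparing irreducible factors on the two sides. First I would combine two ingredients: the transversal-matroid description of $M(\VP)$ recalled in Section \ref{sec:WLDmatroid} (a set of columns is independent exactly when it satisfies Hall's condition against $\Prop(\cdot)$, by \cite{Wilsonloop}), and the greedy characterization of each necklace element $I_a$ as the $\leq_a$-minimal basis---obtained by scanning the columns in the cyclic order $<_a$ and retaining a column whenever the running set stays independent. Together with the non-crossing condition, this yields a column-by-column description of $I_a$ and, more importantly, the support pattern of the $k\times k$ submatrix $M_{\VP}[I_a]$ cut out by the columns of $I_a$.

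For parts (1) and (2), I would permute the rows and columns of $M_{\VP}[I_a]$ into block-triangular (Dulmage--Mendelsohn) form, so that $\Delta_{I_a}(\mathbf{x})$ is the product of the determinants of the fully indecomposable diagonal blocks. A $1\times 1$ block contributes a single variable, and a fully indecomposable $2\times 2$ block necessarily has all four entries nonzero and contributes the irreducible quadric $x_{p,c}x_{q,c+1}-x_{p,c+1}x_{q,c}$; in either case the factor is manifestly irreducible, so no general irreducibility theorem for indeterminate determinants is needed. Parts (1) and (2) then reduce to the single assertion that every fully indecomposable block of $M_{\VP}[I_a]$ has size at most $2$, and that a $2\times 2$ block forces its two propagators $p,q$ to be simultaneously nonzero in two consecutive columns $c,c+1$---that is, to share the edge $c$. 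Establishing this block-size bound, using non-crossing to exclude larger indecomposable blocks and to force the shared columns of a block to be consecutive, is the technical heart of the argument and the step I expect to be the main obstacle.

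Part (3) is then bookkeeping on the specific element $I_e$: the greedy scan starting at $e$ forces the columns adjacent to edge $e$ into $I_e$, and the propagators $q_1,\dots,q_s$ incident to $e$ contribute exactly the diagonal blocks whose determinants multiply to the edge factor $r_e$ (the factor $R_e$ of Definition \ref{dfn:I(W)}), giving $r_e \mid \Delta_{I_e}(\mathbf{x})$.

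Finally, for (4) I would compare the finite sets of irreducible factors. By (3) each irreducible factor of $R(\VP)=\prod_e R_e$ divides some $\Delta_{I_e}$, hence divides $\prod_i \Delta_{I_i}$; after checking that $R(\VP)$ is square-free (distinct edges contribute coprime factors, since a propagator cannot be incident to two consecutive edges by local density, and within an edge the blocks are coprime) this gives one inclusion of principal ideals. For the reverse inclusion I would invoke (1)--(2): each irreducible factor of $\prod_i \Delta_{I_i}$ is either a single variable or a $2\times 2$ determinant of two propagators sharing an edge, and I must verify that each such factor already occurs in $R(\VP)$. The only delicate point is that a $2\times 2$ block could a priori arise from two propagators sharing an edge without being adjacent on it; here non-crossing again forces such a pair to be adjacent in the sense of Definition \ref{dfn:adjacentprops}, matching one of the quadratic factors of the corresponding $R_e$, while the boundary variables $x_{q_1,e+1}$ and $x_{q_s,e}$ account for the surviving linear factors. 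The two factor sets then coincide, and the radical identity follows.
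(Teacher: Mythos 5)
the paper you were given does not actually prove this statement --- it is imported wholesale from \cite{generalcombinatoricsII} (Proposition 5.3 there), so there is no in-paper proof to compare against, and I am judging your plan on its own terms. Your architecture (greedy computation of each necklace element $I_a$, Dulmage--Mendelsohn block decomposition of the submatrix $M_{\VP}[I_a]$, then a factor-by-factor comparison to get the radical identity in (4)) is a sensible organization, and the easy pieces of it are fine. The problem is that the entire content of parts (1) and (2) is exactly the claim you defer as ``the technical heart'': that every fully indecomposable block has size at most two and that a $2\times 2$ block sits in two \emph{consecutive} columns. Nothing in your outline supplies a mechanism for proving it, and the danger is real: two non-crossing propagators can share two non-consecutive columns (e.g. $p=(i,j)$ and $q=(j+1,i-1)$ share columns $i$ and $j+1$), and the admissible diagram $p=(1,3)$, $q=(1,5)$, $r=(3,5)$ on $[7]$ has pairwise-shared columns so that the support pattern on columns $\{1,3,5\}$ is a $6$-cycle --- a fully indecomposable $3\times 3$ pattern whose determinant is an irreducible cubic, and $\{1,3,5\}$ is even a basis of the matroid. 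What saves the theorem is that such column sets never occur inside a Gale-minimal basis; this must be proved from the interaction of the greedy scan with non-crossing, and that argument is the theorem. A proposal that postpones it has not proved (1)--(2), and (3)--(4) rest on them.

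Second, there is an outright error, not just a gap: your identification of $r_e$ with the full edge factor $R_e$ of Definition \ref{dfn:I(W)} makes claim (3) false. Already for a single propagator $p=(1,3)$ on $[6]$ one has $R_1 = x_{p,2}\,x_{p,1}$, while $I_1=\{1\}$ and $\Delta_{I_1}=x_{p,1}$, so $R_1 \nmid \Delta_{I_1}$. In the triangle diagram above, $R_1 = x_{q,2}\,(x_{q,1}x_{p,2}-x_{q,2}x_{p,1})\,x_{p,1}$ while $I_1 = \{1,2,3\}$ and $\Delta_{I_1} = x_{r,3}\,(x_{p,1}x_{q,2}-x_{p,2}x_{q,1})$: only the quadratic part of $R_1$ divides $\Delta_{I_1}$. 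The boundary linear factors $x_{q_1,e+1}$ and $x_{q_s,e}$ of $R_e$ generically do \emph{not} divide $\Delta_{I_e}$; they divide \emph{other} necklace minors (here $x_{q,2}\mid\Delta_{I_2}$ and $x_{p,1}\mid\Delta_{I_5}$). So $r_e$ in the cited statement can only be the quadratic part of $R_e$, and your step in (4) --- ``by (3) each irreducible factor of $R(\VP)$ divides some $\Delta_{I_e}$'' --- does not follow from (3). The inclusion is true, but it needs a separate bookkeeping argument that locates each linear factor $x_{q_1,e+1}$, $x_{q_s,e}$ inside an appropriately shifted necklace minor, which your outline does not contain.
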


We can generalize this result to define a polynomial given by a minimal parametrization of a positroid variety. 

\begin{dfn}\label{dfn:RV}
Let the set of subsets $\cV$ define a positroid variety  $\Sigma(\cV)$, with matroid $M(\cV)$. Let $\cI =  \{I_1, \ldots I_n\}$ be the associated Grassmann necklace. Then define $R(\cV)$ to be the polynomial formed by the product of the prime factors of $\prod_{i=1}^{n}\Delta_{I_i}(\bf{x})$ in the variables $\{x_{i,j}\}$ defined by the set $\cV$. That is, $R(\cV) = \textrm{rad}(\prod_{i=1}^{n}\Delta_{I_i}(\bf{x}))$. 
\end{dfn}

We note that here, $R(\cV)$ is a polynomial in the matrix entries $x_{p,i}$. The radicals taken are over the polynomial ring $\R[\bf{x}]$, defining a variety in the space of $n \times k$ matrices. To pass to the Grassmannian and obtain the variety $\overline{\Sigma(\cV)}$, one must consider the image under the map in \eqref{eq:maps}. 

In particular, there may be multiple polynomials arising from the same Grassmann necklace defining the positroid variety $\Sigma(\cV)$, as illustrated in the following example.

\begin{eg} \label{eg:differentpolys} Note that the polynomial $R(\cV)$ is dependent on the minimal parameterization of the positroid defined by the Grassmann necklace $\cI$.  For instance, in \cite{generalcombinatoricsI}, the authors give conditions for when multiple Wilson loop diagrams can correspond to the same positroid cell. For example, the two by six matrices defined by $\cV_1 = \{\{1, 2, 4, 5\}, \; \{1, 2, 3, 4\} \}$ and $\cV_2 = \{\{1, 2, 4, 5\}, \; \{2, 3, 4, 5\} \}$ both correspond to the positroid variety with Grassmann necklace $\cI = \{12, 23, 34, 45, 51, 12\}$. However \bas R(\cV_1) = x_{1,2}(x_{1,1}x_{2,2}- x_{2,1}x_{1,2})x_{2,1}x_{2, 3}x_{2,4}x_{1,4}x_{1,5}  \quad \textrm{while} \\ R(\cV_2) =  x_{1,1}x_{1,2}x_{2,2}x_{2, 3} x_{2,5}(x_{1,4}x_{2,5}- x_{2,4}x_{1,5})x_{1,4}\;.\eas Both correspond to the polynomial formed by the product of the prime factors of $\prod_{i=1}^{n}\Delta_{I_i}(\bf{x})$ under the corresponding coordinate system.

This is different from the approach elsewhere in the literature, e.g. \cite{galashinlam19, SS-BW}, where similar polynomials are defined on the Pl\"{u}cker coordinates.

\end{eg}

Next, we to show that the factors of $R(\VP)$ vanish on the boundary of the associated positroids. While this has been shown explicitly in the case of $n = 6$ and $k=2$ \cite{casestudy}, it has not been shown in general.

\begin{prop}\label{res:vanishonbdny} 
If the matroid $M(\cV)$ is a positroid, then the factors of $R(\cV)$ vanish on the boundary of the positroid variety $\Sigma(\cV)$.
\end{prop}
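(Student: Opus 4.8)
The plan is to show that the zero locus of $R(\cV)$, viewed inside the Grassmannian through the map $\phi$ of \eqref{eq:maps}, is contained in $\overline{\Sigma(\cV)} \setminus \Sigma(\cV)$. Since $R(\cV) = \textrm{rad}(\prod_{i=1}^n \Delta_{I_i}(\mathbf{x}))$ by Definition \ref{dfn:RV}, its vanishing set coincides with $\bigcup_i V(\Delta_{I_i}(\mathbf{x}))$, the union of the zero loci of the necklace minors, and every prime factor of $R(\cV)$ divides one of the $\Delta_{I_i}(\mathbf{x})$. Hence it suffices to prove that for each $i$, the locus where $\Delta_{I_i}(\mathbf{x})$ vanishes lies in the closed variety $\overline{\Sigma(\cV)}$ but misses the open variety $\Sigma(\cV)$.

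First I would establish containment in the closure. By \cite[Theorem 8.4]{basisshapeloci}, restated as \eqref{eq:density}, we have $\overline{L(\cV)} = \overline{\Sigma(\cV)}$, so $L(\cV) \subseteq \overline{\Sigma(\cV)}$; that is, every full-rank specialization of $M_\cV$ maps under $\phi$ into the closed positroid variety. Specializations of $M_\cV$ that drop rank are not in the domain of $\phi$, but they are limits of full-rank specializations and hence also correspond to points of $\overline{\Sigma(\cV)}$. Thus the entire vanishing locus of $R(\cV)$ lands in $\overline{\Sigma(\cV)}$.

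Next I would show the vanishing locus avoids the open stratum $\Sigma(\cV)$. By the Definition preceding the statement, $\Delta_{I_i}(\mathbf{x})$ is exactly the minor of $M_\cV(\mathbf{x})$ on the columns $I_i$, i.e. the $I_i$-th Pl\"{u}cker coordinate of $\phi(M_\cV(\mathbf{x}))$. Each necklace element $I_i$ is, by construction, the Gale-minimal basis of the positroid in the cyclic order $<_i$, so it is a basis of $M(\cV)$; by the standard description of positroid varieties through their Grassmann necklaces \cite[Section 16]{Postnikov}, \cite{positroids13}, the open positroid variety is precisely the locus of points realizing the matroid $M(\cV)$, on which $\Delta_J \neq 0$ exactly for the bases $J$, so $\Delta_{I_i}$ is non-vanishing everywhere on $\Sigma(\cV)$. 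Consequently, if $\phi(M_\cV(\mathbf{x})) \in \Sigma(\cV)$ then $\Delta_{I_i}(\mathbf{x}) \neq 0$ for every $i$, so $R(\cV)(\mathbf{x}) \neq 0$; equivalently $V(R(\cV)) \cap \Sigma(\cV) = \emptyset$.

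Combining the two steps gives $V(R(\cV)) \subseteq \overline{\Sigma(\cV)} \setminus \Sigma(\cV)$, which is the claim. I expect the main obstacle to be the non-vanishing assertion of the middle step: one must pin down the correct characterization of the \emph{open} positroid variety (as the Gale/matroid stratum cut out by the Grassmann necklace, where exactly the basis Pl\"{u}cker coordinates survive) and then make sure the matroid-theoretic fact ``$I_i$ is a basis, so its Pl\"{u}cker coordinate is nonzero on the stratum'' transfers faithfully to the polynomial statement ``$\Delta_{I_i}(\mathbf{x}) \neq 0$'' via the identification of the minor polynomial with the Pl\"{u}cker coordinate. A secondary technical point, already flagged above, is the treatment of rank-deficient specializations, handled by the closure relation \eqref{eq:density}.
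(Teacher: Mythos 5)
Your overall route is the same as the paper's: first use $L(\cV) \subseteq \overline{\Sigma(\cV)}$ (from \cite{basisshapeloci}) to place the image of the vanishing locus inside the closed positroid variety, then argue that the Grassmann necklace minors cannot vanish on the open stratum $\Sigma(\cV)$, so the vanishing locus of $R(\cV)$ lands in $\overline{\Sigma(\cV)} \setminus \Sigma(\cV)$. The paper does exactly this, quoting Theorem 5.1 of \cite{Juggling} for the key fact that $\Sigma(\cV)$ is cut out inside $\overline{\Sigma(\cV)}$ by the non-vanishing of precisely the necklace minors $\Delta_{I_1}, \dots, \Delta_{I_k}$.

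However, your justification of that middle step contains a genuine error. You assert that ``the open positroid variety is precisely the locus of points realizing the matroid $M(\cV)$, on which $\Delta_J \neq 0$ exactly for the bases $J$.'' That is the description of the GGMS/matroid stratum of $M(\cV)$, not of the open positroid variety: the positroid stratification is strictly coarser than the matroid stratification (the paper itself leans on this distinction in the proof of Theorem \ref{res:polesonboundaries}), and $\Sigma(\cV)$ is the union of the matroid strata of \emph{all} matroids whose Grassmann necklace equals $\cI(\cV)$. In particular, at a point of $\Sigma(\cV)$ some basis Pl\"{u}cker coordinates of $M(\cV)$ may well vanish; what cannot vanish there are exactly the necklace coordinates $\Delta_{I_i}$. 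So the statement you need is true, but it does not follow from the characterization you give; taken literally, your premise is false and the step fails. The repair is simply to replace that sentence with the correct citation — the Knutson--Lam--Speyer result (Theorem 5.1 of \cite{Juggling}) that $\Sigma(\cV) = \{x \in \overline{\Sigma(\cV)} : \Delta_{I_i}(x) \neq 0 \text{ for all } i\}$ — after which your argument coincides with the paper's. A minor additional remark: your aside about rank-deficient specializations is vacuous rather than wrong; such specializations define no point of the Grassmannian at all, so one simply works inside $L(\cV)$ as the paper does, and nothing needs to be said about them.
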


\begin{proof}
Let $\cI(\cV) = \{I_1, I_2, \dots, I_k\}$ be the Grassmann necklace associated $\Sigma(\cV)$. From Definition \ref{dfn:RV}, 
\begin{displaymath}
R(\cV) = \mathrm{rad}\left(\prod_{i = 1}^{k} \Delta_{I_i}(\bf{x})\right).
\end{displaymath}
From Theorem 5.15 in \cite{Juggling}, the ideal of functions defining the variety $\overline{\Sigma(\cV)}$ is generated by
\begin{displaymath}
\{\Delta_I | I_b \not \leq_b I; \; b \in [k]\},
\end{displaymath}
where $\leq_b$ indicates the $b^{th}$ cyclic shift of the Gale ordering. From \cite[Theorem 5.1]{basisshapeloci}, $L(\cV) \subset \overline{\Sigma(\cV)}$ and hence every function vanishing on $\overline{\Sigma(\cV)}$ vanishes on $L(\cV)$. Theorem 5.1 in \cite{Juggling} implies that the open positroid variety $\Sigma(\cV)$ is defined as a subset of $\overline{\Sigma(\cV)}$ by requiring certain minors to be non-vanishing. Namely, 
\begin{displaymath}
\Delta_{I_1}, \Delta_{I_2}, \dots, \Delta_{I_k} \neq 0.
\end{displaymath}
We may consider the minors $\Delta_{I_i}$ either as Pl\"{u}cker coordinates, or as polynomials in variables defining $M_{\cV}$, i.e. $\Delta_{I_i}(\bf{x})$. Using the latter interpretation, we note that the polynomial $R(\cV)$ vanishes exactly where at least one of the $\Delta_{I_i}(\bf{x})$ vanishes. So, the vanishing set of $R(\cV)$ inside $L(\cV)$ is exactly the set of points in $L(\cV)$ lying outside $\Sigma(\cV)$, or $L(\cV) \setminus \Sigma(\cV)$. Noting that
\begin{displaymath}
L(\cV) \setminus \Sigma(\cV) \subset \overline{\Sigma(\cV)} \setminus \Sigma(\cV), 
\end{displaymath} gives the desired result.
\end{proof}

Next, we use Proposition \ref{res:vanishonbdny} to show that for Wilson loop diagrams, factors of $R(\VP)$ that vanish on codimension one boundaries of $\Sigma(\VP)$ have a locus that is dense in the corresponding positroid.

\begin{thm}\label{res:polesonboundaries}
Let $W=(\mathcal{P},[n])$ be an admissible Wilson loop diagram and let $L' \subset L(\VP)$ be the vanishing locus of a single factor of $R(\VP)$ inside $L(\VP)$ and suppose that $L'$ has codimension one in $L(\VP)$. Then, the space parameterized by by the restriction is dense in a positroid variety, i.e. $\overline{L'}= \overline{\Sigma'}$, where $\Sigma'$ is a positroid cell in the boundary of $\Sigma = \Sigma(\VP)$. \end{thm}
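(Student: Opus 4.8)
The plan is to argue abstractly through the positroid stratification, so as to avoid a case analysis on the type of factor. First I would record what Proposition \ref{res:vanishonbdny} already supplies: the proof there shows that the vanishing set of $R(\VP)$ inside $L(\VP)$ is exactly $L(\VP)\setminus\Sigma(\VP)$, so the vanishing locus $L'$ of any \emph{single} factor satisfies $L'\subset L(\VP)\setminus\Sigma(\VP)\subset\overline{\Sigma(\VP)}\setminus\Sigma(\VP)$. Since $\Sigma(\VP)$ is open in $\overline{\Sigma(\VP)}$, the boundary $\partial\Sigma:=\overline{\Sigma(\VP)}\setminus\Sigma(\VP)$ is closed, hence $\overline{L'}\subset\partial\Sigma$. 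The positroid stratification (\cite[Section 16]{Postnikov}, \cite{positroids13}) writes $\overline{\Sigma(\VP)}$ as a disjoint union of positroid cells $\Sigma_\tau$ below the top cell $\sigma$, with $\partial\Sigma$ the union of those with $\dim\Sigma_\tau<d$, where $d=\dim\Sigma(\VP)$. The goal is then to identify $\overline{L'}$ with the closure of one codimension one cell.

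The key technical input is that $\overline{L'}$ is irreducible of dimension exactly $d-1$. To establish irreducibility I would first note that $L(\VP)$ is irreducible: the matrix space $M_{\VP}$ is an affine space cut out by a fixed zero pattern, the full-rank locus is an open subset of it, and $L(\VP)$ is its image under the quotient $\phi$ of \eqref{eq:maps}, so it is the continuous image of an irreducible set. By Theorem \ref{res:prop alg gives rad}, every factor $f$ of $R(\VP)$ is either a single variable $x_{p,i}$ or a $2\times 2$ determinant $x_{q_r,e}x_{q_{r+1},e+1}-x_{q_r,e+1}x_{q_{r+1},e}$; in either case $f$ is an irreducible polynomial in the entries of $M_{\VP}$, so $\{f=0\}\cap M_{\VP}$ is irreducible, and therefore so is its image $L'$ under $\phi$ and its closure $\overline{L'}$. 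By hypothesis $L'$ has codimension one in $L(\VP)$, and since $\overline{L(\VP)}=\overline{\Sigma(\VP)}$ by \cite[Theorem 8.4]{basisshapeloci} we have $\dim L(\VP)=d$, whence $\dim\overline{L'}=d-1$.

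With irreducibility in hand the matching is immediate. Writing $\overline{L'}=\bigsqcup_\tau\bigl(\overline{L'}\cap\Sigma_\tau\bigr)$ along the strata, the piece of top dimension must be dense in the irreducible variety $\overline{L'}$ and must lie in a stratum $\Sigma_\tau$ of dimension at least $d-1$; as $\overline{L'}\subset\partial\Sigma$ forces $\dim\Sigma_\tau<d$, that stratum has dimension exactly $d-1$, i.e. it is a codimension one positroid cell $\Sigma'$ in the boundary of $\Sigma(\VP)$. Then $\overline{L'}=\overline{\overline{L'}\cap\Sigma'}\subset\overline{\Sigma'}$, and both are irreducible of dimension $d-1$, so $\overline{L'}=\overline{\Sigma'}$, which is the claim.

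I expect the main obstacle to be the justification that $\partial\Sigma$ exhibits no codimension one behaviour beyond the codimension one positroid cells—equivalently, that the generic point of the divisor $\overline{L'}$ really lands in a single codimension one cell rather than accumulating only on lower cells. This is precisely the content of $\overline{\Sigma(\VP)}$ being stratified by cells whose closures are lower-dimensional cells, so that the top of the boundary is the union of the codimension one cell closures; I would cite \cite{Postnikov, positroids13} for this structure. A secondary point to check explicitly is the irreducibility argument in the quadratic case, where one must confirm that the determinantal hypersurface meets the full-rank locus so that its image is genuinely $(d-1)$-dimensional; this follows from the codimension one hypothesis. If one instead wanted an explicit model of $\Sigma'$, for a single-variable factor $x_{p,i}$ one can take $\cV'$ to be $\cV$ with $i$ deleted from $V_p$ and verify via Theorem \ref{res:minimalrep} that $M_{\cV'}$ is a minimal representation of dimension $d-1$; the quadratic factors would require a row-reduction argument, which is why I prefer the stratification approach for the general statement.
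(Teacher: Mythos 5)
Your argument is correct in substance but follows a genuinely different route from the paper. The paper's proof splits into cases according to whether the factor is a single variable or a $2\times 2$ minor: in the linear case it reads off the matroid of a generic point of $L'$ directly from the degenerated variable-valued matrix and invokes \cite[Theorem 5.1]{basisshapeloci}; in the quadratic case it first reparameterizes via Lemma \ref{lem:simplifyR(W)} and then compares the resulting matrix with a Marsh--Rietsch parameterization of the boundary positroid. Your proof replaces all of this with an irreducibility-plus-dimension count: $L'$ is the image of an irreducible hypersurface in the affine matrix space, hence $\overline{L'}$ is an irreducible $(d-1)$-dimensional subvariety of $\overline{\Sigma}\setminus\Sigma$, and since the finitely many positroid strata of the boundary all have dimension $<d$, the dense piece of $\overline{L'}$ sits in a single $(d-1)$-dimensional cell $\Sigma'$, forcing $\overline{L'}=\overline{\Sigma'}$. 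This buys a uniform treatment of the linear and quadratic factors and avoids the reparameterization entirely; what it costs is two inputs you leave implicit and should state: (a) positroid varieties are irreducible (this is a theorem of Knutson--Lam--Speyer, i.e.\ \cite{Juggling}, and is essential for the final step ``contained, irreducible, equal dimension $\Rightarrow$ equal''), and (b) since the paper works over $\R$, one should say whether closures are Zariski or Euclidean and note that the real points of these varieties are Zariski dense, so that irreducibility and the dimension comparison behave as in the complex setting. Also, the ``main obstacle'' you flag at the end is in fact already disposed of by your own argument: you never need the boundary's codimension-one part to be covered by codimension-one cells, only that every stratum of $\partial\Sigma$ has dimension $<d$, which follows from irreducibility of $\overline{\Sigma}$. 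One mild drawback of your route relative to the paper's is that it does not identify \emph{which} positroid $\Sigma'$ arises (the paper's proof exhibits it as the matroid of a generic degenerate matrix), and that explicit identification is what gets reused in Section \ref{sec:poles}.
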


\begin{proof}
From Theorem \ref{res:prop alg gives rad}, $R(\VP)$ is the product of individual entries and two by two minors of $M_{\VP}$. Suppose first that $L'$ is a codimension one locus obtained by setting a single variable $x_{p,i}$ to zero. Then, $L'$ is the subset of the Grassmannian consisting of row spaces of matrices of the form $M_{\VP}$ where $x_{p,i}$ is set to zero and all other entries are evaluated at real numbers.  Recall from Definition \ref{dfn:matroid} and equation \eqref{eq:maps} that all generic points in $L'$ come from the same variable valued matrix, and thus define the same matroid. Furthermore, since the basis set of the matroid associated to $L'$ is contained in basis set for $M(\cV)$, the former is a boundary of the latter. Since the positroid stratification is coarser than the matroid stratification, a generic point in $L'$ is contained in the closure of some positroid $\Sigma' \subset \overline{\Sigma}$. Proposition \ref{res:vanishonbdny} implies that $L' \subset \overline{\Sigma} \setminus \Sigma$ and so $\Sigma' \neq \Sigma$. Since $L'$ was assumed to be codimension one, the matroid represented by a generic point in $L'$ is in fact $\Sigma'$. Then, \cite[Theorem 5.1]{basisshapeloci} implies that $\overline{L'} = \overline{\Sigma'}$, as desired. 

Next, suppose that $L'$ is a codimension one boundary obtained by setting a two by two minor of $M_{\mathcal{P}}(\mathbf{x})$ to zero. As in Lemma \ref{lem:simplifyR(W)}, $L'$ may be represented by reparameterizing this two by two minor in $M_{\VP}$, then setting one of the new parameters to zero. Let $M'$ denote the variable valued matrix obtained by this change of variables. As above, all generic points in $L'$ represent the same matroid, namely the matroid whose bases are the minors $M'$ which aren't identically zero. Proposition \ref{res:vanishonbdny} implies $L' \subset \overline{\Sigma} \setminus \Sigma$. Then, since $L'$ has codimension one in $\overline{\Sigma}$, the matroid represented by a generic point in $L'$ is a positroid $\Sigma'$. The open positroid $\Sigma'$ has a parameterization via a Marsh-Reitsch matrix $R$ in the same number of variables as $M'$. Then, as in the proof of \cite[Theorem 5.1]{basisshapeloci}, $M'$ and $R$ are generically related by a change of basis matrix and thus $\overline{L'} = \overline{\Sigma'}$.
\end{proof}

\begin{eg} \label{eg:codim2}
Note that not all factors of $R(\VP)$ correspond to codimension one subspaces of $\Sigma(\VP)$. 

For instance, consider the diagram in Example \ref{eg:closuresmatch}. Setting $x_{p,4}$ to $0$ gives the matrix \bas M' =
\begin{bmatrix}
x_{p,1} & x_{p,2} & 0 &0 & x_{p,5} & 0 \\
x_{q,1} & x_{q,2} & 0 & 0 & x_{q,5} & x_{q,6}
\end{bmatrix}, \eas which can be written as $M_{\cV}$ with $\cV = \{V_1 = \{ 1, 2, 5\}, V_2 = \{1, 2, 5, 6\}\}$. Note that while $ |\bigcup_{V \in \cV}V|  = 4$,
\begin{displaymath}
\max_{V \in  \cV (|V|)} + |\cV| -1  = 4 + 2 - 1.
\end{displaymath}
Therefore the third equivalent statement of Theorem \ref{res:minimalrep} does not hold, and thus $M_{\cV}$ is not a minimal representation of a $5$ dimensional subspace of $\Grall(k,n)$. Since, by display (1) of \cite{basisshapeloci}, $5$ was an upper bound on the dimension of $L(\cV)$, setting $x_{p,4}$ must correspond to a higher codimension subspace of $\Sigma(\VP)$. In fact, direct calculation shows that this factor lies on a codimension 2 locus.

\end{eg}

We show in Section \ref{sec:boundarysanspoles} that the converse of Theorem \ref{res:polesonboundaries} does not hold. Namely, not every codimension one boundary of $\Sigma(\VP)$ contains the vanishing loci of a factor of $R(\VP)$.

\subsection{Cluster algebras, frozen variables, Grassmann necklaces \label{sec:clusteralgebras}}

We conclude this section with a brief aside, noting that the polynomial $\prod_{I_i \in \cI}I_i$ has appeared in relation to a cluster algebra associated to the positroid $\Sigma$ where $\cI= \{I_1,I_2, \dots, I_k\}$ be the associated Grassmann necklace \cite{galashinlam19, SS-BW}. Note that there is a difference in the Grassmann necklace presented in loc. cit. and this work. Namely, in  in that work, the Grassmann necklace is a written in terms of the Pl\"{u}cker coordinates coming from $\Grall(k,n)$.  In this paper, we fix a (minimal) representation, $M_{\cV}$, of $\Sigma(\cV)$. Then, instead of the Pl\"{u}cker coordinates $I_i$, we consider the minors $\Delta_{I_i}$ of $M_{\cV}$ as polynomials in $x_{p, i}$. 

In particular, in \cite{galashinlam19}, the authors consider $\cI^*$ to be the reverse Grassmann necklace associated to $\Sigma$. That is, $I^{\ast}_j$ is that maximal set in the $j^{th}$ cyclic shift of the Gale order on sets such that the Pl\"{u}cker coordinate $\Delta_{I^*_j}$ is non-vanishing on $\Sigma(\cV)$. Following Chapter 5 of \cite{Juggling} replacing Schubert varieties in the Grassmannian with reverse Schubert varieties, one sees $\Sigma$ can equivalently be defined as an intersection of reverse Schubert varieties. In \cite{SS-BW}, the authors show that each positroid variety defines a cluster algebra, and that the prime factors of $\prod_{I_i^\ast \in \cI^\ast}I_i^\ast$ are exactly the frozen variable of said cluster algebra. 

In the notation of this, paper, given a minimal representation $M_{\cV}$, we may define $\cI^\ast(\cV)$ as the reverse Grassmann necklace of the associated positroid. Define
\begin{displaymath}
R^{\ast}(\cV) = \mathrm{rad}\left(\prod_{i = 1}^{k} \Delta_{I^{\ast}_i}(\bf{x})\right)
\end{displaymath}
where $\bf{x}$ is the set of variables defining $M_{\cV}$.

In Theorem \ref{res:reversepoly}, we show that the polynomial $R^{\ast}(\cV)$ and $R(\cV)$ are equal. In this manner, we show that the polynomials $R(\cV)$ defined in this paper generate the radical ideal of the ideal generated by the product of the frozen variables of the associated cluster algebra, expressed in the coordinate system $\cV$. In particular, for $\VP$ defined by a Wilson loop diagram, the locus of the spurious poles (i.e. the vanishing locus of $R(\VP)$) is the vanishing locus of the frozen variables of the cluster algebra defined by $\Sigma(\VP)$. 

In other words, the spurious poles of the Wilson loop diagrams are intricately connected to the geometry of positroid varieties in ways that require further exploration.

\begin{thm}\label{res:reversepoly}
The two polynomials $R^{\ast}(\cV)$ and $R(\cV)$ are the same.
\end{thm}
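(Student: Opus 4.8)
The plan is to show that the two products of minors underlying $R(\cV)$ and $R^{\ast}(\cV)$ define the same hypersurface in the affine space of matrices of the form $M_{\cV}$, and then to deduce the equality of their radicals from the Nullstellensatz. Write
\[
P \;=\; \prod_{i} \Delta_{I_i}(\mathbf{x}), \qquad P^{\ast} \;=\; \prod_{i} \Delta_{I^{\ast}_i}(\mathbf{x}),
\]
so that $R(\cV) = \mathrm{rad}(P)$ and $R^{\ast}(\cV) = \mathrm{rad}(P^{\ast})$ by Definition \ref{dfn:RV}. Since the radical of a product is the square-free product of its distinct irreducible factors and satisfies $(\mathrm{rad}(P)) = \sqrt{(P)} = I(V(P))$ over $\C$, it suffices to prove the equality of complex vanishing loci $V(P) = V(P^{\ast})$; the Nullstellensatz then yields $\sqrt{(P)} = \sqrt{(P^{\ast})}$, so the two square-free generators agree up to a scalar.

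The geometric input is the pair of dual descriptions of the open positroid $\Sigma = \Sigma(\cV)$ inside its closure. By Example \ref{eg:variablevaluesasmatroids} and Definition \ref{dfn:matroid}, $M_{\cV}$ represents the positroid matroid $M(\cV)$, so $\Delta_I(\mathbf{x}) \equiv 0$ precisely when $I$ is a non-basis, i.e. when $I_b \not\leq_b I$ for some $b$; since these are exactly the generators of the ideal of $\overline{\Sigma}$ (Theorem 5.15 of \cite{Juggling}, as used in Proposition \ref{res:vanishonbdny}), every full-rank $M_{\cV}(\mathbf{x}^{0})$ maps into $\overline{\Sigma}$. Theorem 5.1 of \cite{Juggling} describes $\Sigma$ as the locus of $\overline{\Sigma}$ on which $\Delta_{I_1}, \dots, \Delta_{I_n}$ are all nonzero, and the reverse-Schubert version of the same development (Section \ref{sec:clusteralgebras}; cf. \cite{galashinlam19, SS-BW}) describes $\Sigma$ equally as the locus on which $\Delta_{I^{\ast}_1}, \dots, \Delta_{I^{\ast}_n}$ are all nonzero.

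With these in hand, the equality of loci is a pointwise check. Suppose $P^{\ast}(\mathbf{x}^{0}) \neq 0$; then some maximal minor is nonzero, so $M_{\cV}(\mathbf{x}^{0})$ has rank $k$ and $\phi(\mathbf{x}^{0}) \in \overline{\Sigma}$, and since all reverse-necklace minors are nonzero we get $\phi(\mathbf{x}^{0}) \in \Sigma$, whence all forward-necklace minors are nonzero and $P(\mathbf{x}^{0}) \neq 0$. This gives $V(P) \subseteq V(P^{\ast})$, and the symmetric argument gives the reverse inclusion; at rank-dropping points every maximal minor, and hence both products, vanishes, so these cause no discrepancy. Therefore $V(P) = V(P^{\ast})$, and the conclusion follows as in the first paragraph.

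The main obstacle is the reverse-Schubert half of the geometric input: one must transcribe the Schubert-variety arguments of \cite[Chapter 5]{Juggling} into their reverse-Schubert counterparts and verify that $I^{\ast}_i$, defined as the Gale-maximal nonvanishing basis in the $i$-th cyclic shift, is exactly the index whose reverse Schubert condition becomes active, which is a matter of careful cyclic-shift and Gale-order bookkeeping. A secondary point is the real-versus-complex passage: carrying out the Nullstellensatz step over $\C$ is harmless because both radicals have real (indeed integer) coefficients, so an equality up to a complex scalar forces a real scalar, and the standard normalization of a radical as the product of its distinct irreducible factors then gives literal equality over $\R$ (in the Wilson loop setting these factors are the single variables and $2\times 2$ determinants of Theorem \ref{res:prop alg gives rad}).
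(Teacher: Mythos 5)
Your proposal is correct and takes essentially the same route as the paper: both arguments identify the vanishing loci of $\prod_i \Delta_{I_i}(\mathbf{x})$ and $\prod_i \Delta_{I_i^{\ast}}(\mathbf{x})$ with $\overline{\Sigma(\cV)} \setminus \Sigma(\cV)$, using the forward and reverse (opposite-Schubert) descriptions of the open positroid variety from Section 5 of \cite{Juggling} together with the fact that realizations of $M_{\cV}$ land in $\overline{\Sigma(\cV)}$, and then conclude that two radical polynomials cutting out the same locus coincide. Your explicit passage to $\C$ for the Nullstellensatz, the scalar normalization, and the remark about rank-deficient points tighten a step the paper states more loosely, but this is a refinement of the same argument rather than a different one.
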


\begin{proof}
This theorem follows from the fact that $R^{\ast}(\cV)$ and by $R(\cV)$ are both radical polynomials defining the same subvariety of the positroid variety $\Sigma(\cV)$. 

From Section 5 of \cite{Juggling}, the open positroid variety $\Sigma(\cV)$ is defined as a subset of $\overline{\Sigma(\cV)}$ by $\Delta_I \neq 0$ for all $I \in \mathcal{I}$, where $\mathcal{I}$ is the Grassmann necklace of $\Sigma(\cV)$. So, $\prod_{I \in \mathcal{I}} \Delta_I$ defines the subvariety $(\overline{\Sigma(\cV)} \setminus \Sigma(\cV)) \subset \overline{\Sigma(\cV)}$, the boundary of the open positroid inside the closed positroid. 

From Theorem 5.1 in \cite{basisshapeloci}, we know that $L(\cV) \subset \overline{\Sigma(\cV)}$. Let $V$ be the subset of $L(\cV)$ that is the image of the vanishing loci of $R(\cV)$ under the map in \eqref{eq:maps}. That is, $V = L(\cV) \cap \{\Delta_{I_i}(\mathbf{x}) = 0 | I_i \in \mathcal{I}\}$. Then,

\begin{equation} \label{eqn:rw_vanishes}
\begin{split}
V & = V \cap \overline{\Sigma(\cV)} \\
& = \big(L(\cV) \cap \{\Delta_{I_i} = 0 |  I_i \in \mathcal{I}\}\big) \cap \overline{\Sigma(\cV)} \\
& = L(\cV) \cap \big(\overline{\Sigma(\cV)} \setminus \Sigma(\cV)\big) \\
& =  L(\cV) \setminus \Sigma(\cV).
\end{split}
\end{equation}

Following Section 5 of \cite{Juggling} and replacing Schubert varieties with opposite Schubert varieties, $\Sigma(\VP)$ is similarly defined as a subset of $\overline{\Sigma(\cV)}$ by $\Delta_{I^{\ast}} \neq 0$ for all $I^{\ast} \in \mathcal{I}^{\ast}$. So, $\prod_{I^{\ast} \in \mathcal{I}^{\ast}} \Delta_{I^{\ast}}(\bf{x})$ also defines the subvariety $(\overline{\Sigma(\cV)} \setminus \Sigma(\cV)) \subset \overline{\Sigma(\cV)}$. Call $V^\ast$ the image of the vanishing loci of $R^\ast(\cV)$ under the map in \eqref{eq:maps}.

Following (\ref{eqn:rw_vanishes}), we see that $V^* = V$. That is, the images of the polynomials $R(\cV)$ and $R^{\ast}(\cV)$ vanish on the same set. By Definition \ref{dfn:RV} $R(\cV)$ is the polynomial generating the radical of the ideal generated by $\prod_\mathcal{I} \Delta_{I_i}(\bf{x})$. The polynomial $R^{\ast}(\cV)$ is radical by definition. Since  $R^{\ast}(\cV)$ and $R(\cV)$ are both radical polynomials defining the same variety, $R^{\ast}(\cV) = R(\cV)$. 
\end{proof}

\section{The poles of Wilson loop diagrams \label{sec:poles}}

Finally, we are ready to show that the codimension one singularities appearing in the integrals $I(\VP)$ cancel in the sum given in \eqref{eq:treelevelamplitude}. We do this by comparing positroid varieties with common codimension one boundaries that contain the vanishing loci of the polynomials $R(\VP)$. We first show that there are codimension one boundaries of a positroid variety $\Sigma(\cV)$ that do not contain the vanishing loci of any factor of $R(\cV)$ with codimension one. This phenomena occurs because while $L(\cV)$ is dense in $\overline{\Sigma(\cV)}$, $L(\cV)$ has empty intersection with certain boundary positroids. Since the vanishing set of $R(\VP)$ is a subset of $L(\cV)$, it will not contain these boundary positroids which do not intersect $L(\cV)$.

\subsection{Boundaries without poles \label{sec:boundarysanspoles}}

For a positroid variety $\Sigma$, every boundary corresponds to setting a (set of) elements of the Grassmann necklace to 0. Indeed one positroid variety is said to be in the boundary of another if the set of non-vanishing Pl{\"u}cker coordinates of the first are contained in the second. However, this property does not hold when one considers the minimal representations of $\Sigma$. Specifically, if $M_\cV$ is a minimal representation of $\Sigma(\cV)$, then not every codimension one boundary of $\Sigma(\cV)$ contains the vanishing locus of a factor of $R(\cV)$.

In order to reach a boundary of a positroid cell, one must send certain minors to 0 while not causing any previously vanishing minors to become positive. Sending parameters to $0$, which causes the polynomial $R(\cV)$ to vanish, is certainly one way to do this. However, the boundary of $\Sigma(\cV)$ contains other positroid cells which do not necessarily intersect the vanishing set of $R(\cV)$. In this section, we give one criterion for identifying boundaries of positroid cells that do not contain the loci of these vanishing factors. 

Here, we give a characterization of a class of variable valued matrices for which this type of boundary occurs. We consider a variable valued matrix, $M_{\cV}$, with certain conditions on its cyclic flats. Let $V, W \subset \{1, \ldots n\}$ define cyclic flats of $M_{\cV}$ such that:
\begin{enumerate}
\item neither flat has full rank ($\rk (V), \; \rk(W) <k$),
\item  the sets $V$, $W$ and $V \cup W$ are cyclic intervals in $[n]$ and
\item one cyclic flat is not contained in the other (e.g. $W \not \subset V$).
\end{enumerate}
Without loss of generality, suppose that the ranks of $V$ and $W$ are related as follows: $\rk(V \setminus W) \geq \rk(W \setminus V)$. That is, the flat $V$ has non-zero entries in at least as many rows outside of the rows $\cV_W$ as vice versa. Then, construct the variable valued matrix $M_{\cV'}$ from $M_\cV$ by zeroing out the variable entries of $(M_\cV)|_{W}$ and inserting an equal number of non-zero entries in the rows $\cV_V$  and columns $V \cup W$ such that circuits in $V$ and $W$ are preserved. In other words, for $\cV = \{V_1, \ldots V_n\}$ the set $\cV' = \{V'_1, \ldots V'_n\}$ is constructed as follows:
\begin{enumerate}
\item for every $V_i \in \cV_W$, write $V'_i =V_i \setminus W$;
\item the number of non-zero entries in $\cV'_V$ increases from that in $\cV_V$ by the number removed from $\cV_W$:
\bas
\sum_{V'_i \in \cV'_V} |V'_i| = \sum_{V_i \in \cV_V} |V_i| + \sum_{V_j \in \cV_W} |V_j \cap W|;
\eas
\item if $V_i \not \in \cV_{V\cup W}$ then $V_i = V'_i$.
\end{enumerate}
An example of this type of boundary pair is given in Example \ref{eg:strangeboundary} below.

\begin{prop}\label{res:moving variables}
Let $M_\cV$ and $M_{\cV'}$ be variable valued matrices as above, and let both have the same rank. Then $M_{\cV'}$ defines a positroid, $\Sigma(\cV')$, that lies in the boundary of the positroid $\Sigma(\cV)$.
\end{prop}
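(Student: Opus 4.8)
The plan is to prove two things: that the transversal matroid $M(\cV')$ is again a positroid, and that $\overline{\Sigma(\cV')} \subseteq \overline{\Sigma(\cV)}$ with strict inclusion, so that $\Sigma(\cV')$ is a genuine boundary cell. Throughout I would exploit the transversal description of these matrices from Example~\ref{eg:variablevaluesasmatroids}: since the entries are algebraically independent, a $k$-subset of columns $B$ is a basis of $M(\cV)$ (respectively $M(\cV')$) exactly when the bipartite graph recording the support of $M_\cV$ (respectively $M_{\cV'}$) has a perfect matching between $B$ and the $k$ rows. Recall also the boundary criterion stated at the start of this section: one positroid cell lies in the boundary of another precisely when the non-vanishing Pl\"ucker coordinates of the first --- that is, the bases of its matroid --- form a subset of those of the second. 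This reduces the boundary inclusion to the matroid statement $\mathcal{B}(M(\cV')) \subseteq \mathcal{B}(M(\cV))$, equivalently to $\rk_{M'}(S) \le \rk_M(S)$ for every $S \subseteq [n]$.

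The heart of the argument, and the main obstacle, is this rank inequality. The support of $M_{\cV'}$ is obtained from that of $M_\cV$ by deleting the entries in the columns $W$ on the rows $\cV_W$ and inserting an equal number of entries in the columns $V \cup W$ on the rows $\cV_V$. Given a basis $B$ of $M(\cV')$ with matching $\mu'$, the edges of $\mu'$ already present in $M_\cV$ are harmless, so only the inserted edges must be rerouted to edges of $M_\cV$ covering the same columns. The hypotheses are exactly what make the rerouting possible: because $V$, $W$, and $V\cup W$ are cyclic intervals and the construction preserves the circuits inside $V$ and inside $W$, the local matching data of $M$ and $M'$ on these intervals agree, and the inequality $\rk(V\setminus W) \ge \rk(W\setminus V)$ guarantees that the rows $\cV_V$ have enough reach outside $\cV_W$ to absorb, by an augmenting-path exchange, the load that $\mu'$ placed on the inserted entries. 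The cyclic-interval hypotheses are used to confine the whole analysis to the columns $V\cup W$ and to keep the rank bookkeeping self-consistent; carrying this out carefully is where essentially all the work lies.

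Next I would check that $M(\cV')$ is a positroid. The most economical route is to produce a totally nonnegative representative directly: starting from a totally nonnegative matrix realizing $\Sigmapos(\cV)$, I would run the ``zero out $W$, turn on $V\cup W$'' move as a one-parameter degeneration inside the closed cell $\overline{\Sigmapos(\cV)}$, scaling the deleted entries to zero while turning the new entries on. Since total nonnegativity is a closed condition the limit is totally nonnegative, and for a generic choice of surviving parameters its matroid is exactly $M(\cV')$; hence $M(\cV')$ has a nonnegative representative and is a positroid. (This degeneration also re-proves the inclusion $\Sigma(\cV') \subseteq \overline{\Sigma(\cV)}$ geometrically.) Alternatively, one can argue combinatorially, verifying that every flacet of $M(\cV')$ is a cyclic interval and invoking the characterization of positroids through their cyclic flats; the conditions that $V$, $W$, $V\cup W$ be cyclic intervals and that circuits be preserved are precisely tailored to keep the flacets of $M(\cV')$ cyclic intervals.

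Finally, I would show the inclusion is strict, so that $\Sigma(\cV')$ is a proper boundary cell. By design the move strictly lowers the rank of the cyclic flat $W$: deleting its support on $\cV_W$ is not compensated on $W$ but on $V$, so $\rk_{M'}(W) < \rk_M(W)$ while the total rank stays $k$. Hence $M(\cV') \neq M(\cV)$, and combined with $\mathcal{B}(M(\cV')) \subseteq \mathcal{B}(M(\cV))$ this gives strict containment of basis sets and therefore $\Sigma(\cV') \neq \Sigma(\cV)$. When $M_\cV$ is minimal --- as for every $M_{\VP}$ coming from a Wilson loop diagram --- the same phenomenon appears as a failure of condition~(3) of Theorem~\ref{res:minimalrep} for $M_{\cV'}$, exactly as in Example~\ref{eg:codim2}, so the codimension of the boundary can be read off from the entry count. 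In all cases $\Sigma(\cV')$ lies strictly inside $\overline{\Sigma(\cV)}$, i.e.\ in its boundary, as claimed.
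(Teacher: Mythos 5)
Your overall reduction --- boundary containment equals basis containment $\mathcal{B}(M(\cV')) \subseteq \mathcal{B}(M(\cV))$, plus strictness --- is exactly the paper's, but both substantive steps have problems. First, the containment itself is never proved: you describe an augmenting-path rerouting of the inserted matching edges and then concede that carrying it out ``is where essentially all the work lies.'' That deferred step \emph{is} the proposition. The paper closes it with a short circuit argument rather than a matching argument: if a basis $B'$ of $M(\cV')$ were dependent in $M(\cV)$, the offending circuit would have to sit inside $B' \cap (V \cup W)$ (the rows outside $\cV_{V\cup W}$ are untouched by the construction), and a circuit of $M_\cV|_V$ or of $M_\cV|_W$ becoming independent in $M_{\cV'}$ contradicts the stipulation that circuits in $V$ and $W$ are preserved. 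Some argument of this kind --- or your rerouting, actually executed --- is required; as written there is nothing to check.

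Second, your strictness argument is wrong as stated. You claim $\rk_{M'}(W) < \rk_M(W)$, but the construction explicitly preserves the circuits inside $W$, which forces $\rk_{M'}(S) = \rk_M(S)$ for every $S \subseteq W$: the deleted support of $W$ on the rows $\cV_W$ is replaced by support in the same columns on the rows $\cV_V$. The quantity that genuinely drops is $\rk(V \cup W)$: in $M_\cV$ it equals $\rk(V) + s$ with $s>0$ (since $W \not\subseteq V$), while in $M_{\cV'}$ it equals $\rk(V)$ because all of $W$'s support now lies in rows already spanned by $V$. This is precisely the comparison the paper runs through the Grassmann necklace element $I_v$ versus $I_v'$. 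A further caution on your degeneration: a limit of totally nonnegative matrices is totally nonnegative, so this could certify that $M(\cV')$ is a positroid, but the intermediate matrices of your path (old and new entries simultaneously non-zero) generically have strictly more non-vanishing minors than points of $\overline{\Sigmapos(\cV)}$ are allowed to have, so the path need not stay in the closed cell, and the parenthetical claim that the degeneration ``re-proves'' $\Sigma(\cV') \subseteq \overline{\Sigma(\cV)}$ does not stand on its own.
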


\begin{proof} 
We prove the theorem by showing that the Grassmann necklace associated to $M_{\cV'}$, $\cI(\cV')$, is different from the Grassmann necklace defined by $M_{\cV}$, $\cI(\cV)$, and that the basis set of $M_{\cV}$ contains the basis set of $M_{\cV'}$. In this way, we show that $\Sigma(\cV') \neq \Sigma(\cV)$, and that  $\Sigma(\cV')$ lies in the boundary of $\Sigma(\cV)$. 

First, we compare the Grassmann necklaces defining $\Sigma(\cV)$ and $\Sigma(\cV')$. We may read these directly off the matrices $M_{\cV}$ and $M_{\cV'}$. In particular, we wish to show that that $\cI(\cV) \neq \cI(\cV')$. To see this, first note that $\rk(V)$ (resp. $\rk(W)$) $>0$. If this were not true, then whichever flat had rank $0$ would be contained in the other flat, which contradicts our hypothesis. Since $V \cup W$ is a cyclic interval (in both $M_\cV$ and $M_{\cV'}$), let $v$ denote the first element in the cyclic interval $V\cup W$ such that $\rk(v) >0$. Let $I_v$ and $I_v'$ be the Grassmann necklace element starting at the column $v$ in $\cI(\cV)$ and $\cI(\cV')$ respectively. We have that $v \in I_v$ and $v \in I_v'$. 

Without loss of generality, assume that $V$ precedes $W$ in the cyclic interval $V \cup W$. Write $I_v = \{v = v_1, \ldots, v_r, w_1, \ldots w_s, u_1, \ldots u_t\}$, where $v_i \in V$, $w_i \in W$ and $u_i \not \in (V \cup W)$. From this, we can see that $\rk(V) = r$ and $\rk(V\cup W) = r+s$ in $M_\cV$. Furthermore, since $W$ is not a subset of $V$, we know that $s > 0$. 

Suppose, for contradiction, we have that $\cI(\cV) = \cI(\cV')$. Then we would have $I_v = I'_v$. But this would imply that $\rk(V \cup W)$ in $M_{\cV'}$ is $r+ s$. However, by construction, in $M_{\cV'}$, $\rk(V) = \rk(V\cup W)$, i.e. $r = r+s$ which contradicts the fact that $s > 0$.

To see that basis set $\cB$ of $M_\cV$ contains the basis set $\cB'$ of $M_{\cV'}$, note that any $B \in \cB$ that does not intersect $V \cup W$ is also a basis set of $\cB'$ and vice versa. Let $B' \in \cB'$ be a basis set of $M_{\cV'}$ intersecting $V\cup W$. Partition $B'$ into two sets, those elements in $V \cup W$  and those not: $B' = B'_1 \cup B'_2$ with $B'_1 \subset V \cup W$ and $B'_2 \cap (V \cup W) = \emptyset$. If $B'$ were not a basis set of $M(\cV)$ (i.e. $B' \not \in \cB$), then $B'_1$ would contain a circuit in $M_\cV$. But this would imply that a set of columns that formed a circuit in $M_{\cV}|_V$ or $M_\cV|_W$ became an independent set in $M_{\cV'}$ which violates the construction.
\end{proof}

Next we show that for pairs of matrices $M_{\cV}$ and $M_{\cV'}$ as above, the positroid cell $\Sigma(\cV')$, which lies on the boundary of the cell $\Sigma(\cV)$, does not contain the locus of a factor of $R(\cV)$. The following Theorem considers the minors $I_v$ and $I'_v$ defined in Proposition \ref{res:moving variables}. 

\begin{thm}
Given a matrix $M_{\cV}$ and $M_{\cV'}$ as above, the polynomial $R(\cV)$ does not vanish on $\Sigma(\cV')$. 
\end{thm}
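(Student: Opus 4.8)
The plan is to prove the stronger statement that the open cell $\Sigma(\cV')$ is disjoint from $L(\cV)$. Indeed, by the proof of Proposition \ref{res:vanishonbdny} the vanishing locus of $R(\cV)$ inside the Grassmannian is contained in $L(\cV)$ (it is exactly $L(\cV)\setminus\Sigma(\cV)$), so once $\Sigma(\cV')\cap L(\cV)=\emptyset$ is known, no point of $\Sigma(\cV')$ can lie on the vanishing locus of $R(\cV)$, which is precisely the assertion. I would therefore argue by contradiction: assume $p\in\Sigma(\cV')\cap L(\cV)$ and write $p=\phi(M_\cV(\mathbf a))$ for a real specialization $\mathbf a$ of the entries of $M_\cV$. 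Throughout I write $\rk_p(S)$ for the rank of the columns $S$ of a matrix representing $p$.

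The first step is to read off exact ranks from membership in the open positroid variety. Since $V$, $W$ and $V\cup W$ are cyclic intervals (condition (2)), the closed positroid variety $\overline{\Sigma(\cV')}$ imposes the upper bounds $\rk_p(S)\le \rk_{M(\cV')}(S)$ for $S\in\{V,W,V\cup W\}$, while the non-vanishing on $\Sigma(\cV')$ of the necklace minors supplies the matching lower bounds $\rk_p(S)\ge \rk_{M(\cV')}(S)$ (a necklace element starting at the left end of $S$ restricts to an independent subset of $S$ of size $\rk_{M(\cV')}(S)$). Hence $\rk_p(S)=\rk_{M(\cV')}(S)$ for these three intervals. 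Because the construction preserves the circuits of $V$ and of $W$, this gives $\rk_p(V)=r$ and $\rk_p(W)=r_W:=\rk_{M(\cV)}(W)$, whereas $\rk_p(V\cup W)=\rk_{M(\cV')}(V\cup W)=r<r+s=\rk_{M(\cV)}(V\cup W)$. This is exactly where the minors $I_v$ and $I'_v$ of Proposition \ref{res:moving variables} enter: $I_v$ certifies $\rk_{M(\cV)}(V\cup W)=r+s$ and $I'_v$ certifies $\rk_{M(\cV')}(V\cup W)=r$, with $s>0$.

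The second step converts the equality $\rk_p(V)=\rk_p(V\cup W)=r$ into a support statement about the specialization. In $M_\cV$ the columns indexed by $V$ are supported only on the rows $\cV_V$, so the span of the $V$-columns of $M_\cV(\mathbf a)$ lies inside the coordinate subspace $\R^{\cV_V}$; the rank equality forces the $W$-columns into this same span, so in the specialization $\mathbf a$ every $W$-column must vanish on the rows $\cV_W\setminus\cV_V$. These are precisely the $s$ rows carrying $w_1,\dots,w_s$ in the basis $I_v$. Consequently the $W$-columns of $M_\cV(\mathbf a)$ are supported only on $\cV_V\cap\cV_W$, whence $\rk_p(W)$ equals the rank of the $W$-columns restricted to the rows $\cV_V\cap\cV_W$.

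The contradiction, and the main obstacle, is to show that this restricted rank is strictly smaller than $r_W$. The rows $\cV_W\setminus\cV_V$ are exactly the rows that realize the rank jump from $V$ to $V\cup W$, and within $V\cup W$ their only nonzero entries lie in $W$; I would prove that they are therefore essential to $\rk(W)$ by exhibiting a partial transversal of the $W$-columns of size $r_W$ that saturates $\cV_W\setminus\cV_V$, using the matching between $\{w_1,\dots,w_s\}$ and $\cV_W\setminus\cV_V$ coming from $I_v$ together with the minimality (transversality) of $M_\cV$ from Theorem \ref{res:minimalrep}. Deleting these rows then drops the rank strictly below $r_W$, so $\rk_p(W)<r_W$, contradicting the first step. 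This yields $\Sigma(\cV')\cap L(\cV)=\emptyset$ and finishes the proof. The delicate point throughout is that $L(\cV)$, unlike $\overline{\Sigma(\cV)}$, cannot realize the collapsed rank profile of $\cV'$ on the cyclic flat $W$: reaching $\rk(V\cup W)=r$ within the fixed support pattern of $M_\cV$ necessarily kills the contribution of the rows $\cV_W\setminus\cV_V$ to $W$, which $\Sigma(\cV')$ does not permit.
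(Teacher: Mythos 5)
Your proposal takes a genuinely different route from the paper: the paper works with the three necklace minors $I_v$, $I'_v$, $I_w$ from Proposition \ref{res:moving variables} and runs a case analysis on the ways $\Delta_{I_v}$ can vanish on $L(\cV)$, checking that each case contradicts the minor profile of $\Sigma(\cV')$; you instead use the fact that an open positroid variety is characterized by its cyclic-interval ranks to pin down the ranks of $V$, $W$ and $V\cup W$ at a hypothetical point $p\in \Sigma(\cV')\cap L(\cV)$, and then exploit the fixed support pattern of $M_\cV$. The reduction to showing $\Sigma(\cV')\cap L(\cV)=\emptyset$ is the same as the paper's, and your first two steps are correct: membership in $\Sigma(\cV')$ forces the rank of $V\cup W$ at $p$ to equal the rank of $V$ at $p$ (namely $r$) and the rank of $W$ at $p$ to equal $r_W$, and since the $V$-columns of a specialization $M_\cV(\mathbf a)$ lie in the coordinate subspace $\R^{\cV_V}$, the first equality forces every specialized $W$-column to vanish on the rows $\cV_W\setminus\cV_V$.

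The gap is in your last step. From the existence of \emph{one} maximum partial transversal of the $W$-columns (size $r_W$) that saturates $\cV_W\setminus\cV_V$, you conclude that deleting those rows drops the rank strictly below $r_W$. That inference is false: deleting a set of rows $X$ lowers the matching number if and only if \emph{every} maximum matching meets $X$, not if \emph{some} maximum matching saturates $X$. For example, in a pattern with rows $\{1,2,3\}$ and two columns, $a$ supported on rows $\{1,2\}$ and $b$ supported on rows $\{2,3\}$, the matching $\{a\text{--}1,\, b\text{--}3\}$ is maximum and saturates row $3$, yet deleting row $3$ leaves the equally large matching $\{a\text{--}1,\, b\text{--}2\}$. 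What your contradiction actually requires is that the $W$-columns supported on the rows $\cV_V\cap\cV_W$ have rank strictly less than $r_W$, i.e.\ that \emph{no} maximum partial transversal of the $W$-columns avoids $\cV_W\setminus\cV_V$. The clean way to obtain this (and also the nonemptiness of $\cV_W\setminus\cV_V$ that you use) is to prove that the cyclic flats in play have full row rank on their supporting rows, $r_W=|\cV_W|$ and $r=|\cV_V|$: then every maximum matching saturates all of $\cV_W$, and killing the rows $\cV_W\setminus\cV_V$ forces the rank of $W$ at $p$ to be at most $|\cV_W|-|\cV_W\setminus\cV_V|<r_W$. This property does hold in the Wilson loop setting, where cyclic flats are propagator flats $F(P)$ with $\rk F(P)=|P|=|\cV_{F(P)}|$ \cite{Wilsonloop}, and the paper itself leans on it implicitly (e.g.\ when asserting $\rk(V)=\rk(V\cup W)$ in $M_{\cV'}$), but your proposal neither states nor proves it, and the ``saturating transversal'' argument you sketch cannot substitute for it.
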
 

\begin{proof}
Consider the sets $I_v$ and $I'_v$ defined above, elements of the Grassmann necklaces $\cI(\cV)$ and $\cI(\cV')$ respectively. So, $I_v =  \{v = v_1, \ldots, v_r, w_1, \ldots w_s, u_1, \ldots u_t\}$, and $I_v' =  \{v = v_1, \ldots, v_r, u_1, \ldots u_{t+s}\}$ where $v_i \in V$, $w_i \in W$ and $u_i \not \in (V \cup W)$. Furthermore, let $I_w$ be the Grassmann necklace element of $\cI(\cV)$ that is minimal in the $<_w$ order. That is, $I_w =  \{ w = w_1, \ldots w_{s+m}, u_1, \ldots u_t, \ldots \}$. 

Note that since the the cycles of $V$ and $W$ are the same in both $M_{\cV}$ and $M_{\cV'}$, $I_w$ is also an element of the Grassmann necklace $\cI(\cV')$. That is, the set $\{ w_1, \ldots w_s\}$ is still the lexicographically minimal set of rank $s$ in $M_{\cV'}$ in the $<_w$ order. Note that $w$ need not be the first element of the cycle $W$ in the $<_v$ order. 

Note that the minors $\{\Delta_{I_v}, \Delta_{I_v'}, \Delta_{I_w}\}$ are non-zero in $\Sigma(\cV)$. Furthermore, setting the minor $\Delta_{I_v}$ to $0$ in $L(\cV)$, has several implications about $I_v'$ and $I_{w_1}$, depending on how this is done: 
\begin{enumerate} 
\item if $\Delta_{I_v}$ vanishes because $\{v_1, \ldots, v_r\}$ is not independent, then $\Delta_{I_v'} = 0$; 
\item if $\Delta_{I_v}$ vanishes because $\{w_1, \ldots, w_s\}$ is not independent, then $\Delta_{I_w} = 0$;
\item if $\Delta_{I_v}$ vanishes because $\{u_1, \ldots, u_t\}$ is not independent, then both $\Delta_{I_v}$ and $\Delta_{I_w} = 0$;
\item if $\Delta_{I_v}$ vanishes because $\{w_1, \ldots, w_s, u_1, \ldots, u_t\}$ is not independent but $\{w_1, \ldots, w_s\}$ is independent, then $\Delta_{I_w} = 0$;
\item if $\Delta_{I_v}$ vanishes because $\{v_1, \ldots, v_r, u_1, \ldots, u_t\}$ is not independent but $\{v_1, \ldots, v_r\}$ is independent, then $\Delta_{I_v} = 0$;
\item if $\{v_1, \ldots, v_r, w_1, \ldots, w_s\}$ is not independent, but both $\{v_1, \ldots, v_r\}$ and $\{w_1, \ldots, w_s\}$ are independent, then there is a cycle $C \subset \{v_1, \ldots, v_r, w_1, \ldots, w_s\}$ with rank greater than $0$ and bounded above by the number of shared rows between $V$ and $W$: $0 < \rk(C) \leq |\cV_W \cap \cV_V|$. Let $C$ be the maximal such cycle, $\Sigma''$ the be positroid variety defined by this independence data, and denote $c = \rk C$ in $\Sigma''$. By construction, $C$ contains elements of both $V$ and $W$. Let $I''_v = \{v_1, \ldots, v_r, w'_1, \ldots, w'_s, u_1 \ldots u_t\}$, be the element of the Grassmann necklace of $\Sigma''$ corresponding to the vertex $v$. Here $\{w'_1, \ldots, w'_s\}$ is a different set of vertices in $W$ that is not weakly less than $\{w_1, \ldots, w_s\}$ in the Gale ordering at $v$. Therefore, $I_v'' \ngeq_v I_v'$ in the Gale ordering. So, $\Delta_{I_v''}$ vanishes on $\Sigma(\cV')$ and thus $\Sigma'' \neq \Sigma(\cV')$. Since $c > 0$ and $\rk (V\cup W) > |\cV_W \cap \cV_V|$ in $M(\cV')$,  there is a basis set, $B$, of $M(\cV')$ such that $|(B \cap C)| \geq c$. Such a basis does not exist in $\Sigma''$. In other works, $\Sigma(\cV') \not \subset \Sigma''$.
\end{enumerate}

Note that if $\cV_W \cap \cV_V = \emptyset$, we cannot have $\{v_1, \ldots, v_r, w_1, \ldots, w_s\}$ but $\{v_1, \ldots, v_r\}$ and $\{w_1, \ldots, w_s\}$.

In otherwords, in $L(\cV)$, if $\Delta_{I_v} = 0$, either $\Delta_{I_v'} = 0$, or $\Delta_{I_w} = 0$, or, if neither are $0$, it defines a positroid variety that does not contain $\Sigma(\cV')$. However, in $L(\cV')$, $\Delta_{I_v}$ a uniformly zero while $\Delta_{I_{w_1}}$ and $\Delta_{I_v'}$ are not. Therefore, $L(\cV)$ does not contain the positroid cell $\Sigma(\cV')$.

By Proposition \ref{res:vanishonbdny}, we have that $R(\cV)$ vanishes on $L(\cV) \setminus \Sigma(\cV)$. Therefore, $R(\cV)$ does not vanish on $\Sigma(\cV')$. 
\end{proof}

We give an explicit example of the phenomenon characterized above. 

\begin{eg} \label{eg:strangeboundary}
The Wilson loop diagram\bas W =  \begin{tikzpicture}[rotate=67.5,baseline=(current bounding box.east)] \begin{scope}
	\drawWLD{6}{1.5}
	\drawnumbers
	\drawlabeledprop{1}{-1}{5}{0}{$p$}
	\drawlabeledprop{1}{1}{3}{0}{$q$}
	\end{scope} \end{tikzpicture} \eas can be written as \bas \VP = \{V_p = \{1, 2, 5,6\} \; ; \; V_1 = \{1, 2, 3,4 \} \} \;.\eas with a matrix \bas M_{\VP} = \begin{bmatrix} x_{p,1} &  x_{p,2} & 0 & 0 &x_{p,5} &x_{p,6} \\x_{q,1} &  x_{q,2} & x_{q,3} &  x_{q,4}& 0 & 0 \end{bmatrix}\; . \eas The Grassmann necklace of $\Sigma(\VP)$ is $\cI(W) = \{ \{12\},\{23\}, \{35\}, \{45\}, \{51\}, \{61\} \}$. From \cite{casestudy}, we see that $\Sigma(\VP)$ shares a boundary with the positroid cells corresponding to \bas M_{\cV_1} = \begin{bmatrix} x_{1,1} &  x_{1,2} & 0 & 0 & 0 &x_{1,6} \\0 &  x_{2,2} & x_{2,3} &  x_{2,4}& x_{2,5} & x_{2,6} \end{bmatrix}  \quad \textrm{and} \quad M_{\cV_2} = \begin{bmatrix} x_{1,1} &  x_{1,2} & x_{1,3} & 0 & 0 &0 \\0 &  x_{2,2} & x_{2,3} &  x_{2,4}& x_{2,5} & x_{2,6} \end{bmatrix}  \;. \eas In particular, the common boundary is parameterized by the matrix \bas M_{\D\cV} = \begin{bmatrix} x_{1,1} &  x_{1,2} & 0 & 0 & 0 &0 \\0 &  x_{2,2} & x_{2,3} &  x_{2,4}& x_{2,5} & x_{2,6} \end{bmatrix}  \;.\eas This common boundary is 5 dimensional, parameterized for instance by setting one of the stars in each row to be 1 and allowing the other entries to be free. 

From Proposition \ref{res:vanishonbdny}, the vanishing set of $R(\VP)$ is exactly $L(\VP) \setminus \Sigma(\VP)$. We show that there is a codimension one boundary of $\Sigma(\VP)$ that does not intersect the vanishing set of $R(\VP)$. 

Let $\Sigma(\cV')$ be the positroid cell defined by the matrix $M_{\cV'}$ as defined in Theorem \ref{res:moving variables}. By construction $\Sigma(\cV')$ lies on the boundary of $\Sigma(\VP)$. We show that $\overline{\Sigma(\cV')}$ does not contain the vanishing loci of any of the factors of $R(\VP)$. The minors $\Delta_{13}$ and $\Delta_{15}$ are both non-vanishing on $\Sigma(\cV')$, while $\Delta_{35}$ is uniformly zero. However, on $L(\VP)$ the non-vanishing of the minors $\Delta_{13}$ and $\Delta_{15}$ implies the non-vanishing of the minor $\Delta_{35}$. Since $\Delta_{35}$ vanishes on $\Sigma(\cV')$, $L(\VP)$ does not intersect this boundary and hence $R(\VP)$ does not vanish on $\Sigma(\cV')$.

Note that $W = \{ 3, 4\}$, and $V = \{5, 6\}$ are two cyclic flats of $M_{\VP}$ that satisfy the conditions of the proposition. Then, up to permutations of the rows, there is only one choice for $M_{\cV'}$: \bas M_{\cV'} = \begin{bmatrix} x_{1,1} &  x_{1,2} & 0 & 0 & 0 &0 \\x_{2,1} &  x_{2,2} & x_{2,3} &  x_{2,4}& x_{2,5} & x_{2,6} \end{bmatrix} \;. \eas Note that while this is not the same matrix as $M_{\D\cV}$ these two matrices do represent the same matroid, which can be seen by comparing basis sets. In fact, the matrix $M_{\cV'}$ is a non-minimal representation of the matrix $M_{\D\cV}$.
\end{eg}

\begin{rmk}
As remarked in Example \ref{eg:strangeboundary}, the matrix $M_{\cV'}$ defined above Lemma \ref{res:moving variables} does not have the minimal number of parameters to represent the boundary positroid $\Sigma(\cV')$. In particular, we have that  \bas |\bigcup_{V \in \cV'}V| = 6 \quad \textrm{ while } \max_{V \in  \cV'} (|V|) + |\mathcal{V}| +1 = 5 + 2 +1 = 8 \;.\eas Therefore, setting $\cV' = \mathcal{T}$, we see that the third equivalence of Theorem \ref{res:minimalrep} doesn't hold, and thus $M_{\cV'}$ is not a minimal representation. In fact, there is a $GL(k)$ transformation taking $M_{\cV'}$ to $M_{\D\cV}$: \bas \begin{bmatrix}1 & 0  \\ -\frac{x_{q_1}}{x_{p_1}} & 1 \end{bmatrix}  \begin{bmatrix} x_{p,1}&  x_{p,2} & 0& 0 &0 &0 \\x_{q,1}  &  x_{q,2} & x_{q,3}&  x_{q,4}& x_{q,5} & x_{q,6} \end{bmatrix} =  \begin{bmatrix} x_{p,1}&  x_{p,2} & 0& 0 &0 &0 \\0  &  x_{q,2} -\frac{x_{q_1}x_{p_2}}{x_{p_1}} & x_{q,3}&  x_{q,4}& x_{q,5} & x_{q,6} \end{bmatrix} \;. \eas \end{rmk}

The requirement that $V$ and $W$ are cyclic intervals may seem arbitrary until one considers that all flacets of a matroid are cyclic intervals if and only if the matroid is a positroid. Thus, in some moral sense, the algorithm prescribed in Proposition \ref{res:moving variables} aims to combine two flacets into a larger flacets in order to define a new positroid cell. 

In the particular case of Wilson loop diagrams, recall from Lemma 2.28 of \cite{Wilsonloop} that all cyclic flats can be represented as propagator flats, $F(P)$, and that by Lemma 3.35 of \cite{Wilsonloop} any propagator flat that is a cyclic flat has rank equal to the number of propagators in the set defining it. Therefore, the boundaries of the sort considered in Lemma \ref{res:moving variables} occur when there are two propagators flats $F(P)$ and $F(Q)$ such that: 
\begin{enumerate} 
\item Neither $P$ nor $Q$ are the full propagators set, $\cP$. 
\item The sets $F(P)$, $F(Q)$ and $F(P) \cup F(Q)$ form cyclic intervals in $[n]$.
\item One propagator set is not contained in the other. \end{enumerate} 
It is easy to check that these conditions are met in the Wilson loop diagram $W$ in Example \ref{eg:strangeboundary}.

\subsection{Cancelation of poles on the boundary \label{sec:cancelation}}

We are now ready to prove the main result of this section: that the singularities of $I(\VP)$ that lie on codimension one boundaries of $\Sigma(\VP)$ all cancel in the tree level amplitude.  

First, recall a few facts about the polynomial $R(\VP)$. By Definition \ref{dfn:I(W)}, the primitive factors of $R(\VP)$ either have degree one or two, corresponding to one by one or two by two minors of $M_{\VP}$. In particular, if the edge $e$ supports $\{q_1, \ldots, q_s\}$ ordered as in Definition \ref{dfn:adjacentprops}, the degree one factors of $R(\VP)$ are of the form  $x_{q_1, e+1}$ or $x_{q_s, e}$. Note that there is never any factor of $R(\VP)$ that involves two non-adjacent propagators. 

\begin{thm} \label{res:deg1polescancel}
All the codimension one spulrios poles of admissible Wilson loop diagrams cancel.
\end{thm}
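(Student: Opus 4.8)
The plan is to show that the sum $\cA_{k,n}^{tree} = \sum_{W \in \cW_{k,n}} I(\VP)$ has vanishing residue along every codimension one boundary. Since the residue of the sum along a fixed boundary cell $\Sigma'$ is the sum of the residues of the individual integrands $I(\VP)$ over exactly those diagrams $W$ for which $\Sigma'$ is a spurious pole, it suffices to organize those diagrams into cancelling pairs. So first I would classify the relevant poles: by the discussion preceding the statement, every codimension one factor of $R(\VP)$ is either a single variable $x_{q_1,e+1}$ or $x_{q_s,e}$ attached to the outermost propagator on an edge $e$, or a $2\times 2$ minor $x_{q_r,e}x_{q_{r+1},e+1}-x_{q_r,e+1}x_{q_{r+1},e}$ attached to a pair of adjacent propagators $q_r,q_{r+1}$ on $e$. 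By Theorem \ref{res:polesonboundaries}, each such factor whose vanishing locus is codimension one cuts out the closure of a boundary positroid cell $\Sigma'\subset \overline{\Sigma(\VP)}$, and these cells are the objects I would index the cancellation by.

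Next I would attach to each codimension one factor a local move on the diagram producing a second admissible diagram $W'$ having the same cell $\Sigma'$ as a spurious pole. For a single variable factor, the move slides the endpoint of the outermost propagator from edge $e$ to the neighbouring edge $e\mp 1$: the propagator $(e,a)$ of $W$ with $x_{q_1,e+1}=0$ and the propagator $(e-1,a)$ of $W'$ with its column-$(e-1)$ entry set to zero both restrict to the common support $\{e,a,a+1\}$, so $W$ and $W'$ meet along the identical boundary $\Sigma'$; the factor $x_{q_s,e}$ is handled symmetrically. For a $2\times 2$ minor factor, I would invoke the reparameterization of Lemma \ref{lem:simplifyR(W)} to present the vanishing of the minor as the vanishing of a single new coordinate, then read off the partner diagram obtained by reconnecting the two adjacent propagators across edge $e$. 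I would then check that the assignment $W\mapsto W'$ is a fixed-point-free involution on the set of (diagram, codimension one spurious pole) pairs supported on a given $\Sigma'$: applying the move twice returns the original data, the non-crossing and local/global density conditions are preserved (so $W'$ is genuinely admissible, which is where the density hypotheses are used), and $W\neq W'$.

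Finally I would compute the residue of each integrand along $\Sigma'$. Using the delta function $\delta^{4k|4k}(M_{\YP}\cdot\cZ_*)$ to localize the integral, the residue amounts to deleting the vanishing factor from $R(\VP)$ and restricting the localized integrand to $\Sigma'$. I would verify that along $\Sigma'$ the localized delta constraints of $W$ and $W'$ coincide (their propagator supports agree there), the surviving factors of $R(\VP)$ and $R(\cV(\cP'))$ match, and the Jacobians of the two localizations agree up to sign, so that $\Res_{\Sigma'} I(\VP) = -\,\Res_{\Sigma'} I(\cV(\cP'))$. Summing over the pairs produced by the involution then shows that the total residue of $\cA_{k,n}^{tree}$ along every codimension one boundary vanishes, which is the assertion.

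The main obstacle is the last step: matching the two localizations across the shared boundary and, above all, pinning down the relative sign. This requires careful bookkeeping of the delta function constraints together with the measure $\prod_{p\in\cP}\prod_{v\in V_p} dx_{p,v}$ under the change of variables induced by the move, and it is precisely the content of the technical lemmas carried out explicitly in Appendix \ref{sec:appendix}.
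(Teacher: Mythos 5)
Your overall strategy --- classify the codimension one factors of $R(\VP)$, identify for each one the other diagrams sharing the same boundary cell, and show the localized integrands cancel there --- is the paper's strategy. But the combinatorial core of your argument, the claim that the diagrams contributing a codimension one pole on a fixed boundary cell $\Sigma'$ can be organized into a fixed-point-free involution with $\Res_{\Sigma'} I(\VP) = -\Res_{\Sigma'} I(\cV(\cP'))$, is false, and this is a genuine gap rather than a bookkeeping issue. For a quadratic factor coming from adjacent propagators $p=(i,j)$ and $q=(i,k)$ with $k>j+1$, there is no single ``partner diagram obtained by reconnecting the two adjacent propagators'': there are two, namely $W'=((\cP\setminus p)\cup r,[n])$ and $W''=((\cP\setminus q)\cup r,[n])$ with $r=(j,k)$, and all three diagrams meet along the same codimension one cell. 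The cancellation is intrinsically three-way (Lemma \ref{res:Vdiagcancel}): after the change of variables of Lemma \ref{lem:simplifyR(W)} the three residues appear with relative coefficients $1$, $e/(1-e)$ and $-1/(1-e)$, which sum to zero but are individually not $\pm 1$, so no pairing with relative sign $-1$ can reproduce this. Worse, when $k=j+1$ the quadratic pole of $W$ cancels against \emph{degree one} poles of two other diagrams (Lemma \ref{res:narrowVcancel}), so any matching would also have to mix pole types, which your single-variable move (sliding an outermost endpoint to a neighbouring edge) never does.

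A second omission: you do not address what happens when the slid or reconnected propagator already exists in $\cP$. In that situation the would-be partner diagram is inadmissible (it would contain a doubled propagator), and the paper disposes of these configurations not by finding another partner but by showing that the corresponding vanishing locus has codimension at least two (Cases 1a and 3a, via Theorem \ref{res:minimalrep} and Corollary \ref{res:reparamdim}), so that no cancellation is required for the statement as given. Without this step your involution is not even well defined on all (diagram, factor) pairs. The residue computation you defer to the appendix is indeed where the analytic work happens, but the lemmas there establish three-term identities via $GL(2)$ changes of basis on pairs of rows of $M_{\YP}$, not the two-term sign match you describe; only the purely linear factors in the generic position of Case 1 cancel pairwise, via Lemma \ref{lem:movingpropnegative}.
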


\begin{proof}
In the diagrams in this proof, propagators contained in a diagram are drawn with a solid line, while potential propagators are denoted with a dashed line. 

We first consider all the possible degree one factors of $R(\VP)$. Write $W = (\cP, [n])$ with $p= (i,j)  \in \cP$. There are several cases to consider:

\textbf{Case 1:} $  \begin{tikzpicture}[rotate=67.5,baseline=(current bounding box.east)] \begin{scope}
	\drawWLD{8}{1}
	\drawnumbers
	\drawlabeledprop{1}{0}{5}{1}{$p$}
        \modifiedprop{2}{0}{5}{-1}{propagator, dashed}
	\end{scope} \end{tikzpicture} $ Suppose $j > i+2$ (i.e. if $V_p$ does not consist of $4$ cyclically consecutive vertices) and, without loss of generality, assume $x_{p, i}$ is a factor of $R(\VP)$. Then if $q = (i+1, j) \not \in \cP$ the we may define another diagram $W' = ((\cP \setminus p) \cup q, [n])$ that is identical to $W$ except that the propagator $p$ is replaced by $q$. Then $\lim_{x_{p, i} \rightarrow 0} I(\VP) = -\lim_{x_{q, i+2} \rightarrow 0} I(W')$, where the negative sign comes from the evaluation of $\delta^{4k|4k}$ (see Lemma \ref{lem:movingpropnegative}). For more details on the minus signs, see \cite{casestudy, HeslopStewart, Amplituhedronsquared}. By the arguments of \cite{basisshapeloci}, we see that this parametrizes a codimension one boundary of $\Sigma(\VP)$. It is easy to check that $W'$ satisfies both non-crossing (because $W$ satisfies non-crossing) and the density (because $q \not \in \cP$, and $W$ satisfies density) conditions for admissibility. Therefore $W'$ is admissible. 

\textbf{Case 1a:} $  \begin{tikzpicture}[rotate=67.5,baseline=(current bounding box.east)] \begin{scope}
	\drawWLD{8}{1}
	\drawnumbers
	\drawlabeledprop{1}{0}{5}{1}{$p$}
        \drawlabeledprop{2}{0}{5}{-1}{$q$}
	\end{scope} \end{tikzpicture} $  If $q = (i+1, j) \in \cP$, then, in the matrix $\lim_{x_{p, i} \rightarrow 0}M_{\VP}$, the row corresponding to the propagator $p$ now has 3 non-zero entries (corresponding to the columns $\{i+1, j, j+1\}$) and the row corresponding to $q$ has $4$ non-zero entries (corresponding to the columns $\{i+1, i+2, j, j+1\}$). That is, we have $2$ rows with non-zero entries in $4$ columns. Therefore, by Theorem \ref{res:minimalrep}, we see that this locus lies in a boundary of $\Sigma(\VP)$ of codimension of at least 2. Therefore, we do not consider these poles in this argument. 

\textbf{Case 2:}$  \begin{tikzpicture}[rotate=67.5,baseline=(current bounding box.east)] \begin{scope}
	\drawWLD{8}{1}
	\drawnumbers
	\drawlabeledprop{3}{1}{5}{0}{$p$}
        \modifiedprop{3}{-1}{6}{0}{propagator, dashed}
        \end{scope} \end{tikzpicture} $;  $\begin{tikzpicture}[rotate=67.5,baseline=(current bounding box.east)] \begin{scope}
	\drawWLD{8}{1}
	\drawnumbers
	\drawlabeledprop{3}{1}{5}{-1}{$p$}
        \modifiedprop{2}{0}{5}{1}{propagator, dashed}
        \end{scope} \end{tikzpicture} $; $  \begin{tikzpicture}[rotate=67.5,baseline=(current bounding box.east)] \begin{scope}
	\drawWLD{8}{1}
	\drawnumbers
	\drawlabeledprop{3}{0}{5}{0}{$p$}
        \modifiedprop{4}{0}{6}{0}{propagator, dashed}
        \end{scope} \end{tikzpicture} $ Next, consider the case when $j = i+2$, i.e. $p = (i, i+2)$. If $x_{p, i+1}$ or $x_{p, i+2}$ is a factor of $R(\VP)$, then consider the propatator $q = (i-1, i+2)$ or $q = (i, i+3)$ respectively. If $q \not \in \cP$, then the diagram $W' = ((\cP \setminus p)\cup q, [n])$ is admissible, and the argument proceeds as Case 1. If $q \in \cP$, then the argument proceeds as in Case 1a.  If $x_{p, i}$ or $x_{p, i+3}$ is a factor of $R(\VP)$, consider $q = (i+1, i+3)$ and $q = (i-1, i+1)$ respectively. By the non-crossing condition, $p$ and $q$ cannot simultaneously exist in $W$. If $W$ does not contain another propagator of the form $(i+2, k)$ or $(i, k)$ respectively, we may define an admissible diagram $W' = ((\cP \setminus p)\cup q, [n])$ (otherwise, $q$ would cross the existing propagator $(i+2, k)$ or $(i, k)$).  In this case, $\lim_{x_{p, i} \rightarrow 0} I(\VP) = -\lim_{x_{q, i+4} \rightarrow 0} I(W')$ where the negative sign again comes from Lemma \ref{lem:movingpropnegative}.

\textbf{Case 2a:}$  \begin{tikzpicture}[rotate=67.5,baseline=(current bounding box.east)] \begin{scope}
	\drawWLD{8}{1}
	\drawnumbers
	\drawlabeledprop{3}{1}{5}{0}{$p$}
        \drawlabeledprop{3}{-1}{7}{0}{}
        \end{scope} \end{tikzpicture} $ ; $\begin{tikzpicture}[rotate=67.5,baseline=(current bounding box.east)] \begin{scope}
	\drawWLD{8}{1}
	\drawnumbers
	\drawlabeledprop{3}{0}{5}{-1}{$p$}
        \drawlabeledprop{1}{0}{5}{1}{}
        \end{scope} \end{tikzpicture} $ Consider $p$ as above, and that there exists a propagator $(i+2, k)$ (resp. $(i, k)$) in $W$. If $x_{p, i}$ (resp. $x_{p, i+3}$) is a factor of $R(\VP)$, the singularity formed by sending $x_{p, i}$ (resp. $x_{p, i+3}$) to zero cancels with a pole coming from degree 2 factors contributed by other diagrams. Therefore, we return to this during the discussion of two by two minors. 

\textbf{Case 3:}  $\begin{tikzpicture}[rotate=67.5,baseline=(current bounding box.east)] \begin{scope}
	\drawWLD{8}{1}
	\drawnumbers
	\drawlabeledprop{3}{1}{6}{0}{$p$}
        \drawlabeledprop{9}{0}{3}{-1}{$q$}
        \end{scope} \end{tikzpicture} $ Next, we consider the degree 2 factors of $R(\VP)$. If such a factor exists, there must be two propagators $p = (i, j)$ and $q = (i, k)$ adjacent on the edge $i$. Suppose $k > j+1$, that is, the other endpoints of $p$ and $q$ are not on adjacent edges. If $r = (j,k) \not \in \cP$, consider two other diagrams formed by replacing the propagator $p$ and $q$ by r: $W' = (\cP' = (\cP \setminus p) \cup r, [n])$ and $W'' = (\cP'' = (\cP \setminus q) \cup r, [n])$. Since $k> j+1$ and $r \not \in \cP$, both $W'$ and $W''$ satisfy density. Furthermore, since $W$ is admissible, and $p$ and $q$ are adjacent on the edge $i$, there does not exist a propagator $(i, m)$ with $j < m <k$, that is, that has one endpoint on the $i^{th}$ edge, and the other between the other endpoints of $p$ and $q$. Therefore, $W'$ and $W''$ satisfy the non-crossing condition. Thus, $W'$ and $W''$ are both admissible. Note that the diagrams $W$, $W'$ and $W''$ are in the configuration laid out in display \eqref{eq:wideV}. By Corrollary \ref{res:reparamdim}, we see that the reparameterization performed in Lemma \ref{lem:simplifyR(W)} means that $\lim_{(x_{p,i}x_{q,i+1} -x_{p,i+1}x_{q,i})\rightarrow 0} M_{\VP}$  parameterizes a codimension one subspace of $\overline{\Sigma(\VP)}$. By Proposition \ref{res:vanishonbdny}, this is lies in a codimension one boundary of $\Sigma(\VP)$. According to Lemma \ref{res:Vdiagcancel}, after appropriate changes of parameterizations, one may write \bas \lim_{(x_{r,k}x_{q,k+1} -x_{r,k+1}x_{q,k})\rightarrow 0}I(\VP') +  \lim_{(x_{p,i}x_{q,i+1} -x_{p,i+1}x_{q,i})\rightarrow 0} I(\VP) + \lim_{(x_{p,j}x_{r,j+1} -x_{p,j+1}x_{r,j})\rightarrow 0}I(\VP'') = 0 \; .\eas 
In other words, these three singularities, which lie on the common codimension one boundary positroid of $\Sigma(\VP)$, $\Sigma(\VP')$ and $\Sigma(\VP'')$, cancel in the sum of integrals in the tree level amplitude.

\textbf{Case 3a:} $\begin{tikzpicture}[rotate=67.5,baseline=(current bounding box.east)] \begin{scope}
	\drawWLD{8}{1}
	\drawnumbers
	\drawlabeledprop{3}{1}{6}{-1}{$p$}
        \drawlabeledprop{3}{-1}{1}{1}{$q$}
        \drawlabeledprop{6}{1}{1}{-1}{$r$}
        \end{scope} \end{tikzpicture} $ If $k > j+1$ but the propagator $r =  (j,k) \in \cP$, we see from Corollary \ref{res:reparamdim} that this has codimension greater than 1.  

\textbf{Case 3b:}$\begin{tikzpicture}[rotate=67.5,baseline=(current bounding box.east)]
	\begin{scope}
	\drawWLD{10}{1}
	\drawnumbers
	\drawlabeledprop{3}{0}{8}{-1}{$q$}
	\drawlabeledprop{2}{0}{8}{1}{$p$}
		\end{scope}
	\end{tikzpicture} $ The final configuration to check is when $p = (i, j)$ and $q = (i, k)$ with $k = j+1$. Consider the diagrams $W' = (\cP' = (\cP \setminus p) \cup r = (j, j+2), [n])$ and $W'' = (\cP'' = (\cP \setminus p) \cup s = (k-2, k), [n])$. Since the edge $r \not \in \cP$ (it would cross $q$ if it were), and $s \not \in \cP$ (it would cross $p$ if it were), we see that $W'$ and $W''$ satisfy both the non-crossing and density conditions, and thus are admissible. Furthermore, $W$, $W'$ and $W''$ are in the configurations laid out in display \eqref{eq:narrowV} and the diagrams $W'$ and $W''$ are in the configuration laid out in Case 2a.

We see from Lemma \ref{res:narrowVcancel} that pole defined by the limit of sending $(x_{p,i}x_{q,i+1} -x_{p,i+1}x_{q,i})$ to $0$ cancels with degree one poles in the diagrams in $W'$ and $W''$ under the correct parameterization: \bas \lim_{(x_{p,i}x_{q,i+1} -x_{p,i+1}x_{q,i}) \rightarrow 0}I(\VP) + \lim_{x_{r, k-2}\rightarrow 0}I(\cV(\cP')) + \lim_{x_{s, j+3}\rightarrow 0}I(\cV(\cP'')) =0 \;.\eas The limits of $I(\cV(\cP'))$ and $I(\cV(\cP'))$ are both codimension one. By Corollary \ref{res:reparamdim}, we see that the limit of $I(\VP)$ is as well. 
\end{proof}

\begin{rmk}
Since this cancellation was demonstrated at point by point level, it holds not only on $\Gr(k,n)$, but also on $\Grall(k,n)$.
However, we remark that this cancellation is exact only in the space parameterized by matrices of the form $M(\VP)$, as a subspace of $\Grall(k,n)$ and not in the space parameterized by matrices of the form $M(\YP)$, as a subspace of $\Grall(k,n+1)$. In \cite{HeslopStewart,non-orientable}, the authors explicitly show that cancellations of this form do not hold in the larger space because of orientation issues.
\end{rmk}

\begin{appendices} 
\section{Pole Cancellation calculations \label{sec:appendix}} 
In this section, we present some useful calculations for understanding of the cancellation of spurious poles. Many of the results here can be found in \cite{casestudy, Amplituhedronsquared, HeslopStewart}. However, they are presented here for completeness.

Recall from Definition \ref{dfn:I(W)} that \bas \cI(\VP) (\cZ_*)  = \int_{(\RP^4)^k} \frac{\prod_{p \in \cP} \prod_{v \in V_p} dx_{p, v}}{R(\VP)} \delta^{4k|4k}(M_{\YP} \cdot \cZ_*) \eas where, for $X$ a $k \times k+4$ matrix, \bas \delta^{4k|4k}(X) = \prod_{b =1}^k (X_{b, 4+b})^4\delta^4((X_{b,1},X_{b,2},X_{b,3},X_{b,4}))  \;.\eas Write $\cZ_*^i$ to indicate the $i^{th}$ column of $\cZ_*$ and $\cZ_*^\mu$ to indicate the matrix formed by taking the first 4 columns of $\cZ_*$. Then evaluating the integral $I(\VP)$ corresponds to localizing the expression \bas \frac{\prod_{b = 1}^k (Y_b \cdot \cZ_*^b)^4}{R(\VP)}\eas at the solution to $M_{\YP} \cdot \cZ_*^\mu = 0$. Writing the propagator $p = (i, j)$ with $i <j$,  Cramer's rule implies that this localization evaluates to \ba x_{p, 0} &= \det(Z_i^\mu, Z_{i+1}^\mu, Z_{j}^\mu, Z_{j+1}^\mu ) \label{eq:matrixvalues1} \\ x_{p, i} = \det(Z_0^\mu, Z_{i+1}^\mu, Z_{j}^\mu, Z_{j+1}^\mu ) \; &; \; x_{p, i+1} = \det( Z_{i}^\mu, Z_0^\mu, Z_{j}^\mu, Z_{j+1}^\mu ) \; \text{ etc.} \label{eq:matrixvalues2}\ea That is, the entry $x_{p, m}$ evaluates to the minor of $\cZ_*^\mu$ indicated by the rows in $V_p$, with the $m^{th}$ row replaced by $Z_0^\mu$.

\begin{lem} \label{lem:movingpropnegative}
For two propagators $p = (i, j)$ and $q = (i, j+1)$, after localization $x_{p, j} = -x_{q, j+2}$.
\end{lem}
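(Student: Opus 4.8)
The plan is to compute both localized values directly from the Cramer's-rule formulas recorded in \eqref{eq:matrixvalues1}--\eqref{eq:matrixvalues2}, and then observe that $x_{p,j}$ and $x_{q,j+2}$ are determinants of the same four vectors differing only by a single transposition of two slots. No genuine algebraic content is needed beyond the alternating property of the determinant; the whole lemma is a bookkeeping statement about which supporting vertex occupies which column of the relevant $4\times 4$ minor.

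First I would record the supports. For $p=(i,j)$ we have $V_p=\{i,i+1,j,j+1\}$, and for $q=(i,j+1)$ we have $V_q=\{i,i+1,j+1,j+2\}$. Recall that the localization rule states that $x_{p,m}$ equals the $4\times 4$ determinant of $\cZ_*^\mu$ on the rows indexed by $V_p$, with the row indexed by $m$ replaced by $Z_0^\mu$. Applying this with $m=j$, which is the third element of $V_p$ in the natural order $i<i+1<j<j+1$, gives $x_{p,j}=\det(Z_i^\mu, Z_{i+1}^\mu, Z_0^\mu, Z_{j+1}^\mu)$. Applying it to $q$ with $m=j+2$, which is the fourth element of $V_q$ in the order $i<i+1<j+1<j+2$, gives $x_{q,j+2}=\det(Z_i^\mu, Z_{i+1}^\mu, Z_{j+1}^\mu, Z_0^\mu)$.

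The second and final step is the observation that these two determinants are built from exactly the same four vectors $Z_i^\mu, Z_{i+1}^\mu, Z_{j+1}^\mu, Z_0^\mu$, and that their orderings agree in the first two slots while the last two slots are interchanged: in $x_{p,j}$ the third and fourth arguments are $(Z_0^\mu, Z_{j+1}^\mu)$, whereas in $x_{q,j+2}$ they are $(Z_{j+1}^\mu, Z_0^\mu)$. Since swapping two arguments of a determinant reverses its sign, this yields $x_{p,j}=-x_{q,j+2}$ immediately.

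The argument is essentially forced once the localization formula is in hand, so there is no serious obstacle; the only place where a sign error could creep in is the index bookkeeping, i.e. verifying that exactly one transposition separates the two orderings. I would therefore make the ordering conventions on $V_p$ and $V_q$ explicit, and note the implicit genericity assumption $j>i+1$ that guarantees the four supporting vertices are distinct and appear in the stated cyclic order, so that the replacement prescribed by the localization rule lands in precisely the slots claimed.
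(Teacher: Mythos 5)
Your proposal is correct and follows exactly the paper's own argument: both compute $x_{p,j}=\det(Z_i^\mu, Z_{i+1}^\mu, Z_0^\mu, Z_{j+1}^\mu)$ and $x_{q,j+2}=\det(Z_i^\mu, Z_{i+1}^\mu, Z_{j+1}^\mu, Z_0^\mu)$ from the localization rule and conclude by the alternating property of the determinant. The extra care you take with the ordering of $V_p$ and $V_q$ and the genericity remark only makes the bookkeeping more explicit than the paper's two-line proof.
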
 

\begin{proof}
By the above arguments, note that $x_{p, j} = \det(Z_i^\mu, Z_{i+1}^\mu, Z_{0}^\mu, Z_{j+1}^\mu )$ while $x_{q, j+2} = \det(Z_i^\mu, Z_{i+1}^\mu, Z_{j+1}^\mu , Z_{0}^\mu )$. Thus these two values are negatives of each other.
\end{proof}

Sometimes, as in Theorem \ref{res:polesonboundaries} and Theorem \ref{res:deg1polescancel}, it is necessary to perform changes of variables in order to understand the relevant loci. In particular, we need the following simplifying change of variables:
\begin{lem}\label{lem:simplifyR(W)}
Consider two propagators $p = (i, j)$ and $q = (i, k)$ that are adjacent on the $i^{th}$ edge of a Wilson loop diagram $(\cP, [n])$, with $p$ appearing closer to the vertex $i$ and $q$ closer to the vertex $i+1$. There is a reparameterization of the matrix $M_{\VP}$ under which one can replace the factor $x_{p, i}(x_{p, i}x_{q, i+1} - x_{p, i+1}x_{q, i})x_{q, i+1}$ in $R(\VP)$ with the product of 4 terms: $xyzw$.
\end{lem}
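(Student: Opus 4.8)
The plan is to localize the entire construction to the $2\times 2$ block sitting in rows $p,q$ and columns $i,i+1$ of $M_{\VP}$, leaving every other entry of the matrix (and hence every other edge factor of $R(\VP)$) untouched. Write $a=x_{p,i}$, $b=x_{p,i+1}$, $c=x_{q,i}$, $d=x_{q,i+1}$ for the four entries of this block and set $D=ad-bc=x_{p,i}x_{q,i+1}-x_{p,i+1}x_{q,i}$ for the $2\times 2$ minor, which is exactly the quadratic factor produced by the adjacency of $p$ and $q$ on the edge $i$ in the sense of item 2 of Theorem \ref{res:prop alg gives rad}. The factor to be simplified is $a\,D\,d$, a degree-four polynomial. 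Since $D$ is irreducible, the only way to write $a\,D\,d$ as a product of four single variables is to split $D$ itself into a product of two coordinates, one retained and one new; this splitting is the single idea behind the reparameterization.

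Concretely, I would introduce the change of variables that fixes three of the four block entries and trades the fourth for (a rescaling of) the minor. Set $x:=x_{p,i}$, $y:=x_{p,i+1}$, $w:=x_{q,i+1}$, and replace $x_{q,i}$ by the new coordinate
\begin{displaymath}
z\ :=\ \frac{D}{x_{p,i+1}}\ =\ \frac{x_{p,i}x_{q,i+1}-x_{p,i+1}x_{q,i}}{x_{p,i+1}},
\end{displaymath}
leaving all entries of $M_{\VP}$ outside this block unchanged. Then $yz=x_{p,i+1}\cdot (D/x_{p,i+1})=D$, so
\begin{displaymath}
x_{p,i}\,(x_{p,i}x_{q,i+1}-x_{p,i+1}x_{q,i})\,x_{q,i+1}\ =\ x\,(yz)\,w\ =\ xyzw,
\end{displaymath}
which is the asserted product of four terms. (By the symmetry of $D$ one could equally retain $x_{q,i}$ and set $z:=D/x_{q,i}$; the choice is immaterial.)

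The remaining step is to verify that this is a bona fide reparameterization. Solving the defining relation for the old entry gives $x_{q,i}=x_{p,i}x_{q,i+1}/x_{p,i+1}-z=xw/y-z$, so the map $(x_{p,i},x_{p,i+1},x_{q,i},x_{q,i+1})\mapsto(x,y,z,w)$ is invertible wherever $x_{p,i+1}\neq0$, which is the generic locus on $L(\VP)$. Its Jacobian determinant is $-1$: expanding along the column indexed by $x_{q,i}$, only the $z$-row contributes, with entry $\partial z/\partial x_{q,i}=-1$, and the complementary minor is the identity. Hence the substitution is nondegenerate and well behaved inside the integral $\cI(\VP)$. The one point that genuinely needs care is locality together with the genericity hypothesis $x_{p,i+1}\neq0$: because the change of variables is supported on the $(p,q)\times(i,i+1)$ block and is the identity on every other entry, it leaves the remaining edge factors $R_e$ of $R(\VP)$ and the support pattern of $M_{\VP}$ undisturbed, and the reparameterized matrix still sweeps out the same locus. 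I expect this locality check — rather than the algebra — to be the main obstacle to state cleanly, since it is what guarantees the lemma is a drop-in simplification of a single quadratic factor, exactly as needed for the codimension-one analysis in Theorem \ref{res:polesonboundaries} and for Case 3 of Theorem \ref{res:deg1polescancel}.
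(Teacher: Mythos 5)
Your proof is correct, but it takes a genuinely different route from the paper's. Writing $D = x_{p,i}x_{q,i+1}-x_{p,i+1}x_{q,i}$, the paper reparameterizes the block as $\left[\begin{smallmatrix} x & y \\ xz & zy+w \end{smallmatrix}\right]$, i.e.\ it keeps $x_{p,i}=x$, $x_{p,i+1}=y$ but trades \emph{both} entries of row $q$, via $z = x_{q,i}/x_{p,i}$ and $w = D/x_{p,i}$; you keep three of the four entries and trade only $x_{q,i}$ for $z=D/x_{p,i+1}$. The trade-offs are real. Your map has unit Jacobian, so the designated factor literally equals $xyzw$ with the measure untouched. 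The paper's map has Jacobian determinant $x$, the transformed factor is $x^2yzw$, and the identity ``factor $=xyzw$'' holds only at the level of the integrand, after the Jacobian cancels one $x$; in exchange, the new block entries $x$, $y$, $xz$, $zy+w$ are \emph{polynomial} in the new variables, so the boundary locus (set $w=0$) is again parameterized by a variable-valued matrix in which row $q$ restricted to the block is the scalar multiple $z$ of row $p$ --- precisely the form used downstream in Theorem \ref{res:polesonboundaries}, in Lemma \ref{res:Vdiagcancel} (the rows $ae$, $be+f$), and in Corollary \ref{res:reparamdim}, whose proof explicitly invokes the fact that this Jacobian vanishes when $x \to 0$ but not when $w = 0$. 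Your inverse map divides by $y = x_{p,i+1}$, so your boundary locus $z=0$ carries the rational entry $x_{q,i}=xw/y$: it describes the same locus, but is less convenient for those arguments. Finally, a discrepancy you could not have detected blind: by Definition \ref{dfn:I(W)} (with $q_1=p$, $q_2=q$) the factor of $R_i$ is actually $x_{p,i+1}\,D\,x_{q,i}$, not the $x_{p,i}\,D\,x_{q,i+1}$ written in the lemma statement, and the paper's own computation works with the former; you proved the statement as literally written. This costs nothing: the symmetric variant of your substitution (keep $x_{p,i}$, $x_{p,i+1}$, $x_{q,i}$ and replace $x_{q,i+1}$ by $D/x_{p,i}$) handles the definition's factor identically, and both routes end at the same integrand $dx\,dy\,dz\,dw/(xyzw)$.
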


\begin{proof}
We restrict our attention to the relevant two by two minor of $M_{\VP}$, $ \begin{bmatrix} x_{p, i} & x_{p, i+1} \\ x_{q, i} & x_{q, i+1} \end{bmatrix} $,  which we can reparameterize as $ \begin{bmatrix} x & y \\ xz & zy + w \end{bmatrix} $. Then we have that \bas x_{p, i} = x \quad ; \quad  x_{p, i+1} = y \quad ; \quad x_{q, i} = xz\quad  ;\quad x_{q, i+1}  = zy + w \; . \eas Furthermore, \bas dx_{p, i} = dx  \quad ; \quad  d x_{p, i+1} = dy \quad ; \quad dx_{q, i} = x dz + z dx \quad ; \quad d x_{q, i+1} = ydz + z dy + dw \;.\eas Therefore, under these changes of variables, we see that \bas \frac{dx_{p, i}\;dx_{p, i+1}\;dx_{q, i}\;dx_{q, i+1}}{x_{p, i+1}(x_{p, i}x_{q, i+1} - x_{q, i}x_{p, i+1} ) x_{q, i}}  = \frac{ dx\;dy\;x dz\; dw}{y (xyz + xw - xyz)xz}\eas which simplifies to the desired result.
\end{proof}

In general, we use whichever parameterization of the two by two minors is convenient. To understand the cancelation of spurios poles, the need for a change of variables comes up in two cases. The first case involves the cancellation of the two by two minors in following three propagator configurations (see Case 3 for Theorem \ref{res:deg1polescancel}): \bml  \textrm{Config 1} = \begin{tikzpicture}[rotate=67.5,baseline=(current bounding box.east)]
	\begin{scope}
	\drawWLD{10}{1.2}
	\drawnumbers
	\drawprop{1}{-1}{8}{0}
	\drawprop{1}{1}{3}{0}
		\end{scope}
	\end{tikzpicture} \quad ; \quad \textrm{Config 2} = \begin{tikzpicture}[rotate=67.5,baseline=(current bounding box.east)]
	\begin{scope}
	\drawWLD{10}{1.2}
	\drawnumbers
	\drawprop{1}{0}{3}{-1}
	\drawprop{3}{1}{8}{0}
		\end{scope}
	\end{tikzpicture} \quad ; \\ \textrm{Config 3} = \begin{tikzpicture}[rotate=67.5,baseline=(current bounding box.east)]
	 \begin{scope}
	\drawWLD{10}{1.2}
	\drawnumbers
	\drawprop{1}{0}{8}{1}
	\drawprop{3}{0}{8}{-1}
		\end{scope}
	\end{tikzpicture}  \quad . \label{eq:wideV}\eml

\begin{lem}\label{res:Vdiagcancel}
Let $(\cP_1, [n])$, $(\cP_2, [n])$ and $(\cP_3, [n])$ be three different admissible Wilson loop diagrams that are identical except for the fact that the propagator set $\cP_i$ contains the pair of adjacent propagators shown in $\textrm{Config i}$ above. Then 
\bas \sum_{i = 1}^3 \lim_{\textrm{degree 2 factor of } R(\cV(\cP_i)) \rightarrow 0} I(\cV(\cP_i)) = 0\;.\eas \end{lem}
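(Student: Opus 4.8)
The plan is to pass to the localized form of each integral and then realize $\sum_{i=1}^3 \lim I(\cV(\cP_i)) = 0$ as a Grassmann--Plücker (Schouten) identity among three residues. Throughout write $\langle a\,b\,c\,d\rangle := \det(Z_a^\mu, Z_b^\mu, Z_c^\mu, Z_d^\mu)$ for the $4\times 4$ brackets of the external data.

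\emph{Localization and factoring.} The three diagrams agree outside the region spanned by the three candidate propagators $p=(i,j)$, $q=(i,k)$, $r=(j,k)$, each $(\cP_m,[n])$ containing exactly two of them. First I would use the bosonic delta functions in $\delta^{4k|4k}(M_{\cY(\cP_m)}\cdot\cZ_*)$ to localize via Cramer's rule \eqref{eq:matrixvalues1}--\eqref{eq:matrixvalues2}, so that $I(\cV(\cP_m))$ becomes $\prod_{b=1}^k (Y_b\cdot\cZ_*^b)^4 / R(\cV(\cP_m))$ evaluated at the common solution, with each entry $x_{\bullet,\bullet}$ replaced by a bracket $\langle 0\,a\,b\,c\rangle$. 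Because the propagators outside the three-propagator region are shared, their numerator factors $(Y_b\cdot\cZ_*^b)^4$ and their contributions to $R$ are identical across the three diagrams and may be pulled out of the sum; it therefore suffices to compare the local contributions of the two propagators present in each $\cP_m$.

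\emph{Reparameterization and residues.} In each diagram I would apply Lemma \ref{lem:simplifyR(W)} to the adjacent pair supported on a common edge --- edge $i$ in $\cP_1$, edge $j$ in $\cP_2$, edge $k$ in $\cP_3$. Under the substitution $\begin{bmatrix} x & y \\ xz & zy+w\end{bmatrix}$ the relevant degree-two factor becomes the single parameter $w$ (up to the two surrounding single-variable factors), so the operation $\lim_{\text{degree-2 factor}\to 0}$ is the residue at $w=0$ and collapses the configuration onto the codimension-one boundary where the two propagators meet on their common edge. By Corollary \ref{res:reparamdim} and Proposition \ref{res:vanishonbdny} this is a genuine codimension-one boundary of $\Sigma(\cV(\cP_m))$, and by the analysis of Case 3 of Theorem \ref{res:deg1polescancel} all three residues land on the \emph{same} boundary positroid $\Sigma'$.

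\emph{Signs and the three-term identity.} On $\Sigma'$ the three reparameterized residues are rational functions of the external data that differ only through the two brackets each degree-two factor strips off and through the numerator factors of the two $V$-propagators present in that diagram. I would track the relative signs with Lemma \ref{lem:movingpropnegative}: sliding an endpoint of a propagator to the adjacent vertex flips the sign of the corresponding localized bracket, accounting for the explicit minus signs recorded in Case 3 and in \cite{casestudy, HeslopStewart, Amplituhedronsquared}. Writing each localized degree-two factor via \eqref{eq:matrixvalues2} as a difference of products of two brackets $\langle 0\,\bullet\,\bullet\,\bullet\rangle$ and reinstating the common factors from the first step, the three residues become exactly the three terms of a Schouten relation among the brackets on $\{Z_0,Z_i,Z_{i+1},Z_j,Z_{j+1},Z_k,Z_{k+1}\}$, which sums to zero. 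The hard part will be this last step: pinning down the precise determinantal identity and verifying that the signs produced by evaluating the delta functions --- controlled by Lemma \ref{lem:movingpropnegative} --- assemble the three residues into that identity with coefficients $+1,+1,+1$ rather than some other combination. This is the algebraic incarnation of the physical statement that the tree amplitude is independent of the choice of triangulation; the localization, factoring, and reparameterization steps are essentially bookkeeping, whereas matching the Schouten relation to the sign pattern is where the actual cancellation resides.
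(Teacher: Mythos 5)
There is a genuine gap. Your setup (localize, pull out the shared factors, reparameterize each adjacent pair via Lemma \ref{lem:simplifyR(W)} so that the degree-two factor becomes a single coordinate $w$, and interpret the limit as a residue on a common codimension-one boundary) matches the paper's, but the entire content of the lemma is the final cancellation, and your proposal does not establish it --- you yourself flag "pinning down the precise determinantal identity and verifying that the signs \dots assemble the three residues into that identity with coefficients $+1,+1,+1$" as the unresolved hard part. Asserting that the three residues "become exactly the three terms of a Schouten relation" is a hope, not an argument; without exhibiting the identity and the coefficients, nothing has been proved. Moreover, the sign mechanism you invoke, Lemma \ref{lem:movingpropnegative}, is the tool the paper uses for the \emph{degree-one} cancellations (Cases 1 and 2 of Theorem \ref{res:deg1polescancel}); it plays no role in the degree-two case.

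What the paper actually does is more concrete and not a bracket identity in the external data. It writes the two relevant rows of $M_{\cY(\cP_1)}$, $M_{\cY(\cP_2)}$, $M_{\cY(\cP_3)}$ in the reparameterized coordinates, applies explicit $GL(2)$ transformations (e.g.\ $\bigl[\begin{smallmatrix} \frac{-e'}{1-e'} & \frac{1}{1-e'} \\ 1 & 0 \end{smallmatrix}\bigr]$) to the rows of the second and third matrices so that, in the limits $f\to 0$, $f'\to 0$, $h''\to 0$, all three parameterize the same boundary configuration, and then reads off an explicit invertible change of variables. The upshot is
\begin{displaymath}
\lim_{f' \to 0} I(\cV(\cP_2)) = \lim_{f \to 0} \tfrac{e}{1-e}\, I(\cV(\cP_1)), \qquad
\lim_{h'' \to 0} I(\cV(\cP_3)) = \lim_{f \to 0} \tfrac{-1}{1-e}\, I(\cV(\cP_1)),
\end{displaymath}
so the three limits carry coefficients $1$, $\frac{e}{1-e}$, $\frac{-1}{1-e}$, which sum to zero identically in the residual boundary coordinate $e$. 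The signs and weights come from the $GL(2)$ row operations and the Jacobian of the change of variables, not from a three-term Pl\"ucker relation among brackets of the $Z$'s with unit coefficients. If you want to salvage your route, you would need to actually compute the three localized residues in terms of brackets and exhibit the claimed identity; as written, the proposal stops exactly where the proof has to begin.
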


This proof is also given in \cite{HeslopStewart, casestudy} in a slightly different parametrization, and is included here for completeness.
\begin{proof}
Without loss of generality, write $M_{\cY(\cP_i)}$ with the pertinent propagators represented in the first two rows. Then the matrices $M_{\cY(\cP_i)}$ are identical except for the first two rows. Since the propagators are adjacent, by Lemma \ref{lem:simplifyR(W)} we may write the first two rows as 
\bas M_{\cY(\cP_1)} = \begin{bmatrix}1 & \ldots & a &b &\ldots & 0 & 0 & \ldots & c & d   \ldots\\  1 & \dots &  ae &be + f  & \ldots &g & h & \ldots &0 &0  \ldots   \end{bmatrix}  \\ 
M_{\cY(\cP_2)} = \begin{bmatrix}1 & \ldots & a' & b' &\ldots & c' & d' & \ldots & 0 & 0   \ldots\\  1 & \dots &  0 & 0  & \ldots &c'e' & d' e' + f' & \ldots &g' &h'  \ldots   \end{bmatrix} \\ 
M_{\cY(\cP_3)} = \begin{bmatrix}1 & \ldots & 0 &0 &\ldots & a'' & b'' & \ldots & c'' & d''   \ldots\\  1 & \dots &  e'' & f''  & \ldots &0 & 0 & \ldots & c''g'' & d''g'' + h''  \ldots   \end{bmatrix}\;.\eas 
We multiply the relevant rows of $M_{\cY(\cP_2)}$ and $M_{\cY(\cP_3)}$ by elements of $GL(2)$, leaving the rest of the rows unchanged. Namely, consider the products: 
\bas \begin{bmatrix} \frac{- e'}{1-e'} & \frac{1}{1-e'} \\ 1 & 0  \end{bmatrix} M_{\cY(\cP_2)} = \begin{bmatrix}  1 & \dots & \frac{-e' a'}{1- e'}  & \frac{-e' b'}{1- e'}  & \ldots &0 & \frac{ f'}{1-e'} & \ldots &\frac{g'}{1-e'} &\frac{h'}{1-e'}  \ldots  \\ 1 & \ldots & a' & b' &\ldots & c' & d' & \ldots & 0 & 0   \ldots \end{bmatrix}  
 \\ 
\begin{bmatrix} 0 & 1 \\   \frac{-g''}{1-g''} & \frac{1}{1-g''}\end{bmatrix} M_{\cY(\cP_3)} = \begin{bmatrix}1 & \ldots & e'' & f'' &\ldots & 0 & 0 & \ldots & c'' g'' & d''g'' + h''g''   \ldots\\  1 & \dots &  \frac{e''}{1-g''} & \frac{f''}{1-g''}  & \ldots &  \frac{-a''g''}{1-g''} & \frac{-b''g''}{1-g''}  & \ldots &0 & \frac{h''}{1-g''} \ldots   \end{bmatrix}\;.\eas

From this, we see that, in the limit $f \rightarrow 0$ and $M_{\cY(\cP_1)}$ and $f' \rightarrow 0$ for $\begin{bmatrix} \frac{- e'}{1-e'} & \frac{1}{1-e'} \\ 1 & 0  \end{bmatrix} M_{\cY(\cP_2)}$, we have the change of variables \bmls a = \frac{-e'a'}{1-e'} \quad ; \quad b = \frac{-e'b'}{1-e'} \quad ; \quad c = \frac{g'}{1-e'} \quad ; \quad d = \frac{d}{1-e'}  \quad ; \\ \quad e = \frac{e'-1}{e'}\quad ; \quad f = 0 \quad ; \quad g = c' \quad ; \quad h = d'\;.\emls Inverting and performing the change of variables, we see that $\lim_{f' \rightarrow 0} I(\cV(\cP_2)) = \lim_{f \rightarrow 0} \frac{e}{1-e} I(\cV(\cP_1))$. A similar calculation shows that $\lim_{h'' \rightarrow 0} I(\cV(\cP_3)) = \lim_{f \rightarrow 0} \frac{-1}{1-e} I(\cV(\cP_1))$. Thus, in the appropriate limit, \bas \lim_{f \rightarrow 0} I(\cV(\cP_1)) + \lim_{f' \rightarrow 0} I(\cV(\cP_2)) + \lim_{h'' \rightarrow 0} I(\cV(\cP_3)) = 0\;.\qedhere\eas  \end{proof} 

The last case to consider consists of understanding the poles shared between the diagrams with the following configurations: \bml \textrm{Config 4} = \begin{tikzpicture}[rotate=67.5,baseline=(current bounding box.east)]
	 \begin{scope}
	\drawWLD{10}{1.2}
	\drawnumbers
	\drawprop{3}{-1}{9}{0}
	\drawprop{3}{1}{8}{0}
		\end{scope}
	\end{tikzpicture} \quad ; \quad \textrm{Config 5} = \begin{tikzpicture}[rotate=67.5,baseline=(current bounding box.east)]
	\begin{scope}
	\drawWLD{10}{1.2}
	\drawnumbers
	\drawprop{3}{0}{8}{-1}
	\drawprop{10}{0}{8}{1}
		\end{scope}
	\end{tikzpicture} \\ \textrm{Config 6} = \begin{tikzpicture}[rotate=67.5,baseline=(current bounding box.east)]
	\begin{scope}
	\drawWLD{10}{1.2}
	\drawnumbers
	\drawprop{3}{0}{9}{1}
	\drawprop{7}{0}{9}{-1}
		\end{scope}
	\end{tikzpicture} \label{eq:narrowV}\eml

\begin{lem}\label{res:narrowVcancel}
Let $(\cP_4, [n])$, $(\cP_5, [n])$ and $(\cP_6, [n])$ be admissible Wilson loop diagrams that are identical except for the fact that the propagator set $\cP_i$ contains the pair of adjacent propagators shown in $\textrm{Config i}$ above. Let $p = (i, j)$ and $q = (i, k)$ with $k = j+1$. Then \bas \lim_{(x_{p,i}x_{q, i+1} - x_{p, i+1}, x_{q, i}) \rightarrow 0} I(\cV(\cP_4)) + \lim_{x_{r, j+3} \rightarrow 0}I(\cV(\cP_5)) + \lim_{x_{r, k-2} \rightarrow 0}I(\cV(\cP_6)) = 0\;.\eas \end{lem}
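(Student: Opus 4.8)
The plan is to adapt the explicit $GL(2)$ computation of Lemma \ref{res:Vdiagcancel} to the degenerate case $k=j+1$, in which a single degree-two pole of $\cP_4$ must now be matched against one degree-one pole in each of $\cP_5$ and $\cP_6$. First I would localize to the two rows carrying the configuration propagators: exactly as in the proof of Lemma \ref{res:Vdiagcancel}, the three matrices $M_{\cY(\cP_4)}$, $M_{\cY(\cP_5)}$, $M_{\cY(\cP_6)}$ can be written so that these propagators occupy the first two rows while the remaining $k-2$ rows are identical, so the whole cancellation takes place inside a $2\times(n+1)$ block. Writing $p=(i,j)$ and $q=(i,j+1)$, I would then apply the reparameterization of Lemma \ref{lem:simplifyR(W)} to the $2\times 2$ minor of $M_{\cV(\cP_4)}$ on columns $i,i+1$, converting the degree-two factor $x_{p,i}x_{q,i+1}-x_{p,i+1}x_{q,i}$ into a single variable so that the limit in the statement becomes sending that variable to zero.

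The geometric content is that all three limits land on the same boundary cell. In each limit the relevant $2$-plane degenerates to a fixed subspace containing a short row supported on columns $\{0,j,j+1,j+2\}$: for $\cP_4$ this row is $q$ minus a multiple of $p$; for $\cP_5$ it is the new propagator $r=(j,j+2)$ after setting $x_{r,j+3}=0$; and for $\cP_6$ it is the new propagator $(k-2,k)$ after setting its $(k-2)$-entry to zero. The complementary basis vector may be taken to be the $p$-row (supported on $\{0,i,i+1,j,j+1\}$) for $\cP_4$ and $\cP_5$, and the $q$-row (supported on $\{0,i,i+1,j+1,j+2\}$) for $\cP_6$; since $q$, $p$ and the short row all lie in the common $2$-plane, these two presentations are related by a $GL(2)$ change of basis, which is precisely the freedom exploited in the next step.

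I would then exhibit explicit elements of $GL(2)$, acting on the two distinguished rows and equal to the identity elsewhere, that carry $M_{\cY(\cP_5)}$ and $M_{\cY(\cP_6)}$ into the reparameterized form of $M_{\cY(\cP_4)}$ in the respective limits. As in Lemma \ref{res:Vdiagcancel}, multiplying the distinguished rows by such a $GL(2)$ element does not change the subspace and so realizes a reparameterization of the integral; in the limit the transformed matrices coincide with the $\cP_4$ matrix after an explicit change of variables. Reading off that change of variables, computing its Jacobian, and re-expressing $R$ and $\delta^{4k|4k}$ in the new coordinates — and tracking the sign contributed by Lemma \ref{lem:movingpropnegative} each time a propagator endpoint is moved — I expect each of the three limiting integrals to equal a common integral $I_0$ times a rational coefficient, with the three coefficients summing to zero in the same pattern $1+\tfrac{c}{1-c}-\tfrac{1}{1-c}=0$ that appears in Lemma \ref{res:Vdiagcancel}.

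The main obstacle is the measure bookkeeping forced by the degree-two/degree-one mismatch. Unlike the wide-$V$ case, here the reparameterization of Lemma \ref{lem:simplifyR(W)} is indispensable: it must be used to turn the quadratic pole of $\cP_4$ into a linear one before any term-by-term comparison with $\cP_5$ and $\cP_6$ is possible, and one must verify that the Jacobian of this reparameterization combines correctly with those of the two $GL(2)$ changes of basis so that the rational coefficients cancel exactly. Pinning down the precise $GL(2)$ matrices so that the three short-row identifications are simultaneously consistent, and confirming the relative signs coming from $\delta^{4k|4k}$, is where the real work lies.
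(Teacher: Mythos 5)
Your proposal is correct and follows essentially the same route as the paper: the paper likewise writes the three matrices $M_{\cY(\cP_i)}$ with the configuration propagators in two rows, uses the reparameterization of Lemma \ref{lem:simplifyR(W)} so that the degree-two pole of $\cP_4$ becomes a single variable $f$, applies explicit $GL(2)$ changes of basis to $M_{\cY(\cP_5)}$ and $M_{\cY(\cP_6)}$, and obtains coefficients $1$, $\tfrac{e}{1-e}$, and $\tfrac{-1}{1-e}$ summing to zero, exactly the pattern you predict.
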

\begin{proof}
This proof follows similarly to the above. Write \bas M_{\cY(\cP_4)} = \begin{bmatrix}1 & \ldots & a &b &\ldots & c & d & 0 & \ldots \\  1 & \dots &  ae &be + f  & \ldots &0 & g & h & \ldots    \end{bmatrix}  \\ M_{\cY(\cP_5)} = \begin{bmatrix}1 & \ldots & a' &b' &\ldots & c' & d' & 0 & 0& \ldots \\  1 & \dots &  0&0  & \ldots &c'e' & d'e' +f'  & g' & h' & \ldots    \end{bmatrix} \\ M_{\cY(\cP_6)} = \begin{bmatrix}1 & \ldots  & 0 &0 &\ldots  & a'' & b'' & c'' &  d'' & \ldots \\  1 & \dots & e'' & f''  & \ldots  &0&0 & c'' g'' & d'' g'' + h'' & \ldots    \end{bmatrix} \;. \eas We consider the change of variables defined by taking the product with \bas \begin{bmatrix}1 & 0 \\ \frac{-e'}{1-e'} & \frac{1}{1-e'} \end{bmatrix} M_{\cY(\cP_5)}  \quad \textrm{ and } \quad \begin{bmatrix}0 & 1 \\ \frac{-g''}{1-g''} & \frac{1}{1-g''}  \end{bmatrix} M_{\cY(\cP_6)} \;.\eas Then the same types of calculations as in Lemma \ref{res:Vdiagcancel} shows that $ \lim_{a'' \rightarrow 0}I(\cV(\cP_6))  =  \lim_{f \rightarrow 0} \frac{-1}{1-e} I(M_{\cV(\cP_4)}) $ and $\lim_{h' \rightarrow 0}I(\cV(\cP_5))  =  \lim_{f \rightarrow 0} \frac{e}{1-e} I(\cV(\cP_4)) $, proving the result.
\end{proof}


Finally, we show that the limits defined in Lemma \ref{res:Vdiagcancel} and Lemma \ref{res:narrowVcancel} do in fact give rise to codimension one boundaries of $\Sigma(\VP)$. To do this, we define a more general result. In Theorem \ref{res:Rado}, we show that, if $M'$ is a variable valued matrix formed by applying an invertible change of variables to a matrix $M_\cV$, then sending $k$ variables to $0$ in $M'$ drops the dimension of the parameterized space $L(M')$ if and only if no row of $M'$ is contained in the span of any other subset of rows of $M'$.

The result is a generalization of the equivalence of 1 and 3 in Theorem \ref{res:minimalrep}. Condition 3 from Theorem \ref{res:minimalrep} is analogous to the inequality from Hall's Matching Theorem. However, 
the matrices in Theorem \ref{res:minimalrep} have the restriction that entries are either zero or totally independent. Theorem \ref{res:Rado} below, relaxes this condition, allowing one to view each independent non-zero entry as a function of several variables, and setting one of these component variables to $0$, as in Lemma \ref{res:Vdiagcancel} and Lemma \ref{res:narrowVcancel}. 

We begin with a few definitions. 

\begin{dfn}
For a $k \times n$ variable valued matrix $M$ and a set $I \subseteq [k]$, define $M_I$ to be the matrix formed by restricting $M$ to the rows $I$, and let $\mathrm{span}(M_I)$ be the subset of $\mathbb{R}^n$ obtained by evaluating linear combinations of the rows $M$ indexed by $I$ at real parameters. Let $\mathrm{span}(M_{\emptyset})$ be the origin in $\mathbb{R}^n$.
\end{dfn}

Note that these matrices are different from the variable valued matrices defined by $M_{\cV}$ above. We relax the requirement that each entry of the matrix be algebraically independent of all the others. Furthermore, while the entries of $M_I$ are functions of independent variables, the entries themselves may be related by algebraic functions.

\begin{dfn}
Let $L(M_I)$ be the locus of the variable valued matrix $M_I$ in $\Grall(|I|, n)$, as given by the map in \eqref{eq:maps}.
\end{dfn}

\begin{eg}
Let
\begin{displaymath}
M =
\begin{bmatrix}x_1 & x_2 & 0 & 0 \\
0 & 0 & x_1 & 0 \\
0 & 0 & x_2 & x_1 \end{bmatrix}.
\end{displaymath}
\noindent
Then, $\mathrm{span}(M_{12})$ consists of all points in $\mathbb{R}^4$ whose last coordinate is zero. On the other hand, $\mathrm{span}(M_{13})$ consists of points of the form $(a,b,\frac{bc}{a},c)$ for some $a \in \mathbb{R} \setminus \{0\}$ and $b, c \in \mathbb{R}$ together with points of the form $(0,b,c,0)$. Finally, $\mathrm{span}(M_{123})$ is all of $\mathbb{R}^4$.
\end{eg}

\begin{thm}\label{res:Rado} 
Let $\cV$ be a collection of subsets, and let $\mathbf{x} = \{x_{i,j}\}$ and $\mathbf{y} = \{y_{i,j}(\mathbf{x})\}$ be two sets of algebraically independent invertible variables (arising from the same indexing set $\cV$) which are related by a change of variables. Denote by $M_{\cV}(\mathbf{x})$ and $M_{\cV}(\mathbf{y})$ the variable valued matrices associated to $\cV$ in the variables $\mathbf{x}$ and $\mathbf{y}$. Let $S \subset \mathbf{x}$ be a subset of the variables $\bf{x}$, and write $M' = M_{\mathcal{P}}(\mathbf{y})|_{x_{i,j} = 0; x_{i,j} \in S}$. If the function $\lim_{x_{i,j} \in S \to 0} \bf{y}(\bf{x})$ is invertible, denote by $d = |\mathbf{x}| - |S|$ the number of variables in $\mathbf{x}$ outside of the set $S$. 

The following are equivalent: 
\begin{itemize}
\item[(i)] For all $I \subsetneq [k]$ and all $j \in I^c$, $\mathrm{span}(M'_I) \subsetneq \mathrm{span}(M'_{I \cup j})$. That is, adding a row to $M'_I$ always increases the size of the span.
\item[(ii)] $\mathrm{dim}(L(M')) = d-k$.
\end{itemize}
\end{thm}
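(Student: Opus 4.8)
The plan is to read $\dim L(M')$ off the differential of the parametrization and to translate the resulting dependence condition into (i), in the spirit of Rado's theorem. Let $\Psi$ be the rational map from the space of the $d$ surviving variables to $\Grall(k,n)$ sending a parameter value to the row space of $M'$, so that $L(M')$ is the image of $\Psi$ and, at a generic point, $\dim L(M') = \mathrm{rank}\, d\Psi = d - \dim \ker d\Psi$. A tangent vector $\xi$ lies in $\ker d\Psi$ exactly when the perturbation $\dot M(\xi)$ keeps every row inside $V = \mathrm{span}(M'_{[k]})$, i.e. when $\dot M(\xi) = AM'$ for some $A \in \mathfrak{gl}_k$; since $\dot M(\xi)$ can only be nonzero where $M'$ is, this forces $A$ into the algebra $\mathfrak{a} = \{A \in \mathfrak{gl}_k : \mathrm{supp}(AM') \subseteq \mathrm{supp}(M')\}$ evaluated at a generic parameter. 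The reduction is therefore to prove $\dim \ker d\Psi = \dim \mathfrak{a}$ and that $\dim \mathfrak{a} = k$ is equivalent to (i).

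I would first dispose of the baseline inequality. The diagonal matrices always lie in $\mathfrak{a}$, because rescaling a row preserves its support; this gives $\dim \mathfrak{a} \ge k$, hence $\dim L(M') \le d - k$. To turn this into a statement about $\ker d\Psi$ I must know that these $k$ rescaling directions are actually realized by genuine moves of the surviving $\mathbf{x}$ variables, and this is exactly what the hypothesis that $\lim_{S \to 0}\mathbf{y}(\mathbf{x})$ is invertible supplies: in the $\mathbf{y}$ coordinates $M'$ is a variable-valued matrix admitting the $k$ independent row Euler fields, and invertibility lets me pull these back to the $\mathbf{x}$ side without leaving the locus $S = 0$. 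The same invertibility, applied to the full image of $dM'$, is what upgrades the inclusion $\ker d\Psi \hookrightarrow \mathfrak{a}$ to an equality of dimensions.

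For the equivalence $\dim\mathfrak{a} = k \Leftrightarrow (\mathrm{i})$ I would argue both directions through the rows. If (i) fails, say $\mathrm{span}(M'_I) = \mathrm{span}(M'_{I \cup j})$ with $j \notin I$, then the contribution of row $j$ is absorbable into the rows indexed by $I$, and I can write down a matrix $A \in \mathfrak{a}$ with a nonzero off-diagonal entry in row $j$ (add a small multiple of the dependent combination to row $j$ while staying supported where $M'$ is); this makes $\dim\mathfrak{a} > k$ and so $\dim L(M') < d-k$, i.e. (ii) fails. Conversely, assuming (i), I would show that any $A \in \mathfrak{a}$ is diagonal: row $i$ of $AM'$ is a combination of rows of $M'$ supported inside the support of row $i$, and the strict-growth hypothesis (packaged as the Rado rank inequality, which guarantees a system of independent witness columns distinguishing each row from every subset not containing it) forces every off-diagonal coefficient in that combination to vanish. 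Hence $\mathfrak{a}$ is exactly the diagonal, $\dim\mathfrak{a} = k$, and $\dim L(M') = d-k$.

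The step I expect to be the main obstacle is the precise reconciliation of the \emph{global} content of (i) --- a statement about the swept varieties $\mathrm{span}(M'_I)$, which are unions over all parameter values and which, because the change of variables makes the same underlying variable appear in several entries, can be genuinely smaller than the coordinate subspaces on $\bigcup_{i \in I} V_i$ --- with the \emph{pointwise} endomorphism algebra $\mathfrak{a}$ computed at a single generic parameter. Making this identification uniform in the possibly non-monomial entries of $M'$, and checking that the invertibility of $\lim_{S \to 0}\mathbf{y}$ is strong enough both to realize the diagonal directions and to forbid spurious support coincidences that would enlarge $\mathfrak{a}$, is where the real work of the proof lies; the remaining combinatorics is a direct application of the Rado/Hall inequality already implicit in condition 3 of Theorem \ref{res:minimalrep}.
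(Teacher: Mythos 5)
Your overall route is genuinely different from the paper's: the paper proves both implications by induction on the number of rows, decomposing a generic $k$-plane of $L(M')$ into a vector $r$ swept out by the last row together with a $(k-1)$-plane of $L(M'_{\{1,\dots,k-1\}})$ orthogonal to $r$, and comparing dimensions of that fibration; you instead work on tangent spaces, aiming to compute $\dim L(M') = d - \dim\ker d\Psi$ and to identify $\ker d\Psi$ with the support-preserving algebra $\mathfrak{a}$. The problem is that the identification $\dim\ker d\Psi = \dim\mathfrak{a}$ is not a technicality to be checked at the end --- it carries the entire content of the theorem --- and the one justification you offer for it is invalid. A tangent vector $\xi$ lies in $\ker d\Psi$ iff $dM'(\xi) = AM'$ for some $A$, so $\dim\ker d\Psi = \dim\bigl(\{AM' : A\} \cap \mathrm{Image}(dM')\bigr)$; the containment of this intersection in $\{AM' : A \in \mathfrak{a}\}$ is clear, but the reverse containment $\{AM' : A \in \mathfrak{a}\} \subseteq \mathrm{Image}(dM')$ is exactly what must be proved, and it does not follow from invertibility of $\lim_{S\to 0}\mathbf{y}(\mathbf{x})$. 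That limit maps $d$ variables to $|\mathbf{y}| > d$ matrix entries, so ``invertible'' can only mean injective onto its image, and injectivity does not make the row-rescaling (Euler) directions tangent to the parameterized locus. Concretely, take $k=1$, $\cV = \{\{1,2\}\}$, $y_{1,1} = x_{1,1}$, $y_{1,2} = x_{1,1}^2 + x_{1,2}$, $S = \{x_{1,2}\}$: then $M' = [\,x_{1,1} \;\; x_{1,1}^2\,]$, the limit map $x_{1,1} \mapsto (x_{1,1}, x_{1,1}^2)$ is injective, and $\mathfrak{a}$ is the scalars, of dimension $k = 1$; but the rescaling direction $[\,x_{1,1} \;\; x_{1,1}^2\,]$ is not proportional to $dM'(\partial_{x_{1,1}}) = [\,1 \;\; 2x_{1,1}\,]$ at generic $x_{1,1}$, so $\ker d\Psi = 0$ and $\dim L(M') = 1 \neq d - k = 0$. (This example also puts pressure on the literal hypotheses of the theorem, which is precisely why one cannot wave at ``invertibility'' to close the gap; whatever implicit regularity rescues the statement --- the reparameterization of Lemma \ref{lem:simplifyR(W)} is monomial in the relevant block, so row rescalings \emph{are} realized there --- your proof must isolate and invoke it.) Note that both of your implications depend on this realization claim: the baseline bound $\dim L(M') \leq d-k$ needs the diagonal of $\mathfrak{a}$ realized inside $\mathrm{Image}(dM')$, and the argument that failure of (i) forces $\dim L(M') < d-k$ needs the extra off-diagonal element realized.

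The second gap is the one you flag yourself, and it is not a side issue. Condition (i) concerns the swept sets $\mathrm{span}(M'_I)$, i.e.\ unions over all parameter values, while $\mathfrak{a}$ is a pointwise condition at one generic parameter, and neither implication between ``(i) fails'' and ``$\mathfrak{a}$ has an off-diagonal element'' is formal. In the paper's own application (Corollary \ref{res:reparamdim} with $r \in \cP$), one has $\mathrm{span}(M'_{\{p,q\}}) = \mathrm{span}(M'_{\{p,q,r\}})$, yet at every fixed generic parameter the row $r$ is linearly \emph{independent} of rows $p$ and $q$: the off-diagonal element of $\mathfrak{a}$ that does exist there is $A_{r,q} = t$, $A_{r,p} = -zt$, with coefficients depending on the parameter $z$, and it arises because a parameter-dependent combination of rows $p$ and $q$ lands in the support of row $r$, not because row $r$ is ``absorbable'' into the other rows. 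So ``add a small multiple of the dependent combination to row $j$'' is not a construction available from the hypothesis as stated. Since you explicitly defer both of these points (``where the real work of the proof lies''), the proposal is a plausible reformulation of the statement rather than a proof of it; the paper's induction avoids both issues by never passing to tangent vectors, instead comparing $\dim L(M'_{\{1,\dots,l\}})$ with $\dim L(M'_{\{1,\dots,l-1\}})$ directly through the span conditions.
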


\begin{rmk}
Similarly to how Condition 3 in Theorem \ref{res:minimalrep} should be thought of as analogous to the inequality from Hall's Matching Theorem, condition (i) above should be thought of as analogous to the condition from Rado's Theorem. Rado's Theorem is a generalization of Hall's Theorem, which says that if $S_1, \dots, S_k \subseteq \mathbb{R}^{n}$, then it is possible to select linearly independent vectors $s_1 \in S_1, \dots, s_k \in S_k$ if and only if for all $I \subseteq [k]$,
\begin{displaymath}
\mathrm{dim}\left(\mathrm{span}\left( \bigcup_{i \in I} S_i \right) \right) \geq |I|.
\end{displaymath}
\noindent
Condition (ii) from Theorem \ref{res:Rado} relaxes this condition from Rado's theorem, saying not only does any subset of vectors have the correct dimension, but adding a new vector always increases the dimension.
\end{rmk}

\begin{proof}
We show that (ii) implies (i) via induction on $k$, the number of rows of $M_\cV(\bf{x})$. When $k = 1$, $L(M')$ is a parameterized subset of projective space. Since the transformation from the $\mathbf{x}$ to the $\mathbf{y}$ variables are invertible on $M' = M_{\cP}(\mathbf{y})|_{x_{i,j} = 0; x_{i,j} \in S}$, the matrices $M'$ and $M_{\cP}(\mathbf{x})|_{x_{i,j} = 0; x_{i,j} \in S}$ have the same dimension in $\R^n$. One obtains the corresponding locus in projective space by scaling by a constant, so the dimensions in projective space remain the same. 

For larger $k$, it is not sufficient to use the invertibility of the change of variables as an argument that the dimension does not change. While this holds on the level of matrices, it may not hold after quotienting by $GL(k)$ in order to find the locus $L(M')$.

In this proof, we consider the Grassmannian set theoretically. So $\Grall(k, \mathbb{R}^n)$ is the set of $k$-planes in $\mathbb{R}^n$. Given a vector space $V \subsetneq \R^n$,  $(\mathbb{R}^n \perp V)$ is a subset of $\mathbb{R}^n$, and we write 
\begin{displaymath}
\Grall(k, \mathbb{R}^n \perp V) = \{x \in  \Grall(\mathbb{R}^n,k)| x \in \R^n \perp V\}.
\end{displaymath}
\noindent

Suppose that the result holds for $k = l-1$. Let $M'$ be an $l \times n$ matrix such that $\mathrm{span}(M'_{\{1,\dots,l\}}) = \mathrm{span}(M'_{1,\dots,l-1})$, but that $\mathrm{span}(M'_I) \neq \mathrm{span}(M'_{I \cup j})$ for $j \notin I$ whenever $|I| < l$. That is, condition (i) fails to hold only for the sets $|I| = l-1$. 

A plane in $L(M')$ is spanned by a nonzero vector $r \in  \mathrm{span}(M'_{l})$, plus a $l-1$ plane in
\begin{displaymath}
L(M'_{1,\dots,l-1}) \cap \Grall(k-1, \mathbb{R}^n \perp r),
\end{displaymath}
i.e.  an $(l-1)$-plane in $L(M')$ which is orthogonal to $r$.
Since condition (i) does not hold for $I = \{1,\dots,l-1\}$, $r \in \mathrm{span}(M'_{1, \dots, l-1})$. As the map \eqref{eq:maps} ignores the points parameterized by $M'_I$ that do not have full rank, we do not need to consider the case when $\mathrm{span}(M'_{1,\dots,l-1}) = \mathrm{span}(r)$ Observe that  the set of planes in $L(M'_{1,\dots,l-1})$ containing $r$ has positive dimension.
Thus,
\begin{displaymath}
\dim(L(M'_{1,\dots,l-1}) \cap \Grall(k-1, \mathbb{R}^n \perp r)) < \dim(L(M'_{1,\dots,l-1})). 
\end{displaymath}
\noindent
Note that the subspace of $\Grall(1,\mathbb{R}^n) \times \Grall(k-1, \mathbb{R}^n)$ obtained by evaluating $M'$ and independently taking the span of the $l$th row and of rows $1, \dots, l-1$ (ignoring matrices of improper rank) is $d-k$ by the inductive hypothesis. By the remarks above, $L(M')$ has strictly lower dimension than this set. So,
\begin{displaymath}
\dim(L(M')) \leq  d -k - 1,
\end{displaymath}
\noindent
and thus (ii) implies (i) when $k = l$.

Next, we show that (i) implies (ii). When $k = 1$, $L(M')$ is again a parameterized subset of projective space, and the argument holds similarly to the case for $k=1$ in the other direction of this proof. 

Suppose the result holds for $k = l-1$ and let $M'$ be an $l \times n$ matrix such that $\mathrm{span}(M'_{1,\dots,l}) \neq \mathrm{span}(M'_{1,\dots,l-1})$. Let $d_l$ be the number of parameters appearing in row $l$ of $M'$ which do not appear in rows $1, \dots, l-1$, i.e. there are $d - d_l$ variables in the first $l-1$ rows of $M'$. Note that $L(M'_l)$ is $d_l -1$ dimensional by similar arguments as the base case.

Then, for a generic point $r \in \mathrm{span}(M'_{l})$ that comes from evaluating the last row of $M'$, note that $r \notin \mathrm{span}(M'_{1,\dots,l-1})$. So, $L(M'_{1,\dots,l-1})\cap \Grall(k-1, \mathbb{R}^n \perp r) = L(M'_{1,\dots,l-1})$. In other words
\begin{displaymath}
\dim(L(M'_{1,\dots,l-1}) \cap \Grall(k-1, \mathbb{R}^n \perp r)) = \dim(L(M'_{1,\dots,l-1})) = d - d_l -(l-1)\;,
\end{displaymath}
\noindent
where, by induction $\dim(L(M'_{1,\dots,l-1})) = d - d_l -(l-1)$. Thus,
\begin{displaymath}
\dim(L(M'_{1,\dots,l})) = (d_l - 1) + \dim(L(M'_{1,\dots,l-1})) = d-l.
\qedhere
\end{displaymath}
\end{proof}

Given Theorem \ref{res:Rado}, we see that the boundaries defined in Lemma \ref{res:Vdiagcancel} and Lemma \ref{res:narrowVcancel} do in fact give rise to codimension one boundaries of $\Sigma(\VP)$.

\begin{cor} \label{res:reparamdim}
Let $(\cP, [n])$ be a diagram with two propagators $p = (i, j)$ and $q = (i, k)$ adjacent on the edge $i$. After the reparameterization defined in Lemma \ref{lem:simplifyR(W)}, setting $w$ to zero reduces the dimension by one if and only if there is not a propagator $r = (j, k)$ in $\cP$.
\end{cor}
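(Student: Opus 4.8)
The plan is to recast the statement as a dimension count and feed it into Theorem \ref{res:Rado}. I would first apply the reparameterization of Lemma \ref{lem:simplifyR(W)} to the two-by-two minor of rows $p$ and $q$ in columns $i,i+1$, so that the entries of these columns become $x,y$ (row $p$) and $xz,\,zy+w$ (row $q$). Setting $w=0$ produces a matrix $M'$ whose free parameters are $x,y,z,x_{p,j},x_{p,j+1},x_{q,k},x_{q,k+1}$ together with the $4(k-2)$ entries of the remaining rows, i.e. $d=4k-1$, exactly one fewer than the $4k$ entries of $M_{\VP}$. Since $\dim\Sigma(\VP)=3k$, ``reduces the dimension by one'' means precisely $\dim L(M')=3k-1=d-k$. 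By Theorem \ref{res:Rado} this equality holds if and only if condition (i) of that theorem holds for $M'$ (and when it fails the proof of that theorem gives $\dim L(M')<d-k$, i.e. codimension at least two). So the corollary reduces to the combinatorial claim: condition (i) holds for $M'$ if and only if $r=(j,k)\notin\cP$.

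To compare spans I would first clear columns $i,i+1$ from row $q$ using the $GL(k)$ operation $q':=(\mathrm{row}\ q)-z\,(\mathrm{row}\ p)$, which preserves $L(M')$ and every span $\mathrm{span}(M'_I)$. After $w=0$ the new row $q'$ is supported on $\{j,j+1,k,k+1\}=V_{(j,k)}$, with its $j,j+1$ entries equal to $-z$ times those of row $p$ and its $k,k+1$ entries the free variables $x_{q,k},x_{q,k+1}$; every row other than $p$ and $q'$ keeps algebraically independent entries, and the only shared-variable coupling is the proportionality of $q'$ and $p$ on $\{j,j+1\}$.

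For the easy direction, suppose $r=(j,k)\in\cP$. Then row $r$ has support exactly $\{j,j+1,k,k+1\}=V_{q'}$ with four algebraically independent entries, so $\mathrm{span}(M'_{\{r\}})$ is the full coordinate subspace $\mathbb{R}^{\{j,j+1,k,k+1\}}$. Since $V_{q'}\subseteq\{j,j+1,k,k+1\}$, the row $q'$ already lies in this subspace, whence $\mathrm{span}(M'_{\{r\}})=\mathrm{span}(M'_{\{q',r\}})$. Thus adding $q'$ to $I=\{r\}$ does not enlarge the span, condition (i) fails, and the locus has codimension at least two.

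The main obstacle is the converse: if $r\notin\cP$ then condition (i) holds. The plan is to assume condition (i) fails for some pair $(I,l)$ and derive $r\in\cP$. A first reduction shows a failure forces the support containment $V_l\subseteq\bigcup_{i\in I}V_i$, since a column of $V_l$ outside $\bigcup_{i\in I}V_i$ on which row $l$ is nonzero immediately enlarges the span. I would then use that $M_{\VP}$ is a minimal representation of an admissible diagram, so the strict Rado condition (i) holds before the operation, and track the single change it introduces: the support of $q'$ has moved to $\{j,j+1,k,k+1\}$ while acquiring the proportionality with $p$ on $\{j,j+1\}$. The key point to establish is that adding $q'$ to a set $I$ fails to increase the span only if the two free variables $x_{q,k},x_{q,k+1}$ that $q'$ contributes on $\{k,k+1\}$ are already absorbed; combined with the proportionality on $\{j,j+1\}$ and the non-crossing and density conditions, this pins down a row whose support is exactly $\{j,j+1,k,k+1\}$, and such a row can only be the propagator $(j,k)$, giving $r\in\cP$. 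Carrying out this analysis over the possible $I$ (distinguishing whether $p$, $q'$, and the rows covering $\{k,k+1\}$ lie in $I$), and showing each potential failure either violates admissibility or produces the propagator $(j,k)$, is the technical heart of the argument; the degenerate case $k=j+1$ is automatic, since local density forbids the propagator $(j,j+1)$, so there $r\notin\cP$ always and the dimension drops by one.
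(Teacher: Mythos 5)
Your reduction to Theorem \ref{res:Rado} is the same as the paper's: you correctly compute that setting $w=0$ removes one parameter, so ``codimension one'' is exactly condition (ii) with $d=4k-1$, and the corollary becomes the claim that condition (i) holds iff $r=(j,k)\notin\cP$. Your easy direction is also essentially the paper's: after $w=0$ the combination $(\mathrm{row}\ q)-z(\mathrm{row}\ p)$ is supported exactly on $V_r=\{j,j+1,k,k+1\}$, so when $r\in\cP$ the span does not grow (the paper phrases the witness as $\mathrm{span}(M'_{\{p,q\}})=\mathrm{span}(M'_{\{p,q,r\}})$ on the original matrix, which avoids your slightly delicate claim that the row operation ``preserves every span $\mathrm{span}(M'_I)$'' --- it does not for subsets containing exactly one of $p,q$, though the conclusion about $\dim L(M')$ is unaffected).

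The genuine gap is the converse. You explicitly defer ``the technical heart of the argument,'' and the intermediate claim you aim for --- that a failure of condition (i) ``pins down a row whose support is exactly $\{j,j+1,k,k+1\}$'' --- is not the right target: a failure could a priori be witnessed by a set $P$ of several rows whose supports jointly cover $\{j,j+1,k,k+1\}$ without any single row having that support, and nothing in your sketch rules this out. The paper closes this differently. It first observes that $(\cP\cup r,[n])$ is admissible whether or not $r\in\cP$ (adding $r$ violates neither density nor non-crossing, since $p$ and $q$ are adjacent on edge $i$). Then, assuming $\mathrm{span}(M'_{P\cup p})=\mathrm{span}(M'_P)$ for a minimal such $P$ (which must contain $q$ by algebraic independence of the other rows), it deduces $\mathrm{span}(M'_{\{p,q,r\}})\subseteq\mathrm{span}(M'_P)$ and hence $\mathrm{span}(M'_r)\subseteq\mathrm{span}(M'_{P\setminus q})$, where $M'_{P\setminus q}=M_{\cV(P\setminus q)}$ is untouched by the reparameterization. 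Theorem \ref{res:minimalrep} applied to the admissible diagram $(P\setminus q)\cup r$ gives $\dim L(\cV((P\setminus q)\cup r))=\dim L(\cV(P\setminus q))+3$, contradicting that containment. This dimension-count step, powered by local density for \emph{arbitrary} propagator subsets rather than by locating a single row, is the idea missing from your proposal; without it the ``if'' direction is unproved.
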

\begin{proof}
For a Wilson loop diagram $(\cP, [n])$, let $\bf{x}$ be the parameterization that includes the variable $w$, and let $\bf{y}$ be the other set of variables. Note that the Jacobian for the change of variables given in Lemma \ref{lem:simplifyR(W)} has zero determinant when $x$ is sent to $0$. When $w$ is set to zero, the determinant of the Jacobian remains non-zero. Without loss of generality, assume that $w$ is a variable in the row corresponding to the propagator $p$.

Further note that whether or not $r$ is in $\cP$, the diagram $(\cP \cup r, [n])$ is admissible. If $r$ does not exist in $\cP$, then adding it does not violate density. Furthermore, because $p$ and $q$ are adjacent, adding $r$ does not violate the non-crossing condition either. 

First we show that when $r \in \cP$, when one sets $w = 0$ the dimension of the span of a row set does not increase with the rows included in the set. In particular, we may write $x_{q, i} = \lambda x_{p, i}$ and $x_{q, i+1} = \lambda x_{p, i+1}$, with $\lambda$ a real variable. By adding a scalar multiple of $M'_p$ to $M'_q$, one gets a row that has independent variable entries exactly in the columns $V_r =\{j, j+1, k, k+1\}$. If $r$ is in $\cP$, the span of the rows $p$ and $q$ contains the span of the row $r$ in $\R^n$:
\ba \mathrm{span}(M'_{\{p,q\}}) =  \mathrm{span}(M'_{\{p,q,r\}})\;.\label{eq:spanfact}\ea Thus, Theorem \ref{res:Rado} implies this boundary does not have codimension one.

Next, suppose $r \not \in \cP$. In this case, assume toward contradiction that there is a set of propagators $P \subset \cP$ such that $p \not \in P$ and  $\mathrm{span}(M'_{P \cup p}) = \mathrm{span}(M'_P)$. Without loss of generality, assume that $P$ one of the smallest sets of propagators such that this is true. Note that $P$ cannot exist if $q \not \in P$, since all the variables in $M'_p$ and $M'_q$ are algebraically independent from the variables in $M'_{P\setminus \{p,q\}}$. Therefore, without loss of generality, assume $q \in P$. 

Since $\mathrm{span}(M'_{p}) \neq \mathrm{span}(M'_{q})$, the equality $\mathrm{span}(M'_{P \cup p}) = \mathrm{span}(M'_P)$ implies that  $\mathrm{span}(M'_{\{p, q\}}) \subseteq \mathrm{span}(M'_P)$. But equation \eqref{eq:spanfact} means that  $\mathrm{span}(M'_{\{p,q,r\}}) \subseteq \mathrm{span}(M'_P)$. Removing the rows $p$ and $q$ from both sides gives
\ba \mathrm{span}(M'_{r})  \subseteq \mathrm{span}(M'_{P\setminus q}) \label{eq:spandfact2}\;.\ea

We claim that this cannot be true. Since $q \in P$ but $p$ is not, we have the equality $M'_{P\setminus q} = M_{\cV(P\setminus q)}$ because the propagator set $P \setminus q$ is unaffected by the limits defining the matrix $M'$. Note that the propagator set $(P\setminus q) \cup r$ is an admissible Wilson loop diagram. By Theorem \ref{res:minimalrep},
\bas \dim(L(\cV((P\setminus q) \cup r))) = \dim(L(\cV(P\setminus q))) + 3
,\eas
which contradicts \eqref{eq:spandfact2}. Therefore, condition (i) of Theorem \ref{res:Rado} must hold, and thus the limit $w \rightarrow 0$ defines a codimension one subspace of $L(M_{\VP})$.
\end{proof}

\end{appendices}

\bibliographystyle{unsrt}
\bibliography{Bibliography}

\end{document}